\definecolor{darkblue}{RGB}{8,81,156}
\newcommand{\norm}[1]{\left\Vert#1\right\Vert}
\theoremstyle{plain}
\newtheorem{thm}{Theorem}
\newtheorem{cor}[thm]{Corollary}
\newtheorem{lem}[thm]{Lemma}
\newtheorem{prop}[thm]{Proposition}
\theoremstyle{definition}
\newtheorem{defn}{Definition}%[section]
\newtheorem{example}{Example}%[section]
\theoremstyle{remark}
\newtheorem{rem}{Remark}%[section]
\begin{document}

\title{Strong Orientational Coordinates and Orientational Order Parameters For Symmetric Objects}

\author{Amir Haji-Akbari}
\affiliation{Department of Chemical and Biological Engineering, Princeton University, Princeton, NJ 08544}
\affiliation{Department of Chemical Engineering, University of Michigan, Ann Arbor, MI 48109}

%\author{Michael Engel}
%\affiliation{Department of Chemical Engineering, University of Michigan, Ann Arbor, MI 48109}

\author{Sharon C. Glotzer}
\email{sglotzer@umich.edu}
\affiliation{Department of Chemical Engineering, University of Michigan, Ann Arbor, MI 48109}
\affiliation{Department of Materials Science and Engineering, University of Michigan, Ann Arbor, MI 48109}

\date{\today}

\begin{abstract}
Recent advancements in the synthesis of anisotropic macromolecules and nanoparticles have spurred an immense interest in theoretical and computational studies of self-assembly. The cornerstone of such studies is the role of shape in self-assembly and in inducing complex order. The problem of identifying different types of order that can emerge in such systems can, however, be challenging. Here, we revisit the problem of quantifying orientational order in systems of building blocks with non-trivial rotational symmetries. We first propose a systematic way of constructing orientational coordinates for such symmetric building blocks. We call the arising tensorial coordinates strong orientational coordinates (SOCs) as they fully and exclusively specify the orientation of a symmetric object. We then use SOCs to describe and quantify local and global orientational order, and spatiotemporal orientational correlations in systems of symmetric building blocks. The SOCs and the orientational order parameters developed in this work are not only useful in performing and analyzing computer simulations of symmetric molecules or particles, but can also be utilized for the efficient storage of rotational information in long trajectories of evolving many-body systems. 
\end{abstract}

\maketitle

\section{Introduction\label{section:introduction}}

It is typically easier to visually detect 'order` in a particular structure or pattern, than to describe it mathematically. What is more difficult-- if not impossible-- is to rigorously define what constitutes order, as the task of distinguishing an ordered structure from a disordered structure can be subjective at times. However, certain types of order such as periodicity can be rigorously defined and characterized.  With recent advancements in the synthesis of anisotropic particles~\cite{Ahmadi1996, PengAdvMatSemiconductor2003, JinMirkinNature2003, MurphyJACS2004, ChampoinPolymerPNAS2007, YanJMaterChem2008, MirkinAngewChemIntEd2009, SauAdvancedMaterialRev2010, CarboneNanoToday2010, MurrayJACS2012, GeisslerYang2012, MirkinScience2012, MirkinJACS2013}, it is now possible to assemble more complex forms of ordered structures~\cite{GlotzerSolomonNatureMatrial2007, HajiAkbariEtAl2009, HajiAkbaricondmat2011, HajiAkbariDQC2011, PabloACSNanot2012, PabloScience2012, HajiAkbariPRE2013, MurrayNatChem2013, MilanACSNano2014, GregPNAS2014}. The problem of identifying, distinguishing and quantifying different types of order is therefore of immense practical interest to materials science.

The types of global order that can arise in many-body systems can be loosely classified into two distinct categories. \emph{Translational order} is realized by the positions of the building blocks in a system. Crystalline~\cite{vonLaue1912} and quasicrystalline~\cite{ShechtmanPRL1984} order in atomic systems are examples of stand-alone translational order. \emph{Rotational order} is, however, realized by the orientations of the constituent building blocks. In systems of anisotropic building blocks, there is usually a strong coupling between translational and rotational order as the position of each building block is dictated by the respective shapes and orientations of its neighbors. However, stand-alone rotational order is possible and can, for instance, arise in systems of building blocks with large aspect ratios~\cite{PabloScience2012}. Nematic liquid crystals are the most notable examples, observed in a variety of systems~\cite{Onsager1949, EppengaFrenkel1984,BatesFrenkel1998}. 

There are several well-established methods for characterizing and quantifying translational order. The most popular method is to measure or compute the diffraction image~\cite{vonLaue1912}, $\hat{\rho}(\textbf{q})$, which is related to the density profile, $\rho(\textbf{r})$, via a simple Fourier transform:
\begin{eqnarray}
\hat{\rho}(\textbf{q}) &=& \int_{\mathbb{R}^d} \rho(\textbf{r}) e^{-i\textbf{q}\cdot\textbf{r}}d^d\textbf{r}\notag
\end{eqnarray}
The relationship that exists between the symmetries of $\rho(\textbf{r})$ and $\hat{\rho}(\textbf{q})$ is used for determining the symmetry group of the crystal. Diffraction images are, however, not sufficient for quantifying the extent of translational order. Bond order parameters proposed by Steinhardt~\emph{et al}~\cite{SteinhardtPRB1983} are widely used for that purpose~\cite{FrenkelPRL1995, ChopraJCP2006, DelagoJCP2008, MolineroPCCP2011, HajiAkbariFilmMolinero2014, PalmerNature2014, MolineroJPCB2014, HajiAkbariPNAS2015}. The extent of translational order is quantified by comparing the relative local arrangements of neighboring molecules with that of the ideal crystal. The shape matching algorithm  proposed by Keys~\emph{et al}~\cite{KeysAnnRevCondMatPhys2011, KeysJCompPhys2011} is a generalization of Steinhardt's bond order parameters, and can be used to identify and quantify different types of global and local translational order.  

Characterizing and quantifying rotational order is, however, more difficult. Historically, orientational order parameters are obtained from distribution functions~\cite{Zannoni1979}. Since the orientation of an arbitrary object is fully determined by one polar angle in $\mathbb{R}^2$ and three Euler angles in $\mathbb{R}^3$, any form of global orientational order can be mathematically represented by a distribution function in terms of these angles. Let $f(\Lambda)$ be such a function with $\Lambda=\theta$ for $\mathbb{R}^2$ and $\Lambda=(\theta,\phi,\psi)$ for $\mathbb{R}^3$. Then $f(\Lambda)d\Lambda$ is the infinitesimal probability that an an arbitrary particle assumes an orientation specified by $\Lambda$. If $f(\cdot)$ is a smooth function of $\Lambda$, it can be expanded using a complete set of orthogonal functions in the $\Lambda$ space. For instance for $d=2$ we have:
\begin{eqnarray}
f(\theta) &=& \sum_nf_ne^{in\theta}\label{eq:expansion_fourier}
\end{eqnarray}
which is the familiar Fourier series expansion of $f(\theta)$. For $d=3$, we have~\cite{Zannoni1979}:
\begin{eqnarray}
f(\theta,\phi,\psi) &=& \sum_{l,m,n} f_{m,n}^{l}D_{m,n}^l(\phi,\theta,\psi)\label{eq:expansion_wigner}
\end{eqnarray}
where $D^l_{m,n}$'s are the Wigner matrices that form an orthogonal basis for smooth functions of $\Lambda$. If a structure is rotationally isotropic, the coefficients of Eq.~(\ref{eq:expansion_fourier}) and~(\ref{eq:expansion_wigner}) will all be zero for nonzero integers. Therefore, the presence of any nontrivial nonzero terms in Eq.~(\ref{eq:expansion_fourier}) and~(\ref{eq:expansion_wigner}) is a symptom of broken rotational symmetry. Such nonzero scalars can thus be considered as \emph{orientational order parameters (OOPs)} and collectively describe the orientational distribution of the system. In the presence of nontrivial orientational symmetries in the ordered structure, some additional terms in (\ref{eq:expansion_fourier}) and (\ref{eq:expansion_wigner}) might vanish for symmetry reasons.
 For instance, in a uniaxial nematic liquid crystal with a director along the $z$ axis, $f(\theta,\phi,\psi)$ will only depend on $\theta$, and thus all the terms corresponding to $D^l_{m,n}$'s that explicitly depend of $\phi$ and $\psi$ will vanish. That will correspond to terms with $m,n\neq 0$. The number of order parameters are similarly reduced for other point symmetries~\cite{Zannoni1979, RossoaMolCrystLiqCryst2010}. This approach has been successfully used to derive order parameters for unixial~\cite{Zannoni1979} and biaxial nematics~\cite{Rosso2007} and the cubatic phase~\cite{JohnEscobedo2008}. It has also been used for obtaining suitable expansions for energetic interactions between symmetric molecules and particles~\cite{StoneMolPhys1978, RossoaMolCrystLiqCryst2010}.
 
 The order parameters obtained from the Wigner expansion of a distribution function can be alternatively represented as symmetric traceless tensors of different ranks~\cite{Indenbom1976SovPhysCrystallogr, Fel1995PhysRevE}. Therefore each point symmetry group has a collection of traceless tensors as its order parameters.  The smallest-rank traceless tensor for that particular group is usually described as the order parameter for that symmetry group.  A list of such tensors for major three-dimensional rotation groups are given in Table.~1 of Ref.~\cite{Fel1995PhysRevE}.
 
This classical approach has, however, its own shortcomings. First of all, the mathematical framework for deriving such order parameters has been primarily developed for theoretical studies of liquid crystals-- such as the ones given in~\cite{Fel1995PhysRevE}-- and not for computational studies. Most importantly, those scalar order parameters only show the extent of orientational order and provide little evident information about its geometry. For instance, if the director of a nematic liquid crystal is not along the $z$ axis-- something that  almost always happens in a molecular simulation-- Eq.~(\ref{eq:expansion_wigner}) will be of little practical utility and applying the symmetry constraints to it will not be trivial. For instance, the terms with nonzero $m$ or $n$ will no longer vanish under such circumstances. Although certain procedures are in place for extracting geometric information for certain phases such as the uniaxial and biaxial nematics, no general framework exists for performing this task for an arbitrary type of orientational order.

Another problem lies in the inherent degeneracy of polar (and Euler) angles in describing the	 orientations of symmetric objects, i.e.,~objects with nontrivial rotation groups. We schematically depict this in an example in Fig.~\ref{fig:OOP:sq_deg} where the orientation of a square in two dimensions is unchanged after a $90^{\circ}$ rotation while the value of the polar angle describing its orientation changes from $45^{\circ}$ to $135^{\circ}$. In the conventional approach, these symmetry restrictions are imposed on $f(\Lambda)$, henceforth limiting the number of non-vanishing terms in Eqs.~(\ref{eq:expansion_fourier}) and (\ref{eq:expansion_wigner}). A more useful alternative, which is the focus of this work, is to replace $\Lambda$ with orientational coordinates that are invariant under such symmetry operations. This way, any distribution function in terms of such coordinates will automatically satisfy the symmetry of the underlying building block(s). 

The main purpose of this work is to propose a systematic procedure for constructing such non-degenerate orientational coordinates for objects with arbitrary rotational symmetries. The proposed coordinates are tensors with ranks depending on the rotational symmetry group of the building block. Such coordinates are bijective,~i.e.,~each coordinate is associated with one and only one distinct orientation. They are therefore \emph{strong} descriptors of order, and are henceforth called strong orientational coordinates (SOCs).  Orientational distribution functions are then expressed in terms of such SOCs and not the degenerate polar (or Euler) angles. Order parameters for orientationally ordered structures are accordingly derived as the ensemble averages of certain moments of SOCs. The extent and geometry of order is then determined from solving a generalized nonlinear optimization problem.

\begin{figure}
	\begin{center}
		\includegraphics[width=.5\textwidth]{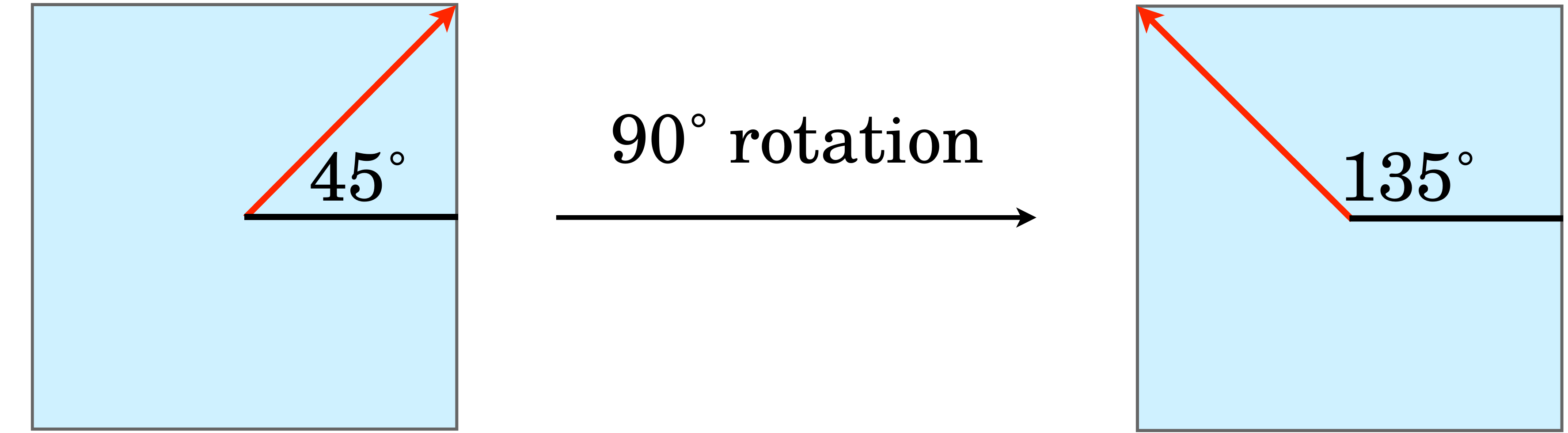}
	\end{center}
	\caption{\label{fig:OOP:sq_deg}Degeneracy of polar angles for describing the orientation of a square in two dimensions.}
\end{figure}

Besides characterizing and quantifying rotational order, SOCs can be used for measuring spatial and/or temporal orientational correlations in computer simulations, something that cannot be easily achieved with non-bijective coordinates (Section~\ref{subsection:spatialCorrelation}). Therefore, SOCs can complement the existing procedures~\cite{GregPNAS2014} used for defining spatial correlation functions for arbitrarily shaped objects, and thanks to their bijectivity, they can provide more accurate information about spatial orientational correlations in liquids, glasses and crystals of high-symmetry building blocks. Similarly, such coordinates can be the basis of defining interaction potentials between anisotropic symmetric objects. They can also be used for time-averaging the rotational behavior of a collection of particles and/or the entire system (Section~\ref{subsection:tAvgOr}). This can be used for efficient storage of rotational information for trajectories of evolving many-body systems. 

This paper is organized as follows. We introduced the utilized notations and conventions in Section~\ref{section:OOP:notations}. We define the notion of strong orientational coordinates in Section~\ref{subsection:homogeneous_tensors}. In Section~\ref{subsection:symmetric_objects}, we use the machinery of strong orientational coordinates and group theory to derive rotational coordinates for symmetric objects. We then derive the corresponding SOCs for all two- and three-dimensional rotation groups. In Section~\ref{subsection:OrderParameter}, we use distribution functions of SOCs to quantify different types of rotational order. We then use this framework to define orientational order parameters for several liquid crystalline phases alongside a few numerical examples. In Section~\ref{section:other_applications}, we  discuss further potential applications of SOCs other than the quantification of global orientational order. Finally, Section~\ref{section:conclusion} is reserved for concluding remarks.

We would like to close this introduction by noting that our approach is purely geometrical, i.e.,~we are not concerned about the physical realizability of the described structures. Instead, our aim is to develop computational tools for quantifying order in such structures if they ever emerge in simulations. Finally, we will use the terms 'building block', 'particle' and 'object' interchangeably, all referring to individual symmetric particles in an evolving many-body system.

%%%%%%%%%%%%%%%%%%%%%%%%%%%%%%%%%%%%%%%%
%%%%%%%%%%%%%%%%%%%%%%%%%%%%%%%%%%%%%%%%
%%%%%%%%%%N O T A T I O N S   A N D    D E F S %%%%%%%%%%%%
%%%%%%%%%%%%%%%%%%%%%%%%%%%%%%%%%%%%%%%%
%%%%%%%%%%%%%%%%%%%%%%%%%%%%%%%%%%%%%%%%

\section{Notations And Definitions\label{section:OOP:notations}}
The orientational coordinates derived in this paper are all contra-variant tensors with  ranks depending on the symmetry of the object. In this context, we denote all scalars, i.e.,~rank-0 tensors, by Greek letters, e.g.,~$\alpha,\beta$, etc, all vectors, i.e.,~rank-1 tensors, by small italics, like $v,w$, etc, and all tensors of rank two or higher by capitalized italics-- e.g.~$S,T$. Also, we will use the conventional Einstein notation for tensors whenever necessary.

We denote the $r$-adic power of a vector $v\in\mathbb{R}^d$ as $v^r$,~i.e.,~$(\textbf{v}^r)^{i_1i_2\cdots i_r}=v^{i_1}v^{i_2}\cdots v^{i_r}$. Similarly, the $r$-adic product of $r$ distinct vectors $v_1,v_2,\cdots,v_r$ is denoted by $v_1v_2\cdots v_r$. For every set of vectors $V=\{v_1,v_2,\cdots,v_n\}$, its rank-$r$ homogeneous tensor, $\mathcal{H}_r(V)$, is defined as:
\begin{eqnarray}
\mathcal{H}_{r}(V) &:=& \sum_{p=1}^nv_p^r\label{eq:homogenerous_tensor_form}
\end{eqnarray}
For two tensors $S$ and $T$ of equal rank, a generalized inner product $S\odot T$ is defined as the full contraction between them:
\begin{eqnarray}\label{eq:tetrahedral_OP}
S\odot T&=&\underline{S^{i_1,i_2,\cdots,i_r}}T^{i_1,i_2,\cdots,i_r}
\end{eqnarray}
Here $\underline{u}$ is the complex conjugate of $u$. Note that $\odot$ reduces to the Euclidean inner product for $r=1$, i.e.,~when $S$ and $T$ are both vectors.  Associated with this inner product, a generalized Frobenius norm of a rank-$r$ tensor is defined as:
\begin{eqnarray}\label{eq:F_norm}
\norm{S}_F=\left(S\odot S\right)^{1/2}
\end{eqnarray}
The Frobenius norm reduces to the familiar Euclidean norm for $r=1$ and the matrix Frobenius norm for $r=2$. 

%%%%%%%%%%%%%%%%%%%%%%%%%%%%%%%%%%%%%%%%
%%%%%%%%%%%%%%%%%%%%%%%%%%%%%%%%%%%%%%%%
%%%%%%%%%% T E N S O R   O O P S - G E N E R A L %%%%%%%%%%
%%%%%%%%%%%%%%%%%%%%%%%%%%%%%%%%%%%%%%%%
%%%%%%%%%%%%%%%%%%%%%%%%%%%%%%%%%%%%%%%%

\section{Tensor Order Parameters- General Construction\label{section:tensor_OP}}
We derive proper orientational order parameters for arrangements of symmetric particles as follows. In Section~\ref{subsection:homogeneous_tensors}, we develop the machinery of homogeneous tensors as tools of describing the orientation of an arbitrary set of vectors in $\mathbb{R}^d$ and define the notion of \emph{strong orientational coordinates} of a collection of vectors. We then map the orientation of a symmetric object to a set of equivalent vectors in Section~\ref{subsection:symmetric_objects} and use the SOCs derived in Section~\ref{subsection:homogeneous_tensors} as rotational coordinates of symmetric objects. Finally in Section~\ref{subsection:OrderParameter} we use distribution functions of such SOCs to identify and quantify order in orientationally ordered arrangements of symmetric objects.

%%%%%%%%%%%%%%%%%%%%%%%%%%%%%%%%%%%%%%%%
%%%%%%%%%%%%%%%%%%%%%%%%%%%%%%%%%%%%%%%%
%%%%%%%%%% S T R O N G  C O O R D I N A T E S   %%%%%%%%%%
%%%%%%%%%%%%%%%%%%%%%%%%%%%%%%%%%%%%%%%%
%%%%%%%%%%%%%%%%%%%%%%%%%%%%%%%%%%%%%%%%

\subsection{\label{subsection:homogeneous_tensors}Homogeneous Tensors}
Let $V=\{v_1,v_2,\cdots,v_p\}$ be a finite set of unit vectors in $\mathbb{R}^d$ and let $W=\{Qv: v\in V\}$ with $Q\in O(d)$ an orthogonal transformation. A function $\mathcal{F}(V)$ is called a \emph{strong orientational coordinate} of $V$ if $\mathcal{F}(QV)=\mathcal{F}(V)$ implies $QV=V$. Such a function must be invariant under the permutation of the elements of $V$, a property satisfied by homogeneous tensor forms defined in Eq.~(\ref{eq:homogenerous_tensor_form}). With the following chain of theorems, we establish that for any $V$, there exists an even number $2q\le p$ and an odd number $2r-1\le p$ so that $V=QV$ if and only if $\mathcal{H}_{2q}(V)=\mathcal{H}_{2q}(QV)$ and $\mathcal{H}_{2r-1}(V)=\mathcal{H}_{2r-1}(QV)$.
\begin{lem}\label{lemma:sum_of_powers_roots}
Let $a_1,a_2,\cdots,a_n,b_1,b_2,\cdots,b_n\in\mathbb{R}$, Then $\{a_1,a_2,\cdots,a_n\}=\{b_1,b_2,\cdots,b_n\}$ (up to multiplicities) if and only if $\sum_{i=1}^na_i^k=\sum_{i=1}^nb_i^k$ for $1\le k\le n$.
\end{lem}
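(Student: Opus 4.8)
The plan is to prove the equivalence via the bridge between power sums and elementary symmetric polynomials. The forward implication is immediate: if $\{a_1,\dots,a_n\}=\{b_1,\dots,b_n\}$ as multisets, then the two lists are a permutation of one another, and since each power sum $\sum_i a_i^k$ is symmetric in its arguments, it is unchanged under permutation, so $\sum_i a_i^k=\sum_i b_i^k$ holds for every $k$ (indeed for all $k\ge 1$, not merely $1\le k\le n$). All the substance is in the reverse implication, which I would handle by showing that the hypotheses pin down a monic degree-$n$ polynomial whose root multiset is forced to coincide for the $a$'s and the $b$'s.

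First I would set up notation: write $p_k=\sum_{i=1}^n a_i^k$ and $p_k'=\sum_{i=1}^n b_i^k$ for the power sums, and $e_k$, $e_k'$ for the elementary symmetric polynomials in the $a$'s and $b$'s respectively, with $e_0=e_0'=1$. The key tool is Newton's identities, which in characteristic zero read, for $1\le k\le n$,
\begin{eqnarray}
k\,e_k &=& \sum_{i=1}^{k}(-1)^{i-1}e_{k-i}\,p_i,\notag
\end{eqnarray}
and likewise with $e$, $p$ replaced by $e'$, $p'$. The point is that this is a triangular recursion: $e_k$ is expressed in terms of $p_1,\dots,p_k$ and the lower $e_0,\dots,e_{k-1}$, and since we work over $\mathbb{R}$ (characteristic zero) we may divide by the integer $k$. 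I would then argue by induction on $k$ that $p_i=p_i'$ for all $1\le i\le n$ forces $e_k=e_k'$ for all $0\le k\le n$: the base case $e_0=e_0'=1$ is trivial, and the inductive step follows by substituting the equalities $p_i=p_i'$ and $e_{k-i}=e_{k-i}'$ into the recursion above.

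Having matched all elementary symmetric polynomials, I would conclude by forming the two monic polynomials $P(x)=\prod_{i=1}^n(x-a_i)=\sum_{k=0}^n(-1)^k e_k\,x^{n-k}$ and $P'(x)=\prod_{i=1}^n(x-b_i)=\sum_{k=0}^n(-1)^k e_k'\,x^{n-k}$. Since $e_k=e_k'$ for every $k$, we have $P=P'$ as polynomials. Two monic polynomials that are equal have the same multiset of roots, so $\{a_1,\dots,a_n\}=\{b_1,\dots,b_n\}$ up to multiplicities, as desired.

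The only real obstacle is justifying the use of Newton's identities up to index $n$ and the division by $k$; this is exactly why the hypothesis requires the power sums to agree for $1\le k\le n$ (any fewer would leave $e_n$ undetermined) and why working over $\mathbb{R}$ rather than a field of positive characteristic is essential. I expect the induction itself to be routine once the triangular structure of the recursion is made explicit, so I would state Newton's identities, verify they determine the $e_k$ from the $p_k$, and let the polynomial-factorization conclusion finish the argument.
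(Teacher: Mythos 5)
Your proof is correct and follows essentially the same route as the paper's: both use Newton's identities to show, by induction on $k$, that equality of the power sums $\sum_i a_i^k=\sum_i b_i^k$ for $1\le k\le n$ forces the elementary symmetric polynomials (equivalently, the coefficients of $\prod_i(z-a_i)$ and $\prod_i(z-b_i)$) to coincide, whence the two monic polynomials are identical and have the same multiset of roots. The only differences are cosmetic — you state the identities in the solved form $k\,e_k=\sum_{i=1}^{k}(-1)^{i-1}e_{k-i}p_i$ and make the division-by-$k$ (characteristic-zero) point explicit, which the paper leaves implicit.
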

\begin{proof}
The  proof is given in Appendix~\ref{proof:lemma:sum_of_powers_roots}.
\end{proof}

\begin{thm}\label{thm:p_tuple}
Let $V=\{v_1,v_2,\cdots,v_p\}$ be an arrangement of $p$ unit vectors in $\mathbb{R}^d$ and let $W=\{Qv: v\in V\}$ where $Q\in O(d)$ is an orthogonal transformation. Then $V=W$ if and only if $\mathcal {H}_k(V)=\mathcal{H}_k(W)$ for every $k\le p$.
\end{thm}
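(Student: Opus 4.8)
The plan is to prove the two directions separately and to reduce the nontrivial direction to the scalar power-sum criterion of Lemma~\ref{lemma:sum_of_powers_roots} by contracting the tensor identities against a test vector.

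The forward implication is immediate. If $V=W$ as (multi)sets, then for each $k$ the tensors $\mathcal{H}_k(V)$ and $\mathcal{H}_k(W)$ are sums of the same $k$-adic powers $v^k$ over the same collection of vectors, so they agree. This uses only that $\mathcal{H}_k$, being a plain sum, is symmetric under permutation of its arguments; no hypothesis on $Q$ is needed.

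For the converse, first I would fix an arbitrary $u\in\mathbb{R}^d$ and contract each identity $\mathcal{H}_k(V)=\mathcal{H}_k(W)$ fully against the $k$-adic power $u^k$ using $\odot$. Since $v^k\odot u^k=(v\cdot u)^k$ (the conjugation in $\odot$ being inert on real vectors), this produces $\sum_{p}(v_p\cdot u)^k=\sum_{p}(w_p\cdot u)^k$ for every $1\le k\le p$. Writing $a_p=v_p\cdot u$ and $b_p=w_p\cdot u$, these are exactly the $p$ power-sum equalities demanded by Lemma~\ref{lemma:sum_of_powers_roots} with $n=p$, so the multiset of projections $\{v_p\cdot u\}$ equals $\{w_p\cdot u\}$, and this holds for every $u$.

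The hard part will be upgrading this equality of projections in every direction into equality of the point configurations themselves. Here I would select a single \emph{generic} direction $u_0$ that separates the finitely many distinct vectors occurring in $V\cup W$, namely one lying outside the finite union of hyperplanes $\{u:(x-x')\cdot u=0\}$ taken over distinct $x,x'\in V\cup W$; such a $u_0$ exists because this union omits almost every direction. For this $u_0$ distinct vectors have distinct projections, so the multiplicity of a value $x\cdot u_0$ in $\{v_p\cdot u_0\}$ equals the multiplicity of $x$ in $V$, and similarly for $W$. Equality of the two projection multisets in direction $u_0$ then forces equal multiplicities for every $x$, i.e.,~$V=W$. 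I expect this reconstruction of a configuration from its projections to be the only genuinely subtle step; a less elementary alternative is to note that matching projections in all directions makes the Fourier transforms of $\sum_p\delta_{v_p}$ and $\sum_p\delta_{w_p}$ coincide, whence the measures, and hence the multisets, are equal.
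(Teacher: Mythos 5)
Your proof is correct. It shares the paper's central mechanism---contracting the tensor equalities $\mathcal{H}_k(V)=\mathcal{H}_k(W)$ against rank-one $k$-adic powers so that Lemma~\ref{lemma:sum_of_powers_roots} can be applied to the resulting scalar power sums---but it diverges from the paper in which test tensors it uses and how it finishes. The paper sets $S=v_i^k$ for each $v_i\in V$, obtaining equality of the multisets $\{v_i^Tv_j\}_j$ and $\{v_i^Tw_j\}_j$; since $v_i^Tv_i=1$ and all vectors are unit vectors, the Cauchy--Schwarz equality case forces some $w_j$ to equal $v_i$, which directly places every element of $V$ inside $W$. You instead contract against $u^k$ for an arbitrary direction $u$, conclude that the projection multisets of $V$ and $W$ agree in every direction, and then reconstruct the configurations by choosing one generic $u_0$ off the finite union of hyperplanes $(x-x')\cdot u=0$ over distinct $x,x'\in V\cup W$, so that projection multiplicities pin down the vectors themselves. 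Your route costs an extra genericity/separation step that the paper avoids, but it buys genuine generality: it never uses that the vectors have unit norm (the paper's Cauchy--Schwarz step does), and it tracks multiplicities explicitly, so it establishes the statement for arbitrary finite configurations of possibly repeated, non-unit vectors; it also makes transparent that the orthogonality of $Q$ is irrelevant to both implications. Both proofs are sound; the paper's is shorter and more self-contained, while yours is more robust.
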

\begin{proof}
$\mathcal{H}_k(V)=\mathcal{H}_k(W)$ for $1\le k\le p$ implies that $[\mathcal{H}_k(V)-\mathcal{H}_k(W)]\odot S=0$ for any $S$. Set $S=v_i^k$ and obtain:
\begin{eqnarray}
v_i^k\odot[\mathcal{H}_k(V)-\mathcal{H}_k(W)] &=& \sum_{j=1}^p\left[(v_i^Tv_j)^k-(v_i^Tw_j)^k\right]= \sum_{j=1}^p\left[\mu_{ij}^k-\xi_{ij}^k\right]\notag
\end{eqnarray}
where $\mu_{ij}=v_i^Tv_j$ and $\xi_{ij}=v_i^Tw_j$. Applying Lemma \ref{lemma:sum_of_powers_roots} yields $\{\xi_{i1},\cdots,\xi_{ip}\}=\{\mu_{i1}, \cdots,\mu_{ip}\}$.
However, for every $1\le i\le p$, there exists a $\xi_{ij}=1$ and this is only possible if there is a vector $w_j\in W$ so that $w_j=v_i$.
\end{proof}

This means that the $p-$tuple $(\mathcal{H}_1,\mathcal{H}_2,\cdots,\mathcal{H}_p)$ uniquely specifies the orientation of $V$. However, we are interested in decreasing the number of necessary coordinates, ideally to one. The following two theorems refine the scope of our search to two coordinates.
\begin{thm}\label{thm:jump_2}
Let $V$ and $W$ as defined in Theorem~\ref{thm:p_tuple} and let $\mathcal{H}_r(V)\neq \mathcal{H}_r(W)$ for some integer $r$. Then $\mathcal{H}_{r+2}(V)\neq \mathcal{H}_{r+2}(W)$.
\end{thm}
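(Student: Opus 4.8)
The plan is to prove the contrapositive: I will show that $\mathcal{H}_{r+2}(V)=\mathcal{H}_{r+2}(W)$ forces $\mathcal{H}_r(V)=\mathcal{H}_r(W)$, which is logically equivalent to the stated implication. The single idea driving the argument is that the rank-$r$ homogeneous tensor can be recovered from the rank-$(r+2)$ one by a partial trace, and this recovery succeeds precisely because the elements of $V$ (and hence of $W$) are unit vectors.

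Concretely, I would start from the coordinate expression $\mathcal{H}_{r+2}(V)^{i_1\cdots i_{r+2}}=\sum_{p}v_p^{i_1}\cdots v_p^{i_{r+2}}$ and contract the last two indices against the Euclidean metric $\delta_{i_{r+1}i_{r+2}}$. Because each summand is a pure outer product of a single vector with itself, the contracted pair yields $v_p^{i_{r+1}}v_p^{i_{r+1}}=\norm{v_p}^2=1$ for every $p$, so the contraction collapses to $\sum_{p}v_p^{i_1}\cdots v_p^{i_r}=\mathcal{H}_r(V)^{i_1\cdots i_r}$. Thus the partial trace is a fixed \emph{linear} operation $\operatorname{tr}_{12}$ (and, by full symmetry of each $v_p^{r+2}$, the choice of index pair is immaterial) sending $\mathcal{H}_{r+2}(V)\mapsto\mathcal{H}_r(V)$.

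Since $Q\in O(d)$ preserves norms, each $w_p=Qv_p$ is again a unit vector, so the identical trace applied to $\mathcal{H}_{r+2}(W)$ returns $\mathcal{H}_r(W)$. Linearity then finishes the argument: if $\mathcal{H}_{r+2}(V)=\mathcal{H}_{r+2}(W)$, applying $\operatorname{tr}_{12}$ to both sides gives $\mathcal{H}_r(V)=\mathcal{H}_r(W)$. Taking contrapositives yields exactly the claim that $\mathcal{H}_r(V)\neq\mathcal{H}_r(W)$ implies $\mathcal{H}_{r+2}(V)\neq\mathcal{H}_{r+2}(W)$.

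I do not anticipate a genuine obstacle here, as the content reduces to a one-line trace identity. The only points requiring a moment's care are verifying that the contraction indeed lowers the rank by exactly two and evaluates to $\mathcal{H}_r$ rather than to some other combination of lower tensors (this is immediate from the rank-one outer-product structure of each summand, which prevents cross terms), and noting explicitly that $W$ consists of unit vectors so that the same operation behaves identically on both sides. Note also that this trace map is distinct from the full contraction $\odot$ of Eq.~(\ref{eq:tetrahedral_OP}); it contracts only a single pair of indices while leaving the remaining $r$ free.
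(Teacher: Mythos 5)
Your proof is correct, and it is the adjoint formulation of the paper's argument rather than a verbatim match. The paper works directly: it builds the test tensor $S^{i_1\cdots i_{r+2}}=\delta^{i_1i_2}\left[\mathcal{H}_r(V)-\mathcal{H}_r(W)\right]^{i_3\cdots i_{r+2}}$, computes $S\odot\left[\mathcal{H}_{r+2}(V)-\mathcal{H}_{r+2}(W)\right]=\norm{\mathcal{H}_r(V)-\mathcal{H}_r(W)}_F^2>0$, and invokes Cauchy--Schwarz to conclude the rank-$(r+2)$ difference is nonzero. You instead apply the partial trace $\operatorname{tr}_{12}$ to the tensors themselves and take the contrapositive. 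These are dual moves: tensoring with $\delta$ and tracing against $\delta$ are adjoint with respect to $\odot$, and both proofs hinge on the identical identity --- contracting an index pair of $v_p^{r+2}$ with $\delta$ returns $v_p^{r}$ because the $v_p$ (and, as you correctly note, the $w_p=Qv_p$) are unit vectors. What your packaging buys is economy: the linearity of a fixed trace map replaces both the construction of $S$ and the appeal to Cauchy--Schwarz, which in the paper's proof is doing nothing more than converting a nonzero pairing into a nonzero tensor. What the paper's packaging buys is an explicit quantitative statement: since $\norm{S}_F=\sqrt{d}\,\norm{\mathcal{H}_r(V)-\mathcal{H}_r(W)}_F$, Cauchy--Schwarz yields $\norm{\mathcal{H}_{r+2}(V)-\mathcal{H}_{r+2}(W)}_F\ge\norm{\mathcal{H}_r(V)-\mathcal{H}_r(W)}_F/\sqrt{d}$, a lower bound that your argument also implies (the trace map has operator norm $\sqrt{d}$) but does not display. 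Either version is a complete and valid proof of the theorem.
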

\begin{proof}
Let $S^{i_1i_2\cdots i_{r+2}}:=\delta^{i_1i_2}\left[\mathcal{H}_r(V)-\mathcal{H}_r(W)\right]^{i_3i_4\cdots i_{r+2}}$ and observe that:
\begin{eqnarray}
S\odot [\mathcal{H}_{r+2}(V)-\mathcal{H}_{r+2}(W)] &=& \norm{\mathcal{H}_r(V)-\mathcal{H}_r(W)}_F^2>0
\label{eq:inproof_jump_2}
\end{eqnarray}
since $v_i^{r+2}\odot S=v_i^{i_1}\delta^{i_1i_2}v_i^{i_2}v_i^k\odot [\mathcal{H}_r(V)-\mathcal{H}_r(W)]=v_i^k\odot [\mathcal{H}_r(V)-\mathcal{H}_r(W)]$. Use~(\ref{eq:inproof_jump_2}) and the Cauchy Schwartz inequality to conclude that $\mathcal{H}_{r+2}(V)\neq\mathcal{H}_{r+2}(W)$ since $\norm{S}_F>0$.
\end{proof}
\begin{cor}\label{thm:even_odd_sufficiency}
Let $V$ and $W$ be as defined in Theorem (\ref{thm:p_tuple}) then there exists integers $q,r$ with $2q\le p$ and $2r-1\le p$ so that $V=W$ if and only if $\mathcal{H}_{2q}(V)=\mathcal{H}_{2q}(W)$ and $\mathcal{H}_{2r-1}(V)=\mathcal{H}_{2r-1}(W)$.
\end{cor}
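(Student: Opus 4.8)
The plan is to exhibit explicit values of $q$ and $r$, dispose of the trivial forward implication, and then obtain the reverse implication by combining the downward propagation encoded in Theorem~\ref{thm:jump_2} with the full criterion of Theorem~\ref{thm:p_tuple}.

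First I would fix $q=\lfloor p/2\rfloor$ and $r=\lceil p/2\rceil$, so that $2q$ is the largest even integer not exceeding $p$ and $2r-1$ is the largest odd integer not exceeding $p$; both obviously satisfy the required bounds $2q\le p$ and $2r-1\le p$. The forward direction is immediate, since each $\mathcal{H}_k$ is a function of the set alone: if $V=W$ then in particular $\mathcal{H}_{2q}(V)=\mathcal{H}_{2q}(W)$ and $\mathcal{H}_{2r-1}(V)=\mathcal{H}_{2r-1}(W)$.

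The reverse direction is where the two preceding results do the work. The key observation is that the contrapositive of Theorem~\ref{thm:jump_2} reads: if $\mathcal{H}_{k+2}(V)=\mathcal{H}_{k+2}(W)$ then $\mathcal{H}_{k}(V)=\mathcal{H}_{k}(W)$. In other words, equality of the homogeneous tensors at a given index propagates downward in steps of two to every index of the same parity. Applying this repeatedly starting from the even index $2q$ yields $\mathcal{H}_{2q-2}(V)=\mathcal{H}_{2q-2}(W)$, then $\mathcal{H}_{2q-4}(V)=\mathcal{H}_{2q-4}(W)$, and so on down to $\mathcal{H}_{2}(V)=\mathcal{H}_{2}(W)$, thereby covering every even index in $\{1,\dots,p\}$; applying it starting from the odd index $2r-1$ gives the analogous equalities at $2r-3,\dots,1$, covering every odd index.

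Together these two chains establish $\mathcal{H}_k(V)=\mathcal{H}_k(W)$ for all $k\le p$, and Theorem~\ref{thm:p_tuple} then yields $V=W$, completing the reverse implication. I expect the only point requiring attention—more a bookkeeping check than a genuine obstacle—to be the parity accounting: confirming that the largest even and largest odd indices below $p$ are chosen so that their downward chains jointly exhaust all of $\{1,2,\dots,p\}$, and that every intermediate index stays in range so that Theorem~\ref{thm:jump_2} legitimately applies at each step of each chain.
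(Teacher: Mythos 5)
Your proof is correct, but it combines the two ingredients differently than the paper does. You use the \emph{contrapositive} of Theorem~\ref{thm:jump_2} (equality at rank $k+2$ forces equality at rank $k$), choose the universal indices $2q=2\lfloor p/2\rfloor$ and $2r-1=2\lceil p/2\rceil-1$, propagate equality \emph{downward} to every index of each parity, and then invoke the converse direction of Theorem~\ref{thm:p_tuple} to conclude $V=W$. The paper runs the argument in the mirror direction: for each orthogonal $Q$ with $QV\neq V$, Theorem~\ref{thm:p_tuple} supplies a smallest discrepant index $s_Q\le p$, Theorem~\ref{thm:jump_2} propagates that discrepancy \emph{upward} in steps of two (preserving parity), and $q,r$ are chosen adaptively as $\tfrac12\max_{Q\in E}s_Q$ and $\tfrac12(1+\max_{Q\in O}s_Q)$ over the sets $E,O$ of transformations whose first discrepancy is even or odd, so that equality at just those two ranks rules out every symmetry-breaking $Q$. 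The trade-off: your choice of $(q,r)$ is explicit and depends only on the cardinality $p$, making the proof a clean finite induction with only the minor degenerate case $p=1$ (where the even chain is vacuous); the paper's adaptive choice requires checking that the maxima exist (the $s_Q$ are bounded by $p$ even though $E,O$ may be uncountable) and treating empty $E$ or $O$ separately, but it can return strictly smaller ranks when discrepancies first appear at low order, which is in keeping with the paper's subsequent program of driving the tensor rank of SOCs down (Corollary~\ref{cor:even_sets}, Theorem~\ref{thm:odd_sets}). Since the corollary only asserts existence, both arguments are complete proofs.
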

\begin{proof}
Let $Q\in O(d)$ so that $QV\neq V$. According to Theorem \ref{thm:p_tuple}, there exists $s\le p$ so that $\mathcal{H}_s(V)\neq \mathcal{H}_s(QV)$. Denote the smallest such integer with $s_Q$ and define $E=\{Q; s_Q\text{ is even}\}$ and $O=\{Q; s_Q\text{ is odd}\}$. Taking $q=\frac12\max_{Q\in E}s_Q$ and $r=\frac12(1+\max_{Q\in O}s_Q)$ completes the proof. In the case of either $E$ or $O$ being empty, the proof is completed by taking an arbitrary $q$ and $r$, respectively.
\end{proof}
As will be explained below, this is the smallest number of coordinates that can be obtained for an arbitrary set. However, further refinement is possible for the subclasses of sets defined below.
\begin{defn}
A set $V$ is \textbf{even} if for every $v\in V$, $-v\in V$ and \textbf{odd} if for every $v\in V,-v\not\in V$.
\end{defn}
Intuitively, one expects the SOC of an even set to be an even-ranked and the SOC of an odd set to be an odd-ranked homogeneous form. This is proven in the following theorems.
\begin{cor}\label{cor:even_sets}
Let $V$ and $W$ be as defined in Theorem~\ref{thm:p_tuple}. If $V$ and $W$ are even, then there exists an integer $2q\le p$ so that $V=QV$ if and only if $\mathcal{H}_{2q}(V)=\mathcal{H}_{2q}(QV)$.
\end{cor}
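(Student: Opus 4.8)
The plan is to show that the odd-ranked coordinate appearing in Corollary~\ref{thm:even_odd_sufficiency} is redundant for even sets, so that a single even-ranked homogeneous tensor already separates $V$ from all of its nontrivial images $QV$.

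The crucial observation is that \emph{every} odd-ranked homogeneous tensor of an even set vanishes identically. Writing out Eq.~(\ref{eq:homogenerous_tensor_form}) for an odd rank $2r-1$ and pairing each $v\in V$ with its partner $-v\in V$, the two contributions satisfy $((-v)^{2r-1})^{i_1\cdots i_{2r-1}}=(-1)^{2r-1}(v^{2r-1})^{i_1\cdots i_{2r-1}}=-(v^{2r-1})^{i_1\cdots i_{2r-1}}$, so each pair cancels and $\mathcal{H}_{2r-1}(V)=0$. Since $W=QV$ is again even (for every $Qv\in W$ one has $-Qv=Q(-v)\in W$), the same holds for $W$, and hence $\mathcal{H}_{2r-1}(V)=\mathcal{H}_{2r-1}(QV)=0$ for every odd rank and every $Q$.

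I would then retrace the construction in the proof of Corollary~\ref{thm:even_odd_sufficiency}. For any $Q$ with $QV\neq V$, let $s_Q$ be the smallest index at which $\mathcal{H}_{s_Q}(V)\neq\mathcal{H}_{s_Q}(QV)$; Theorem~\ref{thm:p_tuple} guarantees $s_Q\le p$. By the vanishing just established, the difference is zero at every odd index, so $s_Q$ must be even --- that is, the set $O$ in that proof is empty and only the even coordinate survives. I would therefore set $2q:=\max_{Q:\,QV\neq V}s_Q$, an even integer bounded by $p$ (and take $q$ arbitrary in the degenerate case where no $Q$ moves $V$, exactly as before).

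It remains to verify the biconditional for this $q$. The forward implication is immediate, since $V=QV$ forces $\mathcal{H}_k(V)=\mathcal{H}_k(QV)$ for all $k$. For the converse I would argue by contraposition: if $QV\neq V$ then $\mathcal{H}_{s_Q}(V)\neq\mathcal{H}_{s_Q}(QV)$ with $s_Q$ even and $s_Q\le 2q$; since $2q-s_Q$ is even, repeated application of Theorem~\ref{thm:jump_2} propagates the inequality in steps of two up to rank $2q$, giving $\mathcal{H}_{2q}(V)\neq\mathcal{H}_{2q}(QV)$. The only genuinely new ingredient is the parity/cancellation argument of the second paragraph; once the odd tensors are known to vanish, the result is essentially a restriction of Corollary~\ref{thm:even_odd_sufficiency}, and I do not anticipate any real obstacle beyond bookkeeping.
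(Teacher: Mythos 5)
Your proof is correct and takes essentially the same approach as the paper: the paper's own proof simply applies Corollary~\ref{thm:even_odd_sufficiency} and notes that the odd-ranked tensors $\mathcal{H}_{2r-1}$ vanish identically for even sets, which is exactly your key cancellation observation. The only difference is presentational --- you unpack the proof of Corollary~\ref{thm:even_odd_sufficiency} (the definition of $s_Q$, emptiness of the odd set $O$, and propagation via Theorem~\ref{thm:jump_2}) rather than invoking it as a black box.
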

\begin{proof}
Apply Corollary (\ref{thm:even_odd_sufficiency}) and note that $\mathcal{H}_{2r+1}(V)=0$ for every $r$.
\end{proof}

\begin{lem}
\label{lemma:odd_moments}
Let $x_1,\cdots,x_n,y_1,\cdots,y_n\in\mathbb{R}$ with the property that for every distinct $i,j\le n$, if $x_i\neq0$, then $x_i+x_j\neq0$. Then $\{x_1,\cdots,x_n\}=\{y_1,\cdots,y_n\}$ (up to multiplicities) if and only if $\sum_{i=1}^nx_i^{2k-1}=\sum_{i=1}^ny_i^{2k-1}$ for every $k\le n$.
\end{lem}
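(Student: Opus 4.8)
The forward (``only if'') direction is immediate, since equal multisets have equal power sums of every order; the content is the converse. The plan is to reduce the odd-power-sum data to an ordinary power-sum problem in the squared magnitudes, where Lemma~\ref{lemma:sum_of_powers_roots} together with a Vandermonde argument can be brought to bear, and to use the hypothesis precisely to eliminate the sign cancellations that otherwise make odd power sums lose information.

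First I would group both collections by absolute value. For a magnitude $a>0$, the no-opposite-pair hypothesis guarantees that all $x_i$ with $\abs{x_i}=a$ carry a single common sign $\epsilon_a\in\{-1,+1\}$, so their contribution to $\sum_i x_i^{2k-1}$ is $\epsilon_a\, m_a\, a^{2k-1}$ with $m_a\ge 1$ the multiplicity, while the zeros contribute nothing. Writing $\beta_a$ for the net signed multiplicity of magnitude $a$ in $\{x_i\}$ minus that in $\{y_i\}$, the equalities $\sum_i x_i^{2k-1}=\sum_i y_i^{2k-1}$ for $1\le k\le n$ become
\begin{eqnarray}
\sum_{a>0}\beta_a\, a^{2k-1}=0,\qquad 1\le k\le n.\notag
\end{eqnarray}
Substituting $s_a=a^2$ and factoring out the common $a$, this says exactly that the weights $w_a=\beta_a a$ attached to the distinct nodes $s_a$ annihilate the monomials $s^{0},s^{1},\dots,s^{n-1}$. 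If the number of magnitudes with $\beta_a\neq 0$ is at most $n$, the corresponding Vandermonde matrix in the distinct nodes $s_a$ is invertible, forcing every $w_a=0$, hence every $\beta_a=0$; restoring signs through $\epsilon_a$ and fixing the (sign-blind) zeros via the common cardinality $n$ then yields $\{x_i\}=\{y_i\}$.

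The one step that is not routine, and where the hypothesis does the real work, is controlling the number of distinct magnitudes relative to the $n$ available odd moments: a priori $\{x_i\}$ and $\{y_i\}$ could involve up to $2n$ magnitudes, more than the Vandermonde system can immediately handle. This is the same phenomenon that makes the statement fail without the hypothesis, where $\{a,-a\}$ and $\{0,0\}$ share all odd power sums. I expect to close this gap in one of two ways. The first is a Chebyshev-system argument: since $\{t,t^{3},\dots,t^{2n-1}\}$ is a Chebyshev system on $(0,\infty)$, a nonzero weighting $(\beta_a)$ annihilating it must change sign at least $n$ times, and I would show that the single-sign-per-magnitude structure forced by the hypothesis (so that, at each magnitude, neither collection can straddle both $\pm a$) is incompatible with $n$ alternations once the total multiplicity is bounded by $n$ on each side. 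The second, cleaner route is induction on $n$: the hypothesis pins the signed value of the largest magnitude present, which one shows must be shared by both collections, so removing one such common element preserves both the no-opposite-pair property and the matching of odd power sums up to order $2(n-1)-1$, completing the induction by Lemma~\ref{lemma:sum_of_powers_roots}-style reasoning on a smaller set. The crux in either approach is the same, and it is the main obstacle: proving that sign cancellation is the only obstruction, so that the hypothesis restores the injectivity of the odd power sums.
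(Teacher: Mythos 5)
Your reduction to a Vandermonde system in the squared magnitudes is sound as far as it goes, but the step you yourself flag as ``not routine'' is exactly where all the difficulty lives, and neither of your proposed ways of closing it works as sketched, so the proof has a genuine gap. The set of magnitudes $a$ with $\beta_a\neq0$ can have as many as $2n$ elements while only $n$ odd moments are available. For the Chebyshev route: it is true that a nonzero weighting annihilating the $n$-dimensional system $\{t,t^3,\dots,t^{2n-1}\}$ on $(0,\infty)$ must have at least $n$ sign changes, hence at least $n+1$ atoms; but this necessary condition is \emph{met} by precisely the configurations you need to exclude. For any $n+1$ distinct positive nodes $t_1<\dots<t_{n+1}$, the kernel of the $n\times(n+1)$ matrix $(t_j^{2k-1})$ is one-dimensional and spanned by a vector of strictly alternating signs (its entries are, up to sign, the maximal minors, which are positive generalized Vandermonde determinants). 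Assigning the positive-weight atoms to $\{x_i\}$ and the negative-weight ones to $\{y_i\}$, padding both with zeros, respects the single-sign-per-magnitude structure and the multiplicity budgets ($\lceil(n+1)/2\rceil\le n$ per side); the lemma asserts that such a kernel vector can never have all ratios $w_j/t_j$ equal to signed integers within those budgets, and sign-change counting is blind to that. So route one is not merely incomplete but insufficient in principle. Route two hides the same difficulty: with only the moments $k\le n$ in hand you cannot run the usual limiting argument (dividing by $A^{2k-1}$ and letting $k\to\infty$) that pins the largest magnitude, its sign, and its multiplicity as common to both collections; establishing that claim from finitely many odd moments is essentially equivalent to proving the lemma itself.

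The paper closes exactly this gap with a hard algebraic input rather than a soft counting one: by Proposition~4.1 of Ref.~\cite{Leclerc1996DiscMath}, the coefficients of $p(z)=\prod_{i=1}^n(z-x_i)$ are rational functions of staircase Schur functions, which depend only on the odd power sums $p_1,p_3,p_5,\dots$; the relevant denominators are (up to the treatment of zero entries) products of the form $\prod_{i<j}(x_i+x_j)$, and the hypothesis of Lemma~\ref{lemma:odd_moments} is precisely the condition that these denominators do not vanish. Equal odd power sums then force equal characteristic polynomials and hence equal root multisets, with Theorem~\ref{thm:jump_2} unnecessary at this stage. If you want a self-contained elementary proof along your lines, you need a quantitative substitute for that identity; as written, your draft replaces the essential algebraic fact with claims that are equivalent to the conclusion being proved.
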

\begin{proof}
As explained in Proposition~4.1 of Ref.~\cite{Leclerc1996DiscMath}, the coefficients of $p(z):=\prod_{i=1}^n(z-x_i)$ and $q(z):=\prod_{i=1}^n(z-y_i)$ can be written as rational functions of Schur staircase functions that only depend on sums of odd powers of $x_i$'s (and $y_i$'s).
\end{proof}

\begin{thm}\label{thm:odd_sets}
Let $V$ and $W$-- as defined in Theorem~\ref{thm:p_tuple}-- be odd sets, then their exists $q\le p$ so that $\mathcal{H}_{2q-1}(V)=\mathcal{H}_{2q-1}(W)$ if and only if $V=W$.
\end{thm}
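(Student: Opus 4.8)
The forward implication is immediate, since $\mathcal{H}_{2q-1}$ is a function of the set alone, so $W=V$ forces $\mathcal{H}_{2q-1}(V)=\mathcal{H}_{2q-1}(W)$. The plan for the converse is to take $q=p\le p$ and show that the single equality $\mathcal{H}_{2p-1}(V)=\mathcal{H}_{2p-1}(W)$ already forces $V=W$. First I would propagate this one equality down the odd ranks: by the contrapositive of Theorem~\ref{thm:jump_2}, $\mathcal{H}_{r+2}(V)=\mathcal{H}_{r+2}(W)$ implies $\mathcal{H}_r(V)=\mathcal{H}_r(W)$, so agreement at rank $2p-1$ yields $\mathcal{H}_{2k-1}(V)=\mathcal{H}_{2k-1}(W)$ for every $k=1,\dots,p$. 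It therefore suffices to prove that simultaneous agreement of all odd homogeneous tensors up to rank $2p-1$ implies $V=W$.

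The naive route mirrors Theorem~\ref{thm:p_tuple}: contract each equality with $S=v_i^{2k-1}$ to obtain $\sum_{j}\mu_{ij}^{2k-1}=\sum_j\xi_{ij}^{2k-1}$ for $k=1,\dots,p$, where $\mu_{ij}=v_i^Tv_j$ and $\xi_{ij}=v_i^Tw_j$, and then invoke Lemma~\ref{lemma:odd_moments} to conclude $\{\mu_{ij}\}_j=\{\xi_{ij}\}_j$, from which $\mu_{ii}=1$ forces some $\xi_{ij}=1$ and hence $w_j=v_i$. I expect this to be the main obstacle, because Lemma~\ref{lemma:odd_moments} is not for free: it requires the multiset $\{\mu_{ij}\}_j$ to contain no cancelling pair $\mu_{ij}=-\mu_{ij'}\neq0$, and oddness of $V$ does \emph{not} guarantee this, since $v_i^Tv_j=-v_i^Tv_{j'}\neq 0$ can hold for an admissible odd triple. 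Agreement along the special direction $v_i$ is thus too weak to feed the lemma directly.

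To repair this I would contract along a \emph{generic} direction $u\in\mathbb{R}^d$ instead of along $v_i$. Because $V$ is odd, $v_j+v_{j'}\neq0$ for all distinct $j,j'$, so each locus $\{u:u^T(v_j+v_{j'})=0\}$ and each $\{u:u^Tv_j=0\}$ is a proper hyperplane; their finite union has measure zero, and for $u$ outside it the multiset $\{u^Tv_j\}_j$ has no zero entry and no cancelling pair, which is precisely the hypothesis of Lemma~\ref{lemma:odd_moments}. Contracting the matched odd tensors with $u^{2k-1}$ gives $\sum_j(u^Tv_j)^{2k-1}=\sum_j(u^Tw_j)^{2k-1}$ for $k=1,\dots,p$, so the lemma yields the projected multiset identity $\{u^Tv_j\}_j=\{u^Tw_j\}_j$ for every generic $u$.

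It remains to pass from these projected identities back to the tensors. Equal multisets have equal power sums of every order, so $\sum_j(u^Tv_j)^m=\sum_j(u^Tw_j)^m$ for all $m$ and all generic $u$; as both sides are polynomials in $u$ agreeing off a measure-zero set, they agree for all $u$. Recognising $\sum_j(u^Tv_j)^m=u^{\otimes m}\odot\mathcal{H}_m(V)$, this says $u^{\otimes m}\odot[\mathcal{H}_m(V)-\mathcal{H}_m(W)]=0$ for all $u$, and since the symmetric powers $u^{\otimes m}$ span the symmetric tensors (polarization), $\mathcal{H}_m(V)=\mathcal{H}_m(W)$ for every $m\le p$. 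Theorem~\ref{thm:p_tuple} then gives $V=W$. The two steps I would watch most carefully are the genericity argument that licenses Lemma~\ref{lemma:odd_moments} and the polarization/density step that reconstructs the full tensors from their one-dimensional projections.
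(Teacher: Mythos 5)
Your proof is correct, and it diverges from the paper's proof at exactly the step you flagged as the main obstacle. The paper's own argument is precisely the ``naive route'' you describe: it contracts $\mathcal{H}_{2q-1}(V)-\mathcal{H}_{2q-1}(W)$ with $v_i^{2q-1}$, asserts that ``due to the oddness of $V$'' the projections $\mu_{ij}$ satisfy the hypothesis of Lemma~\ref{lemma:odd_moments}, concludes the multiset equality $\{\xi_{ij}\}=\{\mu_{ij}\}$, and finishes with Theorem~\ref{thm:jump_2} (the paper propagates disagreement upward where you propagate agreement downward --- the same contrapositive, so that part is identical). Your objection to the lemma step is legitimate: oddness of $V$ forbids $v_j=-v_{j'}$ but does not forbid cancelling projections. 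For instance $V=\left\{(1,0),\left(\tfrac12,\tfrac{\sqrt3}{2}\right),\left(-\tfrac12,\tfrac{\sqrt3}{2}\right)\right\}$ is odd, yet the projections onto $v_1$ are $\left(1,\tfrac12,-\tfrac12\right)$, which violate the hypothesis of Lemma~\ref{lemma:odd_moments}. So the paper's justification is too quick: it happens to hold for the specific orbits used later (the tetrahedral orbit and odd $n$-gon orbits, where the inner products take special values with no cancelling pairs), but not for arbitrary odd sets, which is what the theorem claims. Your repair --- contracting along a generic direction $u$, where oddness guarantees that each locus $u^Tv_j=0$ and $u^T(v_j+v_{j'})=0$ is a proper hyperplane, then recovering the tensor equalities $\mathcal{H}_m(V)=\mathcal{H}_m(W)$ from the projected multiset identities by polynomial density and polarization, and closing with Theorem~\ref{thm:p_tuple} --- is sound at every step and costs only two standard extra arguments. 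In short: the paper's version buys brevity and suffices for its applications; yours buys an argument that actually covers the stated generality of the theorem.
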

\begin{proof}
Let $\mathcal{H}_{2q-1}(V)=\mathcal{H}_{2q-1}(W)$ for every $q\le p$ and observe that $v_i^{2q-1}\odot [\mathcal{H}_{2q-1}(V)-\mathcal{H}_{2q-1}(W)]=\sum_{j=1}^p(\mu_{ij}^{2q-1}- \xi_{ij}^{2q-1})=0$ for every $i\le p$. Due to the oddness of $V$, however, $\mu_{ij}$'s and $\xi_{ij}$'s satisfy the conditions specified in Lemma~\ref{lemma:odd_moments}. Therefore, $\{\xi_{ij}\}=\{\mu_{ij}\}$ (up to multiplicities) and $V=W$. The proof is completed by applying Theorem~\ref{thm:jump_2}.
\end{proof}

\begin{rem}
Note that a set $V$ that is neither even nor odd, can be partitioned into two nonempty even and odd sets $V=V_e\cup V_o$. The SOC for $V$ will therefore be a pair of an even-ranked homogenous form (for $V_e$) and an odd-ranked homogeneous form (for $V_o$). An even-ranked form of $V$ cannot exclusively specify the orientation of $V$  as it will be invariant under an inversion that will map $V_o$ to $-V_o$. Similarly, an odd-ranked homogeneous form will also be insufficient as it will be invariant under the orthogonal transformations that alter $V_e$ while keeping $V_o$ unchanged. 
\end{rem}

So far, we have established upper bounds on the tensorial rank of strong orientational coordinates of even and odd sets. These upper bounds are generic in the sense that they only depend on the cardinality of the underlying set and not its structure. In general, tensor SOCs of smaller ranks might be possible for sets with certain symmetries. As will become evident in Section~\ref{subsection:symmetric_objects}, however, the bounds proposed here can still be tight for certain types of symmetries.

%%%%%%%%%%%%%%%%%%%%%%%%%%%%%%%%%%%%%%%%
%%%%%%%%%%%%%%%%%%%%%%%%%%%%%%%%%%%%%%%%
%%%%%%%%%% S Y M M E T R I C   O B J E C T S   % %%%%%%%%%%
%%%%%%%%%%%%%%%%%%%%%%%%%%%%%%%%%%%%%%%%
%%%%%%%%%%%%%%%%%%%%%%%%%%%%%%%%%%%%%%%%

\subsection{Symmetric Objects\label{subsection:symmetric_objects}}
The SOCs developed in Section~\ref{subsection:homogeneous_tensors} are permutation-invariant. They can thus be used for describing the orientations of a symmetric object if a bijection can be established between the orientation of the object and a set of equivalent vectors. We will achieve this by generating orbits of suitable vectors under the action of the corresponding rotation group. It is necessary to emphasize that the construct outlined in Section~\ref{subsection:homogeneous_tensors} can be applied to any point symmetry group, and our consideration of  rotation groups is due to the fact that the orientation of a physical object (chiral or achiral) can only change as a result of a rotation. In order to establish the connection, it is necessary to re-introduce some standard concepts in group theory.

A rigid body $\mathcal{R}\subset\mathbb{R}^d$ is symmetric if there is a non-identity $Q\in SO(d)$ that $Q\mathcal{R}=\mathcal{R}$. The \emph{rotation group} of a rigid body $\mathcal{R}$ is given by:
\begin{eqnarray}
\mathcal{G}_{\mathcal{R}} &:=& \{Q\in SO(d):Q\mathcal{R}=\mathcal{R}\}
\end{eqnarray}
For every vector $v\in\mathcal{R}$, its \emph{orbit} is given by:
\begin{eqnarray}
\mathcal{O}_{v} &:=& \{Qv: Q\in\mathcal{G}_{\mathcal{R}}\}
\end{eqnarray}
In a Lie group, an orbit can be an uncountably infinite set. 

\begin{thm}
\label{thm:equiv_class}
$\mathcal{G}_{\mathcal{R}}$ has the following properties:
\begin{enumerate}
	\item $\mathcal{G}_{\mathcal{R}}$ partitions $\mathcal{R}$ into equivalency classes i.e. $\mathcal{O}_v=\mathcal{O}_{Qv}$ for every $v\in\mathcal{R}$ and $Q\in\mathcal{G}_{\mathcal{R}}$.
	\item $\mathcal{O}_v$ is either even or odd.
\end{enumerate}
\end{thm}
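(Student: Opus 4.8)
**The plan is to prove each of the two properties in turn, both following directly from the group axioms applied to $\mathcal{G}_{\mathcal{R}} \subset SO(d)$.**

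For the first property, I would establish that the orbit relation is an equivalence relation and that the orbits partition $\mathcal{R}$. The key claim $\mathcal{O}_v = \mathcal{O}_{Qv}$ for $Q \in \mathcal{G}_{\mathcal{R}}$ follows from the group structure: I would show set inclusion in both directions. If $u \in \mathcal{O}_{Qv}$, then $u = Q'(Qv) = (Q'Q)v$ for some $Q' \in \mathcal{G}_{\mathcal{R}}$, and since $\mathcal{G}_{\mathcal{R}}$ is closed under composition, $Q'Q \in \mathcal{G}_{\mathcal{R}}$, so $u \in \mathcal{O}_v$. The reverse inclusion uses the existence of inverses: since $Q^{-1} \in \mathcal{G}_{\mathcal{R}}$, any $w = Q''v \in \mathcal{O}_v$ can be written as $w = (Q'' Q^{-1})(Qv)$ with $Q'' Q^{-1} \in \mathcal{G}_{\mathcal{R}}$, placing $w \in \mathcal{O}_{Qv}$. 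That orbits genuinely partition $\mathcal{R}$ is then standard: reflexivity follows from $I \in \mathcal{G}_{\mathcal{R}}$, symmetry from closure under inverses, and transitivity from closure under composition, so distinct orbits are disjoint and their union is all of $\mathcal{R}$.

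For the second property, I must show each orbit $\mathcal{O}_v$ is either even or odd in the sense of the preceding definition, i.e., either $-u \in \mathcal{O}_v$ for every $u \in \mathcal{O}_v$, or $-u \notin \mathcal{O}_v$ for every such $u$. The natural approach is a dichotomy argument: I would ask whether there exists a single $u \in \mathcal{O}_v$ with $-u \in \mathcal{O}_v$, and show that if this holds for one point it holds for all. Suppose $-u \in \mathcal{O}_v$ for some $u = Qv$; then $-u = Q'v$ for some $Q, Q' \in \mathcal{G}_{\mathcal{R}}$. For an arbitrary $u' = Q''v \in \mathcal{O}_v$, I would write $-u' = -Q''v$ and seek a group element carrying $v$ to $-u'$. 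The clean way is to note $-u' = Q''(-v)$ and, using $-v = Q''^{-1}(-u) = Q''^{-1} Q' v$, conclude $-u' = Q''(Q''^{-1} Q' v) \cdots$ — more directly, since $Q'Q^{-1}$ maps $u = Qv$ to $-u$ and lies in $\mathcal{G}_{\mathcal{R}}$, I would show this same element (or a conjugate) sends every orbit point to its negation, using that $SO(d)$ elements act linearly so negation commutes with the group action.

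The main obstacle is the second property: negation $-I$ generally does not belong to $SO(d)$ (it is a rotation only when $d$ is even), so the dichotomy cannot be settled by simply checking whether $-I \in \mathcal{G}_{\mathcal{R}}$. The subtlety is that $-u$ may re-enter the orbit through a nontrivial rotation rather than through a central inversion, and I must verify that this property is orbit-wide rather than pointwise. The careful step is to exploit linearity: if $Q_0 v = -v$ for some $Q_0 \in \mathcal{G}_{\mathcal{R}}$ (equivalently if $v$ lies in a $(-1)$-eigenspace configuration realized within the group), then for any $Q \in \mathcal{G}_{\mathcal{R}}$ we have $QQ_0 Q^{-1} (Qv) = Q Q_0 v = -Qv$, and $QQ_0Q^{-1} \in \mathcal{G}_{\mathcal{R}}$ by normality of the construction under conjugation — so $-Qv \in \mathcal{O}_v$ for all $Q$, giving evenness; otherwise no point's negation lies in the orbit, giving oddness. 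I would present this conjugation identity as the crux, since it is what upgrades a single incidence of $-u \in \mathcal{O}_v$ to the entire orbit.
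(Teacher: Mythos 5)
Your proposal is correct and takes essentially the same route as the paper: part (1) from the group axioms, and part (2) by the same dichotomy, using linearity of the action (so negation commutes with every element of $\mathcal{G}_{\mathcal{R}}$) together with group closure to upgrade a single incidence of $-u\in\mathcal{O}_v$ to the entire orbit. Your crux identity $QQ_0Q^{-1}(Qv)=-Qv$ is just a conjugated repackaging of the paper's chain $-w=-Qv=-QQ_+^{*}u=QQ_+^{*}(-u)=QQ_+^{*}Q_-v$, so the two arguments coincide in substance.
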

\begin{proof}
Here we outline the proof for (2) as (1) directly follows from group properties. If for every $u\in\mathcal{O}_v, -u\not\in\mathcal{O}_v$, then $\mathcal{O}_v$ is odd an (2) follows. Suppose there is some $u\in\mathcal{O}_v$ with $-u\in\mathcal{O}_v$ and let $w\in\mathcal{O}_v$, then there exist $Q_+,Q_-,Q\in\mathcal{G}_{\mathcal{R}}$ with $u=Q_+v,-u=Q_-v$ and $w=Qv$. We thus have:
\begin{eqnarray}
-w=-Qv=-QQ_+^*u=QQ_+^*(-u)=QQ_+^*Q_-v\notag
\end{eqnarray}
but $QQ_+^*Q_-\in\mathcal{G}_{\mathcal{R}}$ and hence $-w\in\mathcal{O}_v$ and (2) follows.
\end{proof}

All vectors in an orbit are equivalent in the sense that they can be mapped onto one another via rotations that leave the orientation of $\mathcal{R}$ unchanged. Each orbit can therefore be uniquely described by an SOC derived in Section~\ref{subsection:homogeneous_tensors} as it is either an even or an odd set. A single orbit is not necessarily sufficient for describing the orientation of $\mathcal{R}$. We might therefore need to map the orientation of a rigid body to a collection of distinct orbits $\mathcal{W}_{\mathcal{R}}=\cup_{i=1}^N\mathcal{O}_i$. We call such a collection a \emph{symmetric descriptor} of $\mathcal{R}$. An \emph{irreducible symmetric descriptor} of $\mathcal{R}$ is defined as a symmetric descriptor that none of its subsets of constituent orbits is sufficient for describing the orientation of $\mathcal{R}$.  Since each orbit in an irreducible symmetric descriptor, $\mathcal{W}$, is uniquely specified by $\mathcal{H}_i$, a strong coordinate derived in~\ref{subsection:homogeneous_tensors}, the orientation of $\mathcal{R}$ can be uniquely specified by the $N$-tuple $(\mathcal{H}_1,\mathcal{H}_2,\cdots,\mathcal{H}_N)$. 

There is no unique way of constructing an irreducible symmetric descriptor for a symmetric object as the size and the structure of a given orbit will depend on the vector that generates it. In general, one would prefer orbits with fewer vectors as their associated SOCs will be of smaller ranks, and will therefore be easier to compute and store on a computer. The cardinality of an orbit is at most equal to the order of $\mathcal{G}_{\mathcal{R}}$, the rotation group of $\mathcal{R}$. However, if the initial generating vector is chosen so that it is invariant under certain group operation,  a smaller orbit will be obtained. Such vectors are equivalent to Wyckoff positions in a space group~\cite{IntTabCrystWyckoff}. In general, the orientation of a rigid object in $\mathbb{R}^d$ can be uniquely specified by \emph{at most} $d-1$ linearly independent vectors. Note that fewer vectors might be needed if the rotation group of $\mathcal{R}$ is a Lie group. A symmetric descriptor should therefore have the following properties:
\begin{itemize}
\item It should have sufficient number of linearly independent vectors.
\item It should not be invariant under any rotation that changes the orientation of $\mathcal{R}$.
\end{itemize}
 Here we derive irreducible symmetric descriptors and the associated SOCs for all two- and three-dimensional rotation groups. The results are summarized in Table.~\ref{table:2D3DSOCs}.  

\bigskip

%%%%%%%%%%%%%%%%%%%%%%%%%%%%%%%%%%%
%%%%%%%%%%%%%%%%%%%%%%%%%%%%%%%%%%%
%%%%%%%%T R I V I A L  S Y M M E T R Y             %%%%%%%%
%%%%%%%%%%%%%%%%%%%%%%%%%%%%%%%%%%%
%%%%%%%%%%%%%%%%%%%%%%%%%%%%%%%%%%%

\subsubsection{Trivial Rotation Group}
For a non-symmetric object, $\mathcal{G}_{\mathcal{R}}=\{I\}$ and $\mathcal{O}_{v}=\{v\}$ for every $v\in\mathcal{R}$. An irreducible symmetric descriptor is therefore given by $\cup_{i=1}^{d-1}\mathcal{O}_{v_i}$ with $v_i$'s being linearly independent. A strong coordinate is therefore given by a $(d-1)$-tuple $(\mathcal{H}_1,\mathcal{H}_2,\cdots,\mathcal{H}_{d-1})=(v_1,v_2,\cdots,v_{d-1})$. For a non-symmetric object in two and three dimensions, this will correspond, as expected, to one and two linearly independent vectors, respectively.

%%%%%%%%%%%%%%%%%%%%%%%%%%%%%%%%%%%
%%%%%%%%%%%%%%%%%%%%%%%%%%%%%%%%%%%
%%%%%%%% C_n  A N D    D_n                              %%%%%%%%
%%%%%%%%%%%%%%%%%%%%%%%%%%%%%%%%%%%
%%%%%%%%%%%%%%%%%%%%%%%%%%%%%%%%%%%

\begin{figure}
	\begin{center}
		\includegraphics[width=.5\textwidth]{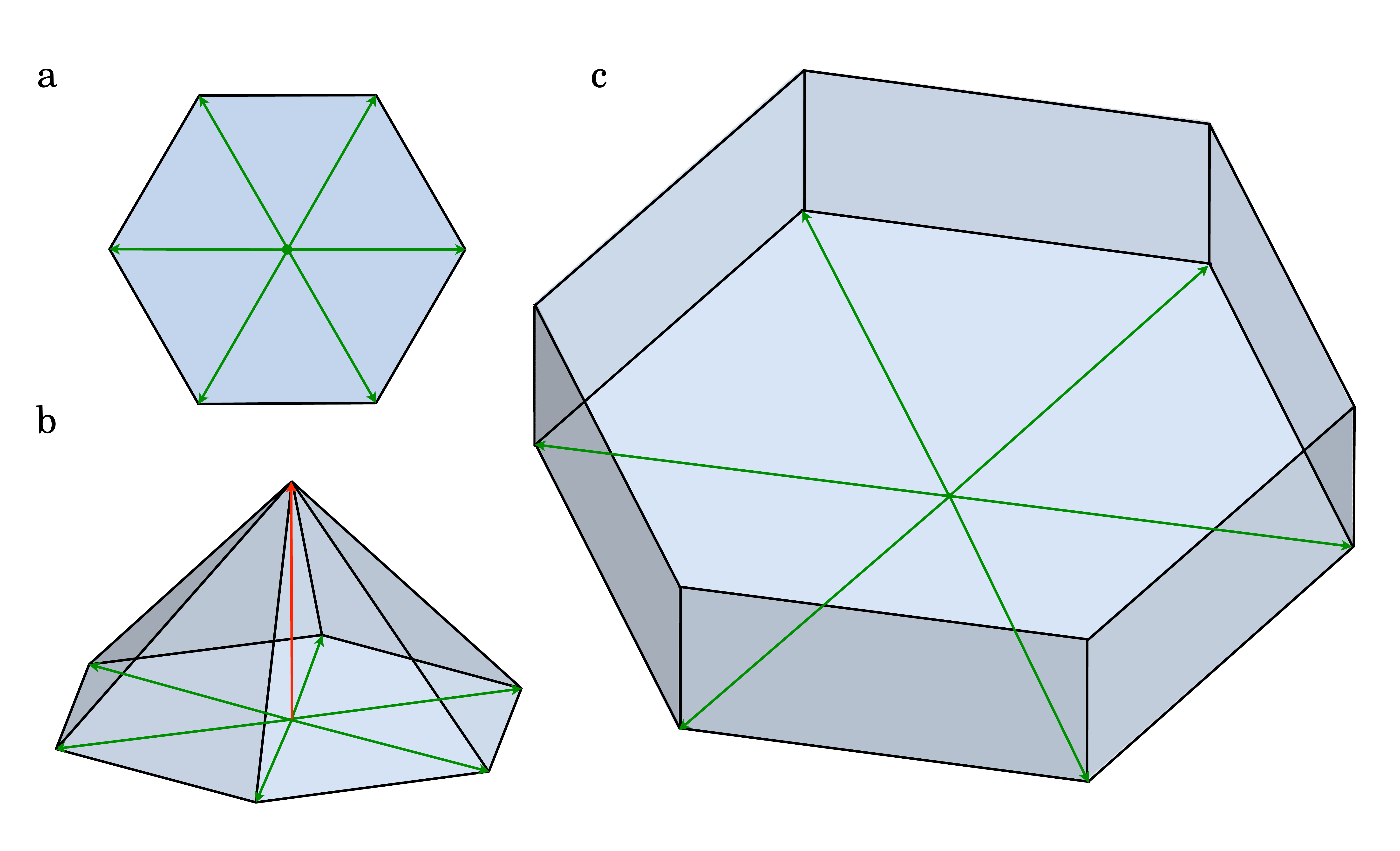}
		\caption{\label{fig:CnDn}$C_n$ and $D_n$ symmetries in $\mathbb{R}^2$ and $\mathbb{R}^3$. (a-b) $C_6$ symmetry in (a) $\mathbb{R}^2$ and (b) $\mathbb{R}^3$. (c) $D_6$ in $\mathbb{R}^3$. The vectors shown in green are irreducible symmetric descriptors in (a) and (c) but not in (b).}
	\end{center}
\end{figure}

\subsubsection{$C_n$ and $D_n$\label{section:SOC:CnDn}}
$C_n$ is the only non-trivial rotation group in $\mathbb{R}^2$ and corresponds to the symmetry of a regular $n$-gon (Fig.~\ref{fig:CnDn}a). In three dimensions, $C_n$ corresponds to the symmetry of a pyramid with a regular $n$-gonal base (Fig.~\ref{fig:CnDn}b). $D_n$, however, corresponds to the symmetry of a prism or a bipyramid with a regular $n$-gonal basis (Fig.~\ref{fig:CnDn}c). A characteristic orbit of both $C_n$ and $D_n$-- denoted by $\mathcal{O}_n$-- is generated by an arbitrary unit vector from within the plane that is perpendicular to the $n$-fold rotation axis. Such an orbit consists of $n$ vectors. In $\mathbb{R}^2$, $\mathcal{O}_n$ is an irreducible symmetric descriptor of $C_n$. In $\mathbb{R}^3$, however, it is only an irreducible symmetric descriptor of $D_n~(n\ge3)$. For a $C_n$ object, however, $\mathcal{O}_n$ is invariant under a $180^\circ$ rotation around one of its constituent vectors, while such a transformation will change the orientation of a $C_n$ object.  An irreducible symmetric descriptor of $C_n$  is therefore a union of the 'planar` orbit, $\mathcal{O}_n$, and a single vector that is parallel to the axis of rotation, e.g.,~the red vector in Fig.~\ref{fig:CnDn}c. $\mathcal{O}_n$ is also not an irreducible symmetric descriptor of an object with $D_2$ symmetry as it only contains two collinear vectors. In that case, an irreducible symmetric descriptor can be constructed as the union of $\{\pm z\}$ and $\mathcal{O}_n$ where $z$ is the rotation axis. 

In Appendix~\ref{appendix:derivation:CnDn}, we show that $\mathcal{H}_n$ is an SOC for the planar orbit $\mathcal{O}_n$. Note that for even $n$, $\mathcal{H}_n$, and for odd $n$, $\mathcal{H}_{2n-1}$ are guarranteed to be SOCs of $\mathcal{O}_n$. Therefore, the upper bound given in Corollary~\ref{cor:even_sets} is tight for even $n$s.

\begin{rem}
$C_3$ is a subgroup of $T$, the rotation group of a regular tetrahedron, which is a triangular pyramid. $D_4$ is a subgroup of $O$, the rotation group of an octahedron, which is a square bipyramid, and a cube, which is a square prism. These groups have different SOCs and will be discussed separately. 
\end{rem}

%%%%%%%%%%%%%%%%%%%%%%%%%%%%%%%%%%%
%%%%%%%%%%%%%%%%%%%%%%%%%%%%%%%%%%%
%%%%%%%% C_inf    A N D    D_inf                            %%%%%%%%
%%%%%%%%%%%%%%%%%%%%%%%%%%%%%%%%%%%
%%%%%%%%%%%%%%%%%%%%%%%%%%%%%%%%%%%

\subsubsection{$C_{\infty}$ and $D_{\infty}$\label{subsubsection:CinfDinf}}
$C_{\infty}$ is the symmetry of a cone with a circular base, while $D_{\infty}$ corresponds to the symmetry of a cylinder. For both symmetries, two types of orbits are possible. An orbit generated by a vector along the rotation axis will be finite and will have one and two elements for $C_{\infty}$ and $D_{\infty}$, respectively. All other orbits, however, will be uncountably infinite sets comprising of one circle for $C_{\infty}$ and one or two parallel circles for $D_{\infty}$. It can be noted that a finite orbit $\mathcal{O}_f$ is an irreducible symmetric descriptor for both $C_{\infty}$ and $D_{\infty}$. The corresponding SOC will therefore be $\mathcal{H}_1=z$ and $\mathcal{H}_2=zz$ for $C_{\infty}$ and $D_{\infty}$, respectively.

%%%%%%%%%%%%%%%%%%%%%%%%%%%%%%%%%%%
%%%%%%%%%%%%%%%%%%%%%%%%%%%%%%%%%%%
%%%%%%%% T E T R A H E D R A L   S Y M M E T R Y %%%%%%
%%%%%%%%%%%%%%%%%%%%%%%%%%%%%%%%%%%
%%%%%%%%%%%%%%%%%%%%%%%%%%%%%%%%%%%

\subsubsection{Tetrahedral Symmetry $T$\label{section:SOC:tetrahedral}}
$T$ corresponds to the rotation group of a regular tetrahedron (Fig.~\ref{fig:tetrahedron}) and has twelve elements. A general orbit of $T$ will therefore have twelve elements as well. The high symmetry vectors that connect the centroid of a regular tetrahedron to its vertices, depicted in green in Fig.~\ref{fig:tetrahedron}, can, however, generate an orbit $\mathcal{O}_t=\{a_p\}_{p=1}^4$ that only has four elements. $\mathcal{O}_t$ is an irreducible symmetric descriptor of a regular tetrahedron. Therefore, its SOC will also be the SOC of a regular tetrahedron.  Note that $\mathcal{O}_t$ is an odd set, and, according to Theorem~\ref{thm:odd_sets}, $\mathcal{H}_1,\mathcal{H}_3,\mathcal{H}_5,\mathcal{H}_7$ will be its candidate SOCs. However, note that: 
\begin{eqnarray}
\mathcal{H}_1^i &=& \sum_{p=1}^4a_p^i = 0\label{eq:tet:H1}\\
\mathcal{H}_2^{ij} &=& \sum_{p=1}^4a_p^ia_p^j = \frac43\delta^{ij}\label{eq:tet:H2}
\end{eqnarray}
We prove that $\mathcal{H}_3$ is the SOC for $\mathcal{O}_T$ as follows. Let $\mathcal{O}_{T}=\{a_p\}_{p=1}^4$ and $\mathcal{O}_{T'}=\{b_q\}_{q=1}^4$ be the corresponding orbits for the two tetrahedra $T$ and $T'$ and observe that:
\begin{eqnarray}
\norm{\mathcal{H}_3(\mathcal{O}_T)-\mathcal{H}_3(\mathcal{O}_{T'})}_F^2 &=& 2\left[\frac{32}9-\sum_{p,q=1}^4\xi_{pq}^3\right]\label{eq:tet:objective}
\end{eqnarray}
with $\xi_{pq}=a_p^Tb_q$. Contracting (\ref{eq:tet:H1}) and (\ref{eq:tet:H2}) with $b_q^i$ and $b_q^ib_q^j$ yields:
\begin{eqnarray}
	\begin{array}{ll}
		\sum_{p=1}^4\xi_{pq} = 0 & q=1,\cdots,4\\
		\sum_{p=1}^4\xi_{pq}^2 = \frac43 & q=1,\cdots,4
	\end{array}\label{eq:tet:constraints}
\end{eqnarray}
As shown in Appendix~\ref{appendix:tetopt},~(\ref{eq:tet:objective}) can only be zero if $\xi_{pq}$'s are the permutations of $(1,-\frac13,-\frac13,-\frac13)$ for every $q$ i.e. if $\mathcal{O}_T=\mathcal{O}_{T'}$.

\begin{figure}
	\begin{center}
		\includegraphics[width=.4\textwidth]{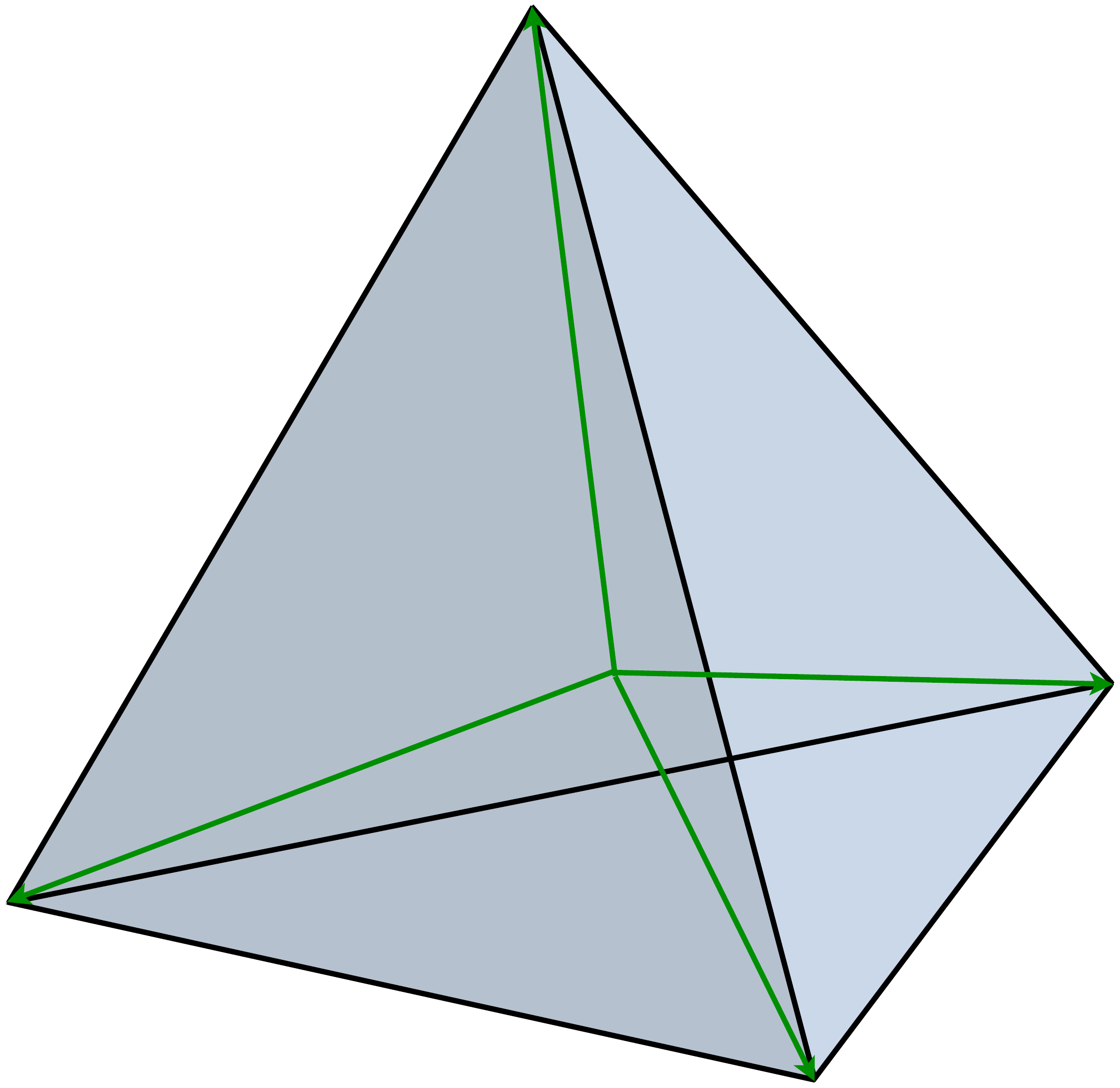}
		\caption{\label{fig:tetrahedron}A regular tetrahedron with the rotation group $T$. The smallest orbit has four elements that are depicted in green.}
	\end{center}
\end{figure}

%%%%%%%%%%%%%%%%%%%%%%%%%%%%%%%%%%%
%%%%%%%%%%%%%%%%%%%%%%%%%%%%%%%%%%%
%%%%%%%% O C T A H E D R A L    S Y M M E T R Y %%%%%%
%%%%%%%%%%%%%%%%%%%%%%%%%%%%%%%%%%%
%%%%%%%%%%%%%%%%%%%%%%%%%%%%%%%%%%%

\subsubsection{Octahedral Symmetry $O$\label{section:SOC:octahedral}}
$O$ corresponds to rotational symmetry of a cube and an octahedron (Fig.~\ref{fig:octahedron}) and has twenty-four elements. Orbits containing as few as six elements can, however, be constructed by choosing the high-symmetry vectors connecting the center of a regular octahedron to its vertices. These vectors are depicted in green in Fig.~\ref{fig:octahedron}. Such an orbit is an irreducible symmetric descriptor of a cube or an octahedron and has the form $\mathcal{O}_C=\{\pm x,\pm y, \pm z\}$ with $x,y$ and $z$ being mutually orthogonal. 

Being an even set, Corollary~\ref{cor:even_sets} implies that $\mathcal{H}_2,\mathcal{H}_4$ and $\mathcal{H}_6$ are the candidate SOCs for $\mathcal{O}_C$. We, however, have:
\begin{eqnarray}
	\mathcal{H}_2^{ij} &=& 2(x^ix^j+y^iy^j+z^iz^j) = 2\delta^{ij}
\end{eqnarray}
The only remaining candidates are thus $\mathcal{H}_4$ and $\mathcal{H}_6$. We prove that $\mathcal{H}_4$ is an SOC of $\mathcal{O}_C$ as follows. Let $\mathcal{O}_C=\{\pm a_p\}_{p=1}^3$ and $\mathcal{O}_{C'}=\{\pm b_q\}_{q=1}^3$ be the corresponding orbits for the two cubes $C$ and $C'$  with $a_p^ia_q^i=b_p^ib_q^i=\delta_{pq}$ and note that:
\begin{eqnarray}
\norm{\mathcal{H}_4(\mathcal{O}_C)-\mathcal{H}_4(\mathcal{O}_{C'})}_F^2 &=& 16\left[3-\sum_{p,q=1}^3\xi_{pq}^4\right]\label{eq:octa:objective}
\end{eqnarray}
with $\xi_{pq}=a_p^Tb_q$. For every $q$ we have $\sum_{p=1}^3\xi_{pq}^2=1$. As shown in Appendix~\ref{appendix:octaopt}, (\ref{eq:octa:objective}) is zero only if $\xi_{pq}^2$'s are a permutation of $(1,0,0)$ for each $q$ i.e. only if $\mathcal{O}_{C}=\mathcal{O}_{C'}$.

\begin{figure}
	\begin{center}
		\includegraphics[width=.4\textwidth]{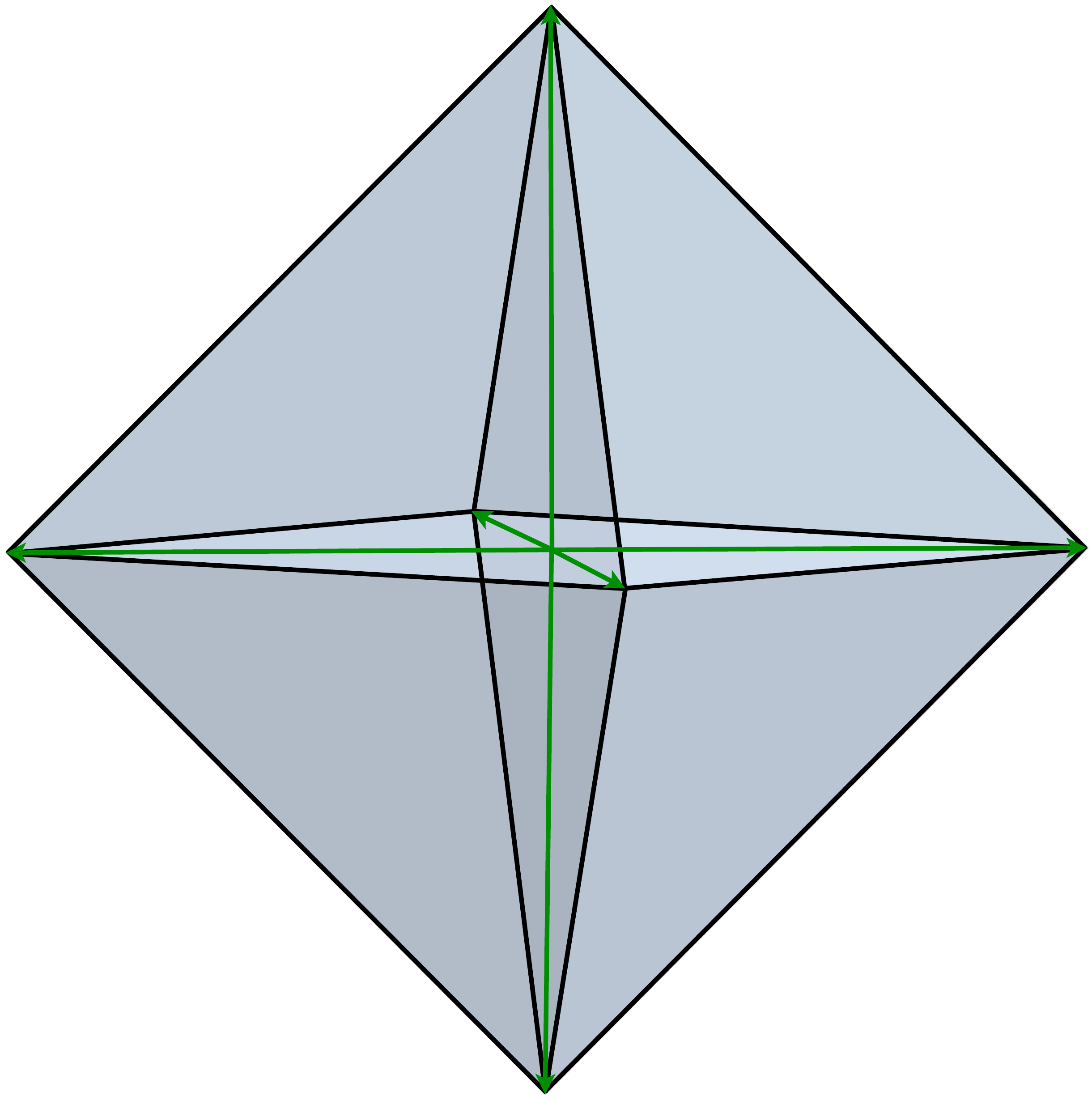}
		\caption{\label{fig:octahedron}A regular octahedron with the rotation group $O$. The smallest orbit has six elements that are depicted in green.}
	\end{center}
\end{figure}

%%%%%%%%%%%%%%%%%%%%%%%%%%%%%%%%%%%
%%%%%%%%%%%%%%%%%%%%%%%%%%%%%%%%%%%
%%%%%%%% I C O S A H E D R A L    S Y M M E T R Y %%%%%%
%%%%%%%%%%%%%%%%%%%%%%%%%%%%%%%%%%%
%%%%%%%%%%%%%%%%%%%%%%%%%%%%%%%%%%%

\subsubsection{Icosahedral Symmetry $I$}
$I$ corresponds to the rotational symmetry of a regular icosahedron and a regular dodecahedron, and has $60$ elements. Orbits with as few as twelve elements can, however, be constructed from the vectors connecting the center of an icosahedron to its twelve vertices (Fig.~\ref{fig:icosahedron}). The corresponding orbit denoted by $\mathcal{O}_I=\{\pm a_i\}_{i=1}^6$ is an irreducible symmetric descriptor of an icosahedron (dodecahedron). A prototypical set of such unit $a_i$'s are given by $(\alpha,\pm\beta,0), (0,\alpha,\pm\beta), (\pm\beta,0,\alpha)$ with $\alpha=1/\sqrt{1+\phi^2}$, $\beta=\phi/\sqrt{1+\phi^2}$ and $\phi=(\sqrt5+1)/2$. The choice of $a_i$'s are arbitrary in the sense that both $+a_i$ and $-a_i$ are valid choices. Also note that $a_i^Ta_j=\pm1/\sqrt5$ for $i\neq j$.

As an even set, the SOC of $\mathcal{O}_I$ will be amongst $\mathcal{H}_2,\cdots,\mathcal{H}_{12}$ according to Corollary~\ref{cor:even_sets}. However one can show that:
\begin{eqnarray}
	\mathcal{H}_2^{ij} &=& 2\sum_{p=1}^6a_p^ia_p^j= 4\delta^{ij}\\
	\mathcal{H}_4^{ijkl} &=&2\sum_{p=1}^6a_p^ia_p^ja_p^ka_p^l=\frac45\left(\delta^{ij}\delta^{kl}+\delta^{ik}\delta{jl}+\delta^{il}\delta^{jk}\right)
\end{eqnarray}
With an approach similar to what was used for the tetrahedral and octahedral symmetries, we prove that $\mathcal{H}_6$ is the SOC for $\mathcal{O}_I$. Let $\mathcal{O}_I=\{\pm a_p\}_{p=1}^6$ and $\mathcal{O}_{I'}=\{\pm b_q\}_{q=1}^6$ be two such orbits. We have:
\begin{eqnarray}
	\norm{\mathcal{H}_6(\mathcal{O}_I)-\mathcal{H}_6(\mathcal{O}_{I'})}_F^2 &=& 128\sum_{p,q=1}^6\left[\frac{156}{125}-\xi_{pq}^3\right]\label{eq:ico:obj}
\end{eqnarray} 
with $\xi_{pq}=[a_p^Tb_q]^2$. Contracting $\mathcal{H}_2(\mathcal{O}_I)$ and $\mathcal{H}_4(\mathcal{O}_I)$ with $b_q^ib_q^j$ and $b_q^ib_q^jb_q^kb_q^l$ yields:
\begin{eqnarray}
	\sum_{p=1}^6 \xi_{pq} &=& 2\\
	\sum_{p=1}^6 \xi_{pq}^2 &=& \frac65
\end{eqnarray}
According to Appendix~\ref{appendix:icosa}, (\ref{eq:ico:obj}) can only be zero if $\xi_{pq}$'s are a permutation of $(1,\frac15,\frac15,\frac15,\frac15,\frac15)$ for each $q$ i.e. if $\mathcal{O}_I=\mathcal{O}_{I'}$. 

\begin{figure}
	\begin{center}
		\includegraphics[width=.4\textwidth]{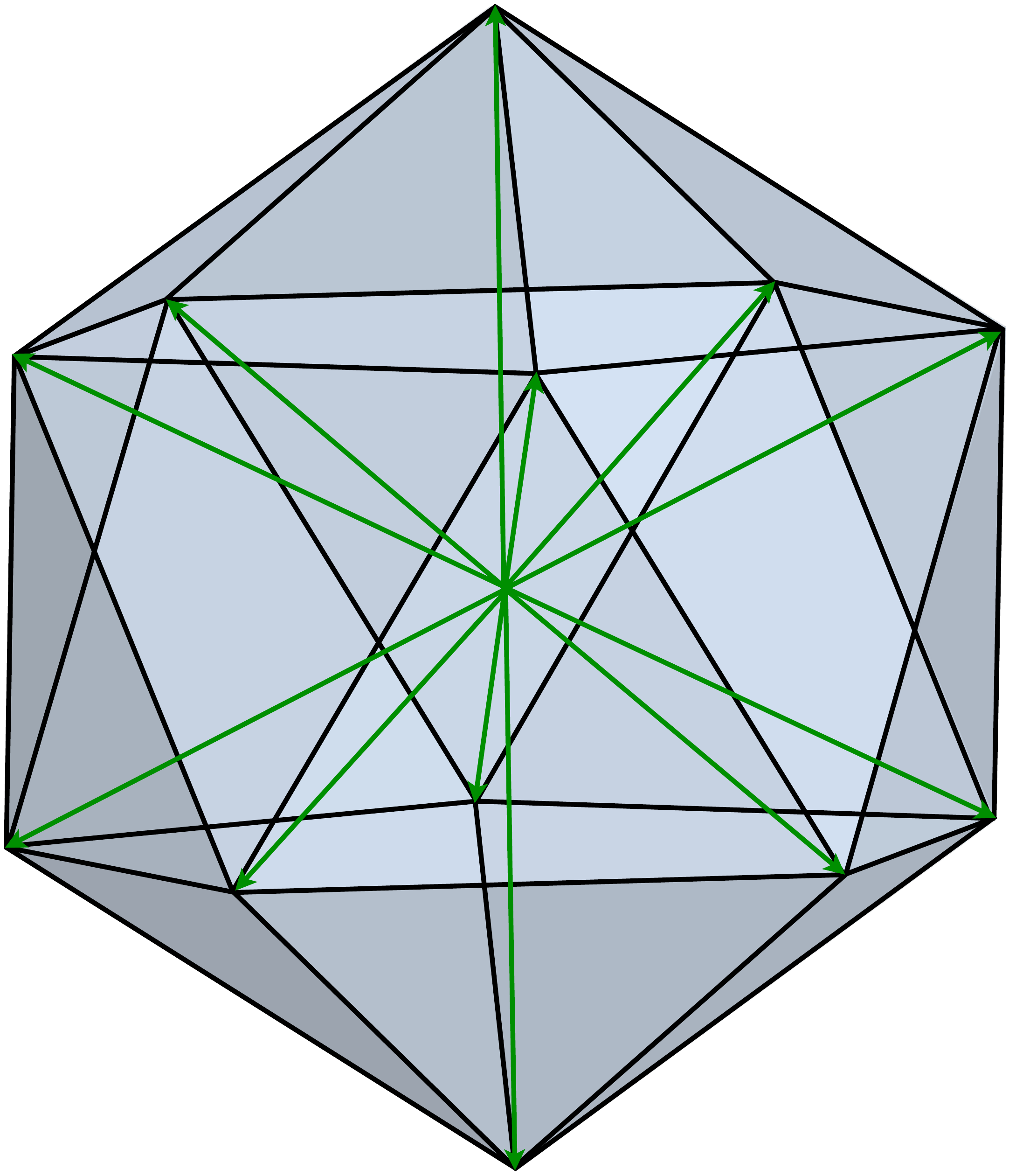}
		\caption{\label{fig:icosahedron}A regular icosahedron with the rotation group $I$. The smallest orbit has twelve elements that are depicted in green.}
	\end{center}
\end{figure}

\begin{table*}
	\caption{
	\label{table:2D3DSOCs}
	Strong Orientational Coordinates of Two- and Three-dimensional Rotation Groups. The `optimal SOC' corresponds to the smallest-rank strong rotational coordinate. The `upper-bound SOC', however, corresponds to the upper-bound rank predicted in Section~\ref{subsection:homogeneous_tensors}. The vectors given in the column $\mathcal{W}$ correspond to a typical orbit. Also note that $\alpha=1/\sqrt{1+\phi^2}, \beta=\phi/\sqrt{1+\phi^2}$ with $\phi=(\sqrt5+1)/2$.}
	\begin{tabular}{cccccc}
		\hline\hline 
		~~~$d$~~~~ & $\mathcal{G}$ & Prototypes & $\mathcal{W}$ & Optimal SOC & Upper bound SOC \\
		\hline
		\multirow{2}{*}{$2$} & \multirow{2}{*}{$C_n$} & \multirow{2}{*}{regular $n$-gon} & $\{v_p\}_{p=1}^n$ & \multirow{2}{*}{$\mathcal{H}_n=\sum_{p=1}^nv_p^n$} & $\mathcal{H}_n$, $n$ even \\
		&&& $v_p\left(\cos\frac{2\pi k}{n},\sin\frac{2\pi k}{n}\right)$ && $\mathcal{H}_{2n-1}$, $n$ odd\\
		\hline
		\multirow{3}{*}{$3$} & \multirow{3}{*}{$C_n$} & \multirow{3}{*}{regular $n$-gonal pyramid} & $\{v_p\}_{p=1}^n\cup\{z\}$ &$(\mathcal{H}_n,z)$  & $(\mathcal{H}_n,z)$, $n$ even \\
		&&& $v_p\left(\cos\frac{2\pi k}{n},\sin\frac{2\pi k}{n},0\right)$ &$\mathcal{H}_n=\sum_{p=1}^nv_p^n$& $(\mathcal{H}_{2n-1},z)$, $n$ odd\\
		&&& $z(0,0,1)$ &  &\\
		\hline
		\multirow{2}{*}{$3$} & \multirow{2}{*}{$D_n, n>2$} & regular $n$-gonal pyramid & $\{v_p\}_{p=1}^n$ & \multirow{2}{*}{$\mathcal{H}_n=\sum_{p=1}^nv_p^n$} & $\mathcal{H}_n$, $n$ even \\
		&&regular $n$-gonal prism& $v_p\left(\cos\frac{2\pi k}{n},\sin\frac{2\pi k}{n},0\right)$ && $\mathcal{H}_{2n-1}$, $n$ odd\\
		\hline
		\multirow{3}{*}{$3$} & \multirow{3}{*}{$D_2$} & \multirow{3}{*}{rectangular parallelepiped} & $\{\pm z\}\cup\{\pm y\}$ & \multirow{3}{*}{$(zz,yy)$} & \multirow{3}{*}{$(zz,yy)$} \\
		&&& $z(0,0,1)$ && \\
		&&& $y(0,1,0)$ &&\\
		\hline
		$3$ & $C_{\infty}$ & Cone, Hemisphere & $\{z\}, z(0,0,1)$ & $z$ & $z$\\ 
		\hline
		$3$ & $D_{\infty}$ & Cylinder & $\{\pm z\}, z(0,0,1)$ & $zz$ & $zz$\\ 
		\hline
		\multirow{5}{*}{$3$} & \multirow{5}{*}{$T$} & \multirow{5}{*}{Regular tetrahedron} & $\{a_p\}_{p=1}^4$ & \multirow{5}{*}{$\mathcal{H}_3=\sum_{p=1}^4a_pa_pa_p$} & \multirow{5}{*}{$\mathcal{H}_7$} \\ 
		&&& $a_1(\frac{1}{\sqrt3},\frac{1}{\sqrt3},\frac{1}{\sqrt3})$&\\
		&&& $a_2(\frac{1}{\sqrt3},-\frac{1}{\sqrt3},-\frac{1}{\sqrt3})$&\\
		&&& $a_3(-\frac{1}{\sqrt3},-\frac{1}{\sqrt3},\frac{1}{\sqrt3})$&\\
		&&& $a_4(-\frac{1}{\sqrt3},\frac{1}{\sqrt3},-\frac{1}{\sqrt3})$&\\
		\hline
		\multirow{4}{*}{$3$} & \multirow{4}{*}{$O$} & & $\{\pm a_p\}_{p=1}^3$ & \multirow{4}{*}{$\mathcal{H}_4=2\sum_{p=1}^3a_pa_pa_pa_p$} & \multirow{4}{*}{$\mathcal{H}_6$}\\
		&& Octahedron & $a_1(1,0,0)$ & & \\
		&& Cube & $a_2(0,1,0)$ & & \\
		&&  & $a_3(0,0,1)$ & & \\
		\hline
		\multirow{4}{*}{$3$} & \multirow{4}{*}{$I$} & & $\{\pm a_p\}_{p=1}^6$ & \multirow{4}{*}{$\mathcal{H}_6=2\sum_{p=1}^6a_pa_pa_pa_pa_pa_p$} & \multirow{4}{*}{$\mathcal{H}_{12}$}\\
		&& Regular Icosahedron & $a_{1,2}(\alpha,\pm\beta,0)$ & & \\
		&& Regular Dodecahedron & $a_{3,4}(0,\alpha,\pm\beta)$ & & \\
		&&  & $a_{5,6}(\pm\beta,0,\alpha)$ & & \\
		\hline
		\hline
	\end{tabular}
\end{table*}

%%%%%%%%%%%%%%%%%%%%%%%%%%%%%%%%%%%%%%%%
%%%%%%%%%%%%%%%%%%%%%%%%%%%%%%%%%%%%%%%%
%%%%%%%%%% O R D E R   P A R A M E T E R S          %%%%%%%%%%
%%%%%%%%%%%%%%%%%%%%%%%%%%%%%%%%%%%%%%%%
%%%%%%%%%%%%%%%%%%%%%%%%%%%%%%%%%%%%%%%%

\subsection{Ordered Arrangements and Order Parameters\label{subsection:OrderParameter}}
As explained in Section~\ref{section:introduction}, orientationally-ordered arrangements of symmetric objects can be mathematically described using the distribution functions that have been historically expressed in terms of non-bijective orientational coordinates such as polar (or Euler) angles. Here, we express such functions in terms of the SOCs derived in Section~\ref{subsection:symmetric_objects}, and we quantify the extent of orientational order by computing the ensemble averages of certain moments of such SOCs. 

An arrangement of (symmetric) objects is called \textbf{rotationally isotropic} if each particle can take all permissible orientations with equal probability. This can be characterized by a uniform distribution function. However in an orientationally-ordered arrangement of (symmetric) objects, orientational symmetry is broken and each particle tends to preferentially take certain orientations more frequently. Such a preference can be characterized by a non-uniform distribution function in terms of an SOC. We therefore define a structure- or a phase- as follows.

\begin{defn}\label{defn:phase}
A \textbf{structure} or \textbf{phase} of an object $\mathcal{R}$ is characterized by a probability density function $p_0(\mathcal{H}_{m_1},\cdots, \mathcal{H}_{m_N}; \Omega)$ where $\Omega$ stands for all the geometric features needed for the macroscopic characterization of the structure.
\end{defn}

It is necessary to emphasize that the notion of a phase in Definition~\ref{defn:phase} entails only the global orientational characteristics of an arrangement, and should not be confused with the thermodynamic notion of a phase that can entail both translational and orientational order.

\begin{example}\label{example:complete_match_structure}
Let $\mathcal{R}$ be a symmetric object with a normalized irreducible symmetric descriptor containing a single equivalence class ($N=1$) and let $\mathcal{H}_m$ be a strong orientational coordinate constructed from $\mathcal{N}_{\mathcal{R}}$. The density function $p_0(\mathcal{H}_m;\widehat{\mathcal{H}}_m)=\delta(\mathcal{H}_m-\widehat{\mathcal{H}}_m)$ defines an arrangement of the object $\mathcal{R}$ where all objects have the same orientation with $\widehat{\mathcal{R}}$ and $\Omega=\widehat{\mathcal{H}}_m$.
\end{example}

We now outline the procedure that can be used for deriving orientational order parameters (OOPs) for a phase. Let $\mathcal{H}_{\mathcal{R}}:\equiv(\mathcal{H}_{m_1},\cdots,\mathcal{H}_{m_N})$ be an SOC for $\mathcal{R}$ and let $p_0(\mathcal{H}_{\mathcal{R}}; \Omega)$ be a phase. Also consider $M(\mathcal{H}_{\mathcal{R}})$, a tensorial function of $\mathcal{H}_{\mathcal{R}}$, with the property that $\mathfrak{M}(\Omega)=\langle M(\mathcal{H}_{\mathcal{R}})\rangle_0$ satisfies the condition that it is a strong descriptor of $\Omega$, i.e.,~$\mathfrak{M}_1=\mathfrak{M}_2$ if and only if $\Omega_1=\Omega_2$. In other words, $\mathfrak{M}(\Omega)$ must be invariant under the transformations that keep $\Omega$ unchanged. This is to assure that all distinct geometric instances of a phase are distinguishable by $\mathfrak{M}$. However, this condition can be relaxed if one is only interested in certain structural features of a phase. 

Now let $\mathcal{R}_1,\mathcal{R}_2,\cdots,\mathcal{R}_n$ be an arrangement of (symmetric) particles and define the \textbf{experimental order estimator} as:
\begin{eqnarray}
\overline{M} &=& \frac1n\sum_{i=1}^nM(\mathcal{H}_{\mathcal{R}_i})
\end{eqnarray}
The more perfect the ordering of $\mathcal{R}_1,\mathcal{R}_2,\cdots,\mathcal{R}_n$ is, the closer will $\overline{M}$ be to $\mathfrak{M}$. Therefore, one would expect $||\overline{M}-\mathfrak{M}||_F$ to be smaller in more perfect arrangements. Henceforth, one can formulate the problem of identifying the underlying geometric features of a phase as:
\begin{eqnarray}
\min_{\Omega}\frac{\norm{\overline{M}-\mathfrak{M}_{\Omega}}_F}{\norm{\mathfrak{M}_I-\mathfrak{M}_{\Omega}}_F}\label{eq:core_optimization}
\end{eqnarray}
 with $\mathfrak{M}_I=\langle M\rangle_{\text{isotropic}}$. One can define \textbf{scalar order parameter} of the phase as:
\begin{eqnarray}
\kappa_{\Omega}&=& 1-\frac{\norm{\overline{M}-\mathfrak{M}_{\Omega^*}}^2_F}{\norm{\mathfrak{M}_I-\mathfrak{M}_{\Omega^*}}^2_F}\label{eq:scalar_OP}
\end{eqnarray}
where $\Omega^*$ is the minimizer in (\ref{eq:core_optimization}). Note that $\kappa=0$ for a completely isotropic system while $\kappa=1$ when matching is perfect. For a certain subclass of phases where $\norm{\mathfrak{M}_{\Omega}}$ is constant, one can simplify (\ref{eq:core_optimization}) to:
\begin{eqnarray}
\max_{\Omega}\overline{M}\odot\mathfrak{M}_{\Omega}\label{eq:OOP_maximize}
\end{eqnarray}

In general if the conditions outlined above are established for a given $\mathfrak{M}$, one can use (\ref{eq:core_optimization}) or (\ref{eq:OOP_maximize}) to obtain $\mathfrak{M}_{\Omega^*}$ and (\ref{eq:scalar_OP}) to calculate the scalar order parameter. We will show the procedure of solving (\ref{eq:core_optimization}) or (\ref{eq:OOP_maximize}) in multiple examples at the end of this section. Before doing so, however, we give explicit formulae for calculating $\mathfrak{M}_I$ for the $M$'s that are sums of $r$-adic products.

\begin{defn}\label{defn:symmetrized_tensor_polynom}
Let $v_1,v_2,\cdots,v_m\in\mathbb{R}^d$ be arbitrary vectors and $a_1,a_2,\cdots,a_m$ be given nonnegative integers adding up to $a$. A \textbf{symmetrized tensor polynomial} $\mathscr{S}_{a_1,a_2,\cdots,a_m}^a(v_1,v_2,\cdots,v_m)$ is defined as the sum of all possible direct products of the form $v_{i_1}v_{i_2}\cdots v_{i_a}$ where exactly $a_1$ of $i_j$'s are one, $a_2$ of $i_j$'s are two, etc. The number of distinct terms in such a polynomial is $\binom{a}{a_1,a_2,\cdots,a_m}=a!/\prod_{q=1}^ma_q!$. Table~\ref{table:symm:poly} gives the list of symmetrized tensor polynomials for $a<4$ obtained from the following proposition. 
\end{defn}

\begin{prop}
The symmetrized tensor polynomial defined in Definition~\ref{defn:symmetrized_tensor_polynom} is given by:
\begin{eqnarray}
\mathscr{S}^{a}_{a_1,a_2,\cdots,a_m}(v_1,v_2,\cdots,v_m) &=& \left\{
	\begin{array}{ll}
	\mathscr{S}^{a}_{a_1,a_2,\cdots,a_{k-1},0,a_{k+1},\cdots,a_m}(v_1,\cdots,v_{k-1},v_{k+1},\cdots,v_m) & a_k=0\\
	\sum_{k=1}^mv_k\mathscr{S}^{a-1}_{a_1,\cdots,a_{k-1},a_k-1,a_{k+1},\cdots,a_m}(v_1,\cdots,v_m)& a_i>0\text{~for all i}
	\end{array}
\right.\notag
\end{eqnarray}
with $\mathscr{S}^0_{0,0,\cdots,0}(v_1,v_2,\cdots,v_m)=1$.
\end{prop}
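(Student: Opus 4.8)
The plan is to argue by induction on the total degree $a$, interpreting each term of $\mathscr{S}^a_{a_1,\ldots,a_m}$ as a \emph{word} of length $a$ over the alphabet $\{1,2,\ldots,m\}$ in which the letter $k$ occurs exactly $a_k$ times. Under this correspondence a word $i_1i_2\cdots i_a$ encodes the rank-$a$ direct product $v_{i_1}v_{i_2}\cdots v_{i_a}$, and Definition~\ref{defn:symmetrized_tensor_polynom} says precisely that $\mathscr{S}^a_{a_1,\ldots,a_m}$ is the sum over all such words. The base case is immediate: when $a=0$ the only admissible word is the empty word, whose associated product is the scalar $1$, matching $\mathscr{S}^0_{0,\ldots,0}=1$.

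For the inductive step I would treat the two cases of the recursion separately. If some $a_k=0$, then no admissible word uses the letter $k$, so the collection of admissible words---and hence the defining sum---coincides verbatim with the one in which the letter $k$ and the vector $v_k$ are deleted from the alphabet; this is the first line of the recursion and needs no computation. If instead every $a_i>0$, I would partition the admissible words according to their \emph{first} letter. A word whose first letter is $k$ has the form $k\,j_2\cdots j_a$, and deleting the leading $k$ is a bijection onto the admissible words for the reduced data $(a_1,\ldots,a_k-1,\ldots,a_m)$ of length $a-1$. Since the direct product factorizes on the left as $v_k(v_{j_2}\cdots v_{j_a})$, summing over the first letter yields $\sum_{k=1}^m v_k\,\mathscr{S}^{a-1}_{a_1,\ldots,a_k-1,\ldots,a_m}$, the second line of the recursion; the inductive hypothesis supplies the inner objects, and when a subtraction produces $a_k-1=0$ the first case guarantees they are still well defined.

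The argument is essentially bookkeeping, so there is no deep obstacle; the only points demanding care are keeping the tensor factors in order (the products do not commute, so one must extract the \emph{leftmost} factor to reproduce the stated recursion rather than the rightmost), and checking that the two cases dovetail when a subtraction drives some $a_k$ to zero. As a consistency check one may observe that counting terms on both sides recovers the multinomial Pascal identity $\binom{a}{a_1,\ldots,a_m}=\sum_{k}\binom{a-1}{a_1,\ldots,a_k-1,\ldots,a_m}$, in agreement with the count $a!/\prod_q a_q!$ asserted in the definition. Equivalently, the whole statement can be read off algebraically by viewing $\mathscr{S}^a_{a_1,\ldots,a_m}$ as the coefficient of $x_1^{a_1}\cdots x_m^{a_m}$ in $(x_1v_1+\cdots+x_mv_m)^{\otimes a}$ and factoring out the first tensor slot, which is the same decomposition cast in generating-function language.
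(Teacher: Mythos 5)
Your proof is correct and is essentially the paper's own argument: the paper likewise disposes of the $a_k=0$ case as immediate from the definition and proves the recursion by grouping the terms of $\mathscr{S}^{a}_{a_1,\ldots,a_m}$ according to their initial multiplier, which is exactly your partition of words by their first letter. Your version merely spells out the bijection and the induction bookkeeping that the paper leaves implicit.
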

\begin{proof}
The first assertion follows from the definition. For the second assertion look at each term of $\mathscr{S}^{a}_{a_1,a_2,\cdots,a_m}(v_1,v_2,\cdots,v_m)$ and group them based on their initial multiplier. 
\end{proof}

\begin{widetext}
\begin{prop}\label{prop:isotropic}
Let $t\in\mathbb{R}^d,d=2,3$ be a unit vector and $k\in\mathbb{Z}^{\ge0}$, then :
\begin{enumerate}
\item $\langle{t}^{2k+1}\rangle_{\text{isotropic}} = 0$.
\item $\displaystyle\langle t^{2k}\rangle_{\text{isotropic}}=\sum_{p=0}^k\frac{(2p)!(2k-2p)!}{4^kp!k!(k-p)!}\mathscr{S}^{2k}_{2p,2k-2p}({x},{y})$~~~~~for $d=2$.
\item $\displaystyle\langle{t}^{2k}\rangle_{\text{isotropic}}=\frac{1}{2k+1}\sum_{m=0}^k\sum_{n=0}^{k-m}\frac{\binom{k}{m,n,k-m-n}}{\binom{2k}{2m,2n,2k-2m-2n}}\mathscr{S}^{2k}_{2m,2n,2k-2m-2n}({x},{y},{z})$~~~for $d=3$.
\end{enumerate}
where $({x},{y})$ and $({x},{y},{z})$ are orthonormal bases for $\mathbb{R}^2$ and $\mathbb{R}^3$ respectively. 
\end{prop}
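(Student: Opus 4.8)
The plan is to reduce each tensorial average to a finite collection of scalar moments of the Cartesian components of $t$, evaluate those moments by elementary integration, and then reassemble them into symmetrized tensor polynomials. Part (1) is immediate from the antipodal symmetry of the isotropic measure: the uniform distribution on the unit sphere is invariant under $t\mapsto -t$, which lies in $O(d)$, and the $(2k+1)$-adic power obeys $(-t)^{2k+1}=-t^{2k+1}$. Hence $\langle t^{2k+1}\rangle_{\text{isotropic}}=\langle(-t)^{2k+1}\rangle_{\text{isotropic}}=-\langle t^{2k+1}\rangle_{\text{isotropic}}$, which forces the average to vanish.

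For parts (2) and (3) I would first write $t=\sum_a t_a e_a$ in the relevant orthonormal basis ($(x,y)$ for $d=2$, $(x,y,z)$ for $d=3$) and expand the adic power componentwise, so that
\[
\langle t^{2k}\rangle_{\text{isotropic}} = \sum_{a_1,\ldots,a_{2k}} \langle t_{a_1}\cdots t_{a_{2k}}\rangle_{\text{isotropic}}\, e_{a_1}\cdots e_{a_{2k}}.
\]
Each scalar coefficient depends only on how many times each basis index occurs, i.e.\ on the exponent profile $(c_1,\ldots,c_d)$ with $\sum_a c_a=2k$. The single-coordinate reflections $t_a\mapsto -t_a$ also belong to $O(d)$ and preserve the isotropic measure, so $\langle\prod_a t_a^{c_a}\rangle_{\text{isotropic}}=0$ whenever any $c_a$ is odd. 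Thus only profiles with all exponents even survive, and I relabel $c_a=2p_a$. Since all orderings of $e_{a_1}\cdots e_{a_{2k}}$ sharing a fixed profile carry the \emph{same} scalar coefficient, grouping them reproduces exactly the symmetrized tensor polynomial $\mathscr{S}^{2k}_{2p_1,\ldots,2p_d}$ of Definition~\ref{defn:symmetrized_tensor_polynom}, which is by construction the sum over all distinct orderings of a fixed multiset of factors.

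What remains is to evaluate the surviving scalar moments. For $d=2$, writing $t=(\cos\theta,\sin\theta)$ and using the Beta integral $\frac{1}{2\pi}\int_0^{2\pi}\cos^{2p}\theta\sin^{2q}\theta\,d\theta=\frac{(2p)!(2q)!}{4^{p+q}p!\,q!\,(p+q)!}$ with $q=k-p$ gives precisely the coefficient in item (2). For $d=3$ the analogous Dirichlet integral over $S^2$, evaluated through the half-integer Gamma values $\Gamma(m+\tfrac12)=\frac{(2m)!}{4^m m!}\sqrt{\pi}$, yields $\langle t_1^{2m}t_2^{2n}t_3^{2l}\rangle_{\text{isotropic}}=\frac{k!\,(2m)!(2n)!(2l)!}{m!\,n!\,l!\,(2k+1)!}$ with $l=k-m-n$, and a short factorial manipulation rewrites this as $\frac{1}{2k+1}\binom{k}{m,n,l}\big/\binom{2k}{2m,2n,2l}$, matching item (3).

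I expect the main obstacle to be bookkeeping rather than conceptual: carrying the spherical integrals to closed form via the Gamma-function identities, and then verifying that the resulting factorials collapse exactly to the stated ratio of multinomial coefficients. One must also confirm that the combinatorial count of distinct orderings is fully absorbed into $\mathscr{S}^{2k}_{2p_1,\ldots,2p_d}$, so that the common scalar moment is attached once per polynomial and no stray multiplicities survive.
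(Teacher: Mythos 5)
Your proposal is correct, and its skeleton matches the paper's: expand the $2k$-adic power in the orthonormal basis, discard terms with any odd exponent, group the survivors into the symmetrized tensor polynomials $\mathscr{S}^{2k}_{\cdots}$, and evaluate the remaining scalar moments by integration. The differences are in the two technical ingredients. First, you kill the odd-exponent terms by invariance of the isotropic measure under the reflections $t_a\mapsto -t_a$ (and the odd-rank case by antipodal symmetry), whereas the paper gets the same vanishing by explicitly noting that the trigonometric integrals $I_{2p,2q+1}=I_{2p+1,2q}=I_{2p+1,2q+1}=0$; these are the same facts, yours phrased as a symmetry argument, the paper's as integral evaluations. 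Second, and more substantively, for $d=3$ the paper parametrizes $t(\theta,\phi)$ in spherical coordinates and evaluates the coefficients as products $I_{2p,2q}\,J_{2k-2p-2q,2p+2q+1}$ of azimuthal and polar integrals, each computed by an integration-by-parts recursion in its Appendix F, while you invoke the Dirichlet-integral closed form $\langle t_1^{2m}t_2^{2n}t_3^{2l}\rangle=\frac{k!\,(2m)!(2n)!(2l)!}{m!\,n!\,l!\,(2k+1)!}$ via half-integer Gamma values in a single step. Your route is shorter and generalizes immediately to any dimension $d$ (the Dirichlet formula holds on $S^{d-1}$), whereas the paper's route is self-contained at the level of elementary one-variable integrals and reuses the same $I_{m,n}$, $J_{m,n}$ machinery in its Proposition~\ref{prop:axial}; both yield identical coefficients, as your factorial manipulation $\frac{k!\,(2m)!(2n)!(2l)!}{m!\,n!\,l!\,(2k+1)!}=\frac{1}{2k+1}\binom{k}{m,n,l}\big/\binom{2k}{2m,2n,2l}$ correctly confirms.
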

\end{widetext}
\begin{proof}
Observe that $\langle{t}^n\rangle=(1/S_{d-1})\int_{S^d}\textbf{t}^nd\Omega$ where $S_{d-1}$ is the surface of the $d$-sphere. Contributions from two hemispheres to the integral cancel out for odd $n$ and (1) follows. For $n=2k$ and $d=2$ we have ${t}=\cos\theta{~x}+\sin\theta{~y}$ and:
\begin{eqnarray}
\langle{t}^{2k}\rangle_I &=&\frac1{2\pi}\sum_{p=0}^kI_{2p,2k-2p}\mathscr{S}^{2k}_{2p,2k-2p}({x},{y})\notag
\end{eqnarray}
where $I_{2p,2k-2p}=\int_0^{2\pi}\cos^{2p}\theta\sin^{2k-2p}\theta d\theta$. The odd terms vanish because $I_{2p,2q+1}=I_{2p+1,2q}=I_{2p+1,2q+1}=0$ and (2) follows from Eq.~(\ref{eq:app:int_I_2m_2n}). For $d=3$, ${t}(\theta,\phi)=\sin\theta\cos\phi~{x}+\sin\theta\sin\phi~{y}+\cos\theta~{z}$ and:
\begin{widetext}
\begin{eqnarray}
\langle{t}^{2k}\rangle_I &=& \frac1{4\pi}\sum_{p=0}^{k}\sum_{q=0}^{k-p}I_{2p,2q}J_{2k-2p-2q,2p+2q+1}\mathscr{S}_{2p,2q,2k-2p-2q}^{2k}({x},{y},{z})\notag\\
&\overset{(a)}{=}& \sum_{p=0}^k\sum_{q=0}^{k-p}\frac{(2p)!(2q)!}{4^{p+q}p!q!(p+q)!}\frac{4^{p+q}(p+q)!(2k-2p-2q)!k!}{(k-p-q)!(2k+1)!}\mathscr{S}_{2p,2q,2k-2p-2q}^{2k}({x},{y},{z})\label{eq:prop3temp}
\end{eqnarray}
$(a)$ follows from~(\ref{eq:app:int_I_2m_2n}) and~(\ref{eq:app:int_J_2m_2np1}). Rearranging~(\ref{eq:prop3temp}) completes the proof.
\end{widetext}
\end{proof}

This procedure yields the widely known rotation-invariant isotropic tensors given in Table~\ref{tab:isotropic}, and can be thought of an algorithmic way of constructing such isotropic tensors for large values of $k$. Using this proposition, one can thus calculate $\mathfrak{M}_I$ for any given tensor that is a sum of $n$-adic products of unit vectors, including the moments of homogeneous tensors defined in this work. One can therefore always subtract $\mathfrak{M}_I$ in the definition of $M$ so that $\norm{M}_F$ on its own can be used as a measure of how anisotropic a certain arrangement of particles is. The rest of this section is devoted to some examples of how (\ref{eq:core_optimization}) and (\ref{eq:OOP_maximize}) can be formulated and solved. But before doing so, we outline the following useful result that can be used to calculate the expected value of a $k$-adic power of a vector that can uniformly rotate around a rotation axis.
\begin{prop}\label{prop:axial}
Let $z,t\in\mathbb{R}^3$ be unit vectors with $t$  uniformly distributed on the plane perpendicular to $z$. For a vector $v=\alpha z+\beta t$, $\langle v^k\rangle$ is given by:
\begin{eqnarray}
\langle v^k\rangle_z&=&\sum_{l=0}^{\lfloor k/2\rfloor}\sum_{m=0}^l\alpha^{k-2l}\beta^{2l}\frac{(2m)!(2l-2m)!}{4^ll!m!(l-m)!}\notag\\ &&\times\mathscr{S}^{k}_{2m,2l-2m,k-2l}(x,y,z)\notag
\end{eqnarray}
with $x$ and $y$ being a pair of orthonormal vectors perpendicular to $z$.
\end{prop}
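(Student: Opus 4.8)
The plan is to reduce everything to the scalar angular integrals already evaluated in Proposition~\ref{prop:isotropic}. First I would parametrize the uniformly rotating vector: writing $t=\cos\theta\,x+\sin\theta\,y$ with $\theta$ uniform on $[0,2\pi)$, the vector $v=\alpha z+\beta t$ becomes a fixed linear combination of the orthonormal triad $(x,y,z)$ with coefficients $\beta\cos\theta$, $\beta\sin\theta$, and $\alpha$ respectively. The expectation $\langle v^k\rangle_z$ is then simply the average of the $k$-adic power $v^k$ over $\theta$.

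The key step is to expand this $k$-adic power in terms of the symmetrized tensor polynomials of Definition~\ref{defn:symmetrized_tensor_polynom}. For any vector written in a basis as $\sum_\ell c_\ell w_\ell$, its $k$-adic power satisfies
\begin{eqnarray}
\left(\sum_{\ell}c_\ell w_\ell\right)^k &=& \sum_{p+q+s=k}c_1^{p}c_2^{q}c_3^{s}\,\mathscr{S}^k_{p,q,s}(w_1,w_2,w_3)\notag
\end{eqnarray}
because $\mathscr{S}^k_{p,q,s}$ already collects every distinct ordering of the factors and therefore carries the multinomial multiplicity $\binom{k}{p,q,s}$ implicitly. This identity follows directly from the definition of $\mathscr{S}$---grouping the expanded terms of $(\sum_\ell c_\ell w_\ell)^k$ according to how many factors equal each $w_\ell$---or by induction on $k$ using the recursion established earlier for $\mathscr{S}$. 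Applying it with $(c_1,c_2,c_3)=(\beta\cos\theta,\beta\sin\theta,\alpha)$ gives
\begin{eqnarray}
v^k &=& \sum_{p+q+s=k}(\beta\cos\theta)^{p}(\beta\sin\theta)^{q}\alpha^{s}\,\mathscr{S}^k_{p,q,s}(x,y,z).\notag
\end{eqnarray}

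Next I would average term by term over $\theta$. Pulling out the constant tensors and the factors $\alpha^{s}\beta^{p+q}$, the only surviving object is the scalar integral $\tfrac{1}{2\pi}\int_0^{2\pi}\cos^{p}\theta\sin^{q}\theta\,d\theta$, which vanishes unless both $p$ and $q$ are even, since any odd power integrates to zero over a full period. Setting $p=2m$ and $q=2l-2m$ forces $s=k-2l$, so that $l$ runs from $0$ to $\lfloor k/2\rfloor$ and $m$ from $0$ to $l$, with the surviving weight $\alpha^{k-2l}\beta^{2l}$ appearing exactly as claimed. Finally I would substitute the value of $\tfrac{1}{2\pi}I_{2m,2l-2m}$ obtained in the proof of Proposition~\ref{prop:isotropic} via Eq.~\eqref{eq:app:int_I_2m_2n}, namely $(2m)!(2l-2m)!/\big(4^{l}l!\,m!\,(l-m)!\big)$, and collect the terms to recover the stated formula.

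The computation is otherwise entirely routine once the expansion is in place; the only point requiring care is the bookkeeping in the multinomial step---specifically, recognizing that the combinatorial factor $\binom{k}{p,q,s}$ is absorbed into $\mathscr{S}^k_{p,q,s}$ rather than appearing explicitly---together with matching the resulting angular integral to the normalization already computed for the two-dimensional isotropic average in Proposition~\ref{prop:isotropic}.
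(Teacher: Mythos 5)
Your proof is correct and takes essentially the same route as the paper's: the paper's one-line proof is precisely to expand $v^k$ in powers of $t$ and $z$ and invoke case 2 of Proposition~\ref{prop:isotropic}, whose own derivation rests on the same integrals you evaluate via Eq.~(\ref{eq:app:int_I_2m_2n}). The only cosmetic difference is that you perform a flat trinomial expansion in $(x,y,z)$ after parametrizing $t=\cos\theta\,x+\sin\theta\,y$ and integrate directly, whereas the paper expands binomially in $(t,z)$ and applies the two-dimensional isotropic average as a black box; the combinatorics and the surviving angular integrals are identical.
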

\begin{proof}
Expand $v^k$ in terms of $t$ and $z$ and use case 2 of Proposition~\ref{prop:isotropic} to complete the proof.
\end{proof}

\begin{table}
	\caption{\label{table:symm:poly}List of $\mathscr{S}_{a_1,a_2,a_3}^{a}(x,y,z)$'s for $a<4$ and $x,y,z\in\mathbb{R}^3$.}
	\begin{tabular}{ccccl}
		\hline\hline
		$a$~~&~~$a_1$~~&~~$a_2$~~&~~$a_3$~~~~~&$\mathscr{S}_{a_1,a_2,a_3}^a(x,y,z)$\\
		\hline
		$0$ & $0$ & $0$ & $0$ & $1$ \\
		$1$ & $1$ & $0$ & $0$ & $x$ \\
		$1$ & $0$ & $1$ & $0$ & $y$ \\
		$1$ & $0$ & $0$ & $1$ & $z$ \\
		$2$ & $2$ & $0$ & $0$ & $xx$ \\
		$2$ & $0$ & $2$ & $0$ & $yy$ \\
		$2$ & $0$ & $0$ & $2$ & $zz$ \\
		$2$ & $1$ & $1$ & $0$ & $xy+yx$ \\
		$2$ & $1$ & $0$ & $1$ & $xz+zx$ \\
		$2$ & $0$ & $1$ & $1$ & $yz+zy$ \\
		$3$ & $3$ & $0$ & $0$ & $xxx$ \\
		$3$ & $0$ & $3$ & $0$ & $yyy$ \\
		$3$ & $0$ & $0$ & $3$ & $zzz$ \\
		$3$ & $2$ & $1$ & $0$ & $xxy+xyx+yxx$\\
		$3$ & $2$ & $0$ & $1$ & $xxz+xzx+zxx$\\
		$3$ & $1$ & $2$ & $0$ & $xyy+yxy+yyx$ \\
		$3$ & $1$ & $0$ & $2$ & $xzz+zxz+zzx$\\
		$3$ & $0$ & $2$ & $1$ & $yyz+yzy+zyy$\\
		$3$ & $0$ & $1$ & $2$ & $yzz+zyz+zzy$\\
		$3$ & $1$ & $1$ & $1$ & $xyz+xzy+yxz+yzx+zxy+zyx$ \\
		\hline
	\end{tabular}
\end{table}

\begin{table}
	\caption{\label{tab:isotropic}$\langle{t}^{2k}\rangle_{\text{isotropic}}$ for a few values of $k$. Here $S^{ijkl}=\delta^{ij}\delta^{kl}+\delta^{ik}\delta^{jl}+\delta^{il}\delta^{jk}$ and $T^{ijklmn}=\delta^{ij}S^{klmn}+\delta^{ik}S^{jlmn}+\delta^{il}S^{jkmn}+\delta^{im}S^{jkln}+\delta^{in}S^{jklm}$.}
	\begin{tabular}{ccc}
	\hline\hline
	$k$ & $\langle{t}^{2k}\rangle_{\text{isotropic}}^{\mathbb{R}^2}$ & $\langle{t}^{2k}\rangle_{\text{isotropic}}^{\mathbb{R}^3}$\\
	\hline
	$1$ & $\frac12\delta^{ij}$ & $\frac13\delta^{ij}$ \\
	$2$ & $\frac18S^{ijkl}$ & $\frac1{15}S^{ijkl}$\\
	$3$ & $\frac1{48}T^{ijklmn}$ & $\frac1{105}T^{ijklmn}$\\
	\hline
	\end{tabular}
\end{table}

%%%%%%%%%%%%%%%%%%%%%%%%%%%%%%%%%%%%%%%%
%%%%%%%%%%%%%%%%%%%%%%%%%%%%%%%%%%%%%%%%
%%%%%%%%%% A X I A L   N E M A T I C S                     %%%%%%%%%%
%%%%%%%%%%%%%%%%%%%%%%%%%%%%%%%%%%%%%%%%
%%%%%%%%%%%%%%%%%%%%%%%%%%%%%%%%%%%%%%%%

\subsubsection{Uniaxial Nematics\label{section:nematic}}
Rodlike molecules or nanoparticles can assemble into a rotationally ordered phase known as the \emph{uniaxial nematic} phase in which the rotation axes of all particles are on average aligned to a common vector called a \emph{director}~\cite{Onsager1949,EppengaFrenkel1984,BatesFrenkel1998}. As explained in Section~\ref{subsubsection:CinfDinf}, $\mathcal{H}_2(\{\pm z_i\})=2z_i^2$ is the proper SOC for a rod. As for the uniaxial nematic phase, it is fully specified by $u$, the director, and henceforth, $\Omega=\{u\}$. In a perfect nematic phase, all particles will align along the same director. This perfectly fits into the class of structures described in Example~\ref{example:complete_match_structure}. We therefore have:
\begin{eqnarray}
M^{ij} &=& \frac12\mathcal{H}_2^{ij}-\frac12\mathcal{H}_{2,I}^{ij} = z^iz^j-\frac13\delta^{ij}\notag\\
\mathfrak{M}_{\Omega}^{ij} &=& u^iu^j-\frac13\delta^{ij}\notag\\
\overline{M}^{ij} &=& \frac1N\sum_{p=1}^Nz_p^iz_p^j-\frac13\delta^{ij}\notag
\end{eqnarray}
Since $\norm{\mathfrak{M}_{\Omega}}_F$ is constant, we can use the optimization problem (\ref{eq:OOP_maximize}) which takes the form:
\begin{eqnarray}
	\begin{array}{ll}
		\max & u^T\overline{M}u\\
		\text{subject to} & u^Tu=1
	\end{array}
\end{eqnarray} 
which can be solved by using Lagrange multipliers:
\begin{eqnarray}
\mathcal{L}(u,\lambda) &=& u^T\overline{M}u-\lambda(u^Tu-1)\notag\\
\nabla_u\mathcal{L} &=& 2\overline{M}u-2\lambda u = 0 \implies (\overline{M}-\lambda I)u=0\notag
\end{eqnarray}
This implies that $u$ should be an eigenvector of $\overline{M}$. The largest eigenvalue of $\overline{M}$ , $\lambda_1$ maximizes $u^T\overline{M}u$. The scalar order parameter is given by:
\begin{eqnarray}
	\kappa_{\text{nematic}} &=& 1-\tfrac{3}{2}\left[\left(\lambda_1-\tfrac23\right)^2+\left(\lambda_2+\tfrac13\right)^2+\left(\lambda_3+\tfrac13\right)^2\right]\notag\\&&
\end{eqnarray}
with $\lambda_1\ge\lambda_2\ge\lambda_3$ being the eigenvalues of $\overline{M}$. This formula penalizes any divergence of $\lambda_i$'s from their 'optimal' values of $(2/3,-1/3,-1/3)$ in a perfect nematics. 

\begin{figure}
	\begin{center}
		\includegraphics[width=.4\textwidth]{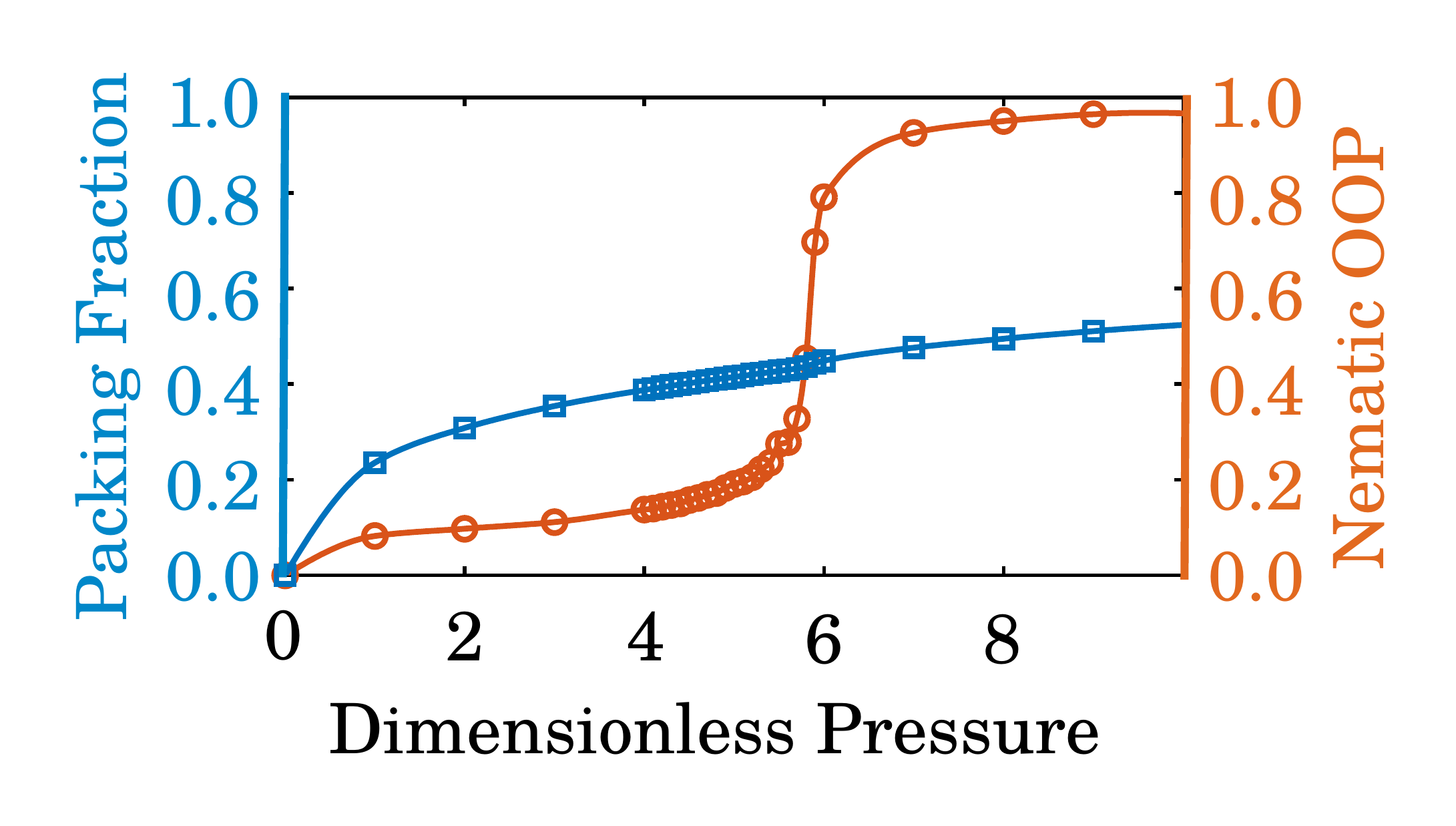}
		\caption{\label{fig:ellipsoid_OOP} Nematic OOP (circles) and packing fraction (squares) as a function of dimensionless pressure for a system of $512$ hard ellipsoids.}
	\end{center}
\end{figure}

As a numerical example, we conduct isothermal isobaric Monte Carlo simulations of a system of $512$ hard ellipsoids with an aspect ratio $a/b=4$. These highly elongated building blocks are known to assemble into the nematic liquid crystal at moderate pressures~\cite{RaduSchilling2009}. Each MC step consists of 512 trial particle sweeps (translation and rotation with equal probability), and one box rescale move, on average. The dimensionless pressure is defined based on the shortest axis of an ellipsoid that is chosen as the length scale. Snapshots are stored every $20,\!000$ MC steps and the nematic order parameter is computed for each configuration. The calculated nematic OOPs are depicted in Fig.~\ref{fig:ellipsoid_OOP}. A pronounced increase in OOP is observed at around $P^*=PV/k_BT\approx5.9$, even though the change in packing fraction is not as pronounced.  

%%%%%%%%%%%%%%%%%%%%%%%%%%%%%%%%%%%%%%%%
%%%%%%%%%%%%%%%%%%%%%%%%%%%%%%%%%%%%%%%%
%%%%%%%%%% C U B A T I C    P H A S E                       %%%%%%%%%%
%%%%%%%%%%%%%%%%%%%%%%%%%%%%%%%%%%%%%%%%
%%%%%%%%%%%%%%%%%%%%%%%%%%%%%%%%%%%%%%%%

\subsubsection{Cubatic Phase}
At sufficiently large densities, hard cubes can assemble into the \emph{cubatic phase}; a structure in which all particles assume the same orientation, while no long-range translational order exists~\cite{EscobedoNatureMaterials2011}. As explained in Section~\ref{section:SOC:octahedral}, $\mathcal{H}_4(\{\pm x,\pm y,\pm z\})$ is the SOC of a cube. The cubatic phase can  be described by $\Omega=\{\pm v_p\}_{p=1}^3$, with $v_p$'s being three orthonormal vectors, and $M$, $\mathfrak{M}_{\Omega}$ and $\overline{M}$ can be defined as:
\begin{eqnarray}
M^{ijkl} &=& 2\sum_{p=1}^3u_p^iu_p^ju_p^ku_p^l-\frac25\left(\delta^{ij}\delta^{kl}+\delta^{ik}\delta^{jl}+\delta^{il}\delta^{jk}\right)\notag\\
\mathfrak{M}_{\Omega} &=& 2\sum_{p=1}^3v_p^iv_p^jv_p^kv_p^l-\frac25\left(\delta^{ij}\delta^{kl}+\delta^{ik}\delta^{jl}+\delta^{il}\delta^{jk}\right)\notag\\
\overline{M}^{ijkl} &=& \frac2N\sum_{p=1}^3\sum_{q=1}^Nu_{p,q}^iu_{p,q}^ju_{p,q}^ku_{p,q}^l-\frac25\left(\delta^{ij}\delta^{kl}+\delta^{ik}\delta^{jl}+\delta^{il}\delta^{jk}\right)
\end{eqnarray}
Here, $u_{p,q}$'s correspond to the orthogonal vectors describing the orientation of particle $q$. Since $\mathfrak{M}_{\Omega}$ is constant, the following optimization problem can be solved for quantifying cubatic order:
\begin{eqnarray}
	\begin{array}{lll}
		\max & \overline{M}^{ijkl}\sum_{p=1}^3v_p^iv_p^jv_p^kv_p^l\\
		\text{subject to} & v_p^iv_q^j = \delta_{pq} & p,q=1,2,3
	\end{array}\label{eq:cubatic_opt}
\end{eqnarray}
The constraints of (\ref{eq:cubatic_opt}) can, however, be equivalently reformulated as $\sum_{p=1}^3v_pv_p^T=I$. (Multiply both sides by $v_q$ and use the linear independence of $v_p$'s to conclude that $v_p^Tv_q=\delta_{pq}$.) The Lagrangian of the optimization problem is therefore given by:
\begin{eqnarray}
\mathcal{L} &=& \overline{M}^{ijkl}\sum_{p=1}^3v_p^iv_p^jv_p^kv_p^l - \sum_{i,j=1}^3\mu_{ij}\left[\sum_{p=1}^3v_p^iv_p^j-\delta^{ij}\right]\notag
\end{eqnarray}
with $\mu_{ij}=\mu_{ji}$. The derivative of the Lagrangian is given by:
\begin{eqnarray}
\frac{\partial\mathcal{L}}{\partial v_q^s} &=& \overline{M}^{sjkl}v_q^jv_q^kv_q^l+\overline{M}^{iskl}v_q^iv_q^kv_q^l+\overline{M}^{ijsl}v_q^iv_q^jv_q^l+\overline{M}^{ijks}v_q^iv_q^jv_q^k-\mu_{sj}v_q^j-\mu_{is}v_q^i\notag\\
&\overset{(a)}{=}& 4\overline{M}^{sijk}v_q^iv_q^jv_q^k-2\mu_{si}v_q^i\notag\\
\nabla_q\mathcal{L} &=& b_q-Cv_q
\end{eqnarray}
with:
\begin{eqnarray}
b_q = 4\left(
	\begin{matrix}
		\overline{M}^{1ijk}v_q^iv_q^jv_q^k\\
		\overline{M}^{2ijk}v_q^iv_q^jv_q^k\\
		\overline{M}^{3ijk}v_q^iv_q^jv_q^k
	\end{matrix}
\right),~C =2\left(
	\begin{matrix}
		\mu_{11} & \mu_{12} & \mu_{13}\\
		\mu_{21} & \mu_{22} & \mu_{23}\\
		\mu_{31} & \mu_{32} & \mu_{33}
	\end{matrix}
\right)
\end{eqnarray}
Note that $(a)$ follows from the invariance of  $\overline{M}$ under index permutation. $\nabla_q\mathcal{L}=0$ implies that $b_q=Cv_q$. Multiplying both sides by $v_q$ and summing over $q$ yields:
\begin{eqnarray}
	\sum_{q=1}^3b_qv_q^T &=& C\sum_{q=1}^3v_qv_q^T=C
\end{eqnarray}
Note that $C$ is not symmetric for an arbitrary set of orthogonal $v_q$'s and its symmetry is achieved when $v_q$'s are amongst the Karush-Kuhn-Tucker (KKT) solutions of (\ref{eq:cubatic_opt}). The global maximum of (\ref{eq:cubatic_opt}) can only be obtained if \emph{all} such KKT solutions are identified. We achieve this by adopting a Newton-Raphson scheme that is described below, and by performing a sufficient number of attempts using different initial guesses. First, we define $\zeta_{st}:=2(\mu_{st}-\mu_{ts})$ for $s\neq t$ and observe that:
\begin{eqnarray}
\frac{\partial\zeta_{st}}{\partial v_p^m} &=& 12\left(\overline{M}^{smij}v_p^iv_p^jv_p^t-\overline{M}^{tmij}v_p^iv_p^jv_p^s \right)+4\left(b_p^s\delta_{tm}-b_p^t\delta_{sm}\right)
\end{eqnarray}
We can therefore approximate $\zeta_{st}$ as
\begin{eqnarray}
\zeta_{st} &\approx& \zeta_{st,0}+\sum_{p=1}^3(\nabla_p\zeta_{st})^T_0(v_p-v_{p,0})+\cdots
\end{eqnarray}
The Newton-Raphson iteration can therefore be carried out by simultaneously solving the three equations given by $\zeta_{st}=0$ and the six equations ensuring the orthonormality of $v_q$'s, i.e.,~$\sum_qv_q^iv_q^j=\delta^{ij}$. We can, however, decrease the number of equations from nine to four by using quaternions, which are widely used in simulations of non-spherical particles. Unit quaternions are used to describe rigid-body rotations in three dimensions. Rotating a rigid body $\mathcal{R}$ using a unit quaternion $\textbf{q}=(q_1,q_2,q_3,q_4)$ maps every vector $v^{\circ}\in\mathcal{R}$ to $v=R(\textbf{q})v^{\circ}$. The rotation matrix $R(\textbf{q})$ is given by: 
\begin{eqnarray}
\left[\begin{array}{ccc}
        q_1^2+q_2^2-q_3^2-q_4^2 & 2(q_2q_3-q_1q_4) & 2(q_1q_3+q_2q_4) \\
        2(q_1q_4+q_2q_3) & q_1^2-q_2^2+q_3^2-q_4^2 & 2(q_3q_4-q_1q_2)\\
        2(q_2q_4-q_1q_3) & 2(q_1q_2+q_3q_4) & q_1^2-q_2^2-q_3^2+q_4^2
        \end{array}\right]\notag
\end{eqnarray}
One can thus consider $v_p$'s- and $\zeta_{st}$'s- as implicit functions $\textbf{q}$.  More specifically we have:
\begin{eqnarray}
\frac{\partial\zeta_{st}}{\partial q_u} &=& \sum_{p,m=1}^3\frac{\partial\zeta_{st}}{\partial v_p^m}\frac{\partial v_p^m}{\partial q_u}=\sum_{p=1}^3\left(\nabla_p\zeta_{st}\right)^T\frac{\partial v_p}{\partial q_u}\notag= \sum_{p=1}^3\left(\nabla_p\zeta_{st}\right)^T\frac{\partial R(\textbf{q})}{\partial q_u}v_p^{\circ}\notag
\end{eqnarray}
where $v_p^\circ$ correspond to a set of orthonormal vectors corresponding to $\textbf{q}=1$. In order to solve $C=C^T$ under the orthonormality constraint, one can solve the following four equations denoted by $\textbf{f}(\textbf{q})=\textbf{0}$:
\begin{eqnarray}
\textbf{f}(\textbf{q}) &=& \left(
		\begin{matrix}
			\zeta_{12}\\
			\zeta_{13}\\
			\zeta_{23}\\
			q_1^2+q_2^2+q_3^2+q_4^2-1
		\end{matrix}
\right)
\end{eqnarray}
The Newton-Raphson iteration can therefore be carried out using the following formula:
\begin{eqnarray}
\textbf{q}_{n+1} = \textbf{q}_n - \Phi(\textbf{q}_n)^{-1}\textbf{f}(\textbf{q}_n)\label{eq:cubatic:quatNewton}
\end{eqnarray}
with:
\begin{eqnarray}
\Phi(\textbf{q}) &=& \left(
	\begin{matrix}
		{\partial\zeta_{12}}/{\partial q_1} &
		{\partial\zeta_{12}}/{\partial q_2} &
		{\partial\zeta_{12}}/{\partial q_3} &
		{\partial\zeta_{12}}/{\partial q_4} \\
		{\partial\zeta_{13}}/{\partial q_1} &
		{\partial\zeta_{13}}/{\partial q_2} &
		{\partial\zeta_{13}}/{\partial q_3} &
		{\partial\zeta_{13}}/{\partial q_4} \\
		{\partial\zeta_{23}}/{\partial q_1} &
		{\partial\zeta_{23}}/{\partial q_2} &
		{\partial\zeta_{23}}/{\partial q_3} &
		{\partial\zeta_{23}}/{\partial q_4} \\
		2q_1 & 2q_2 & 2q_3 & 2q_4
	\end{matrix}
\right)\notag
\end{eqnarray}
Iteration (\ref{eq:cubatic:quatNewton}) can be carried out for a number of randomly-selected unit quaternions as initial guesses. Once the global maximum is attained, Eq~(\ref{eq:scalar_OP}) can be used to calculate the scalar cubatic order parameter. Neither the accuracy nor the convergence rate depends on the selection of $v_p^\circ$ as they are only benign parameters of $\textbf{f}(\textbf{q})$. 

\begin{figure}
	\begin{center}
		\includegraphics[width=.5\textwidth]{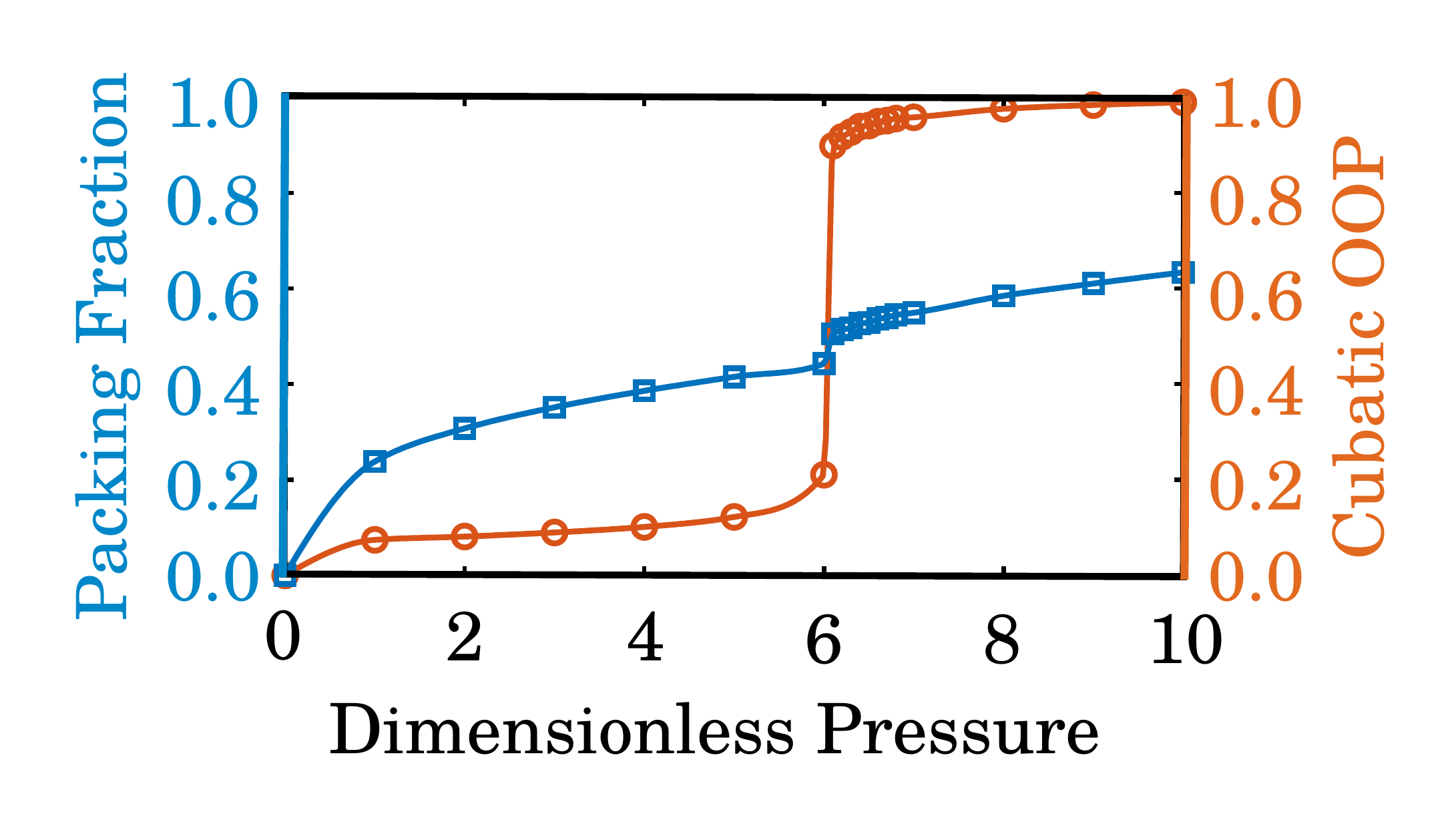}
		\caption{\label{fig:cubatic_OOP} Cubatic OOP (circles) and packing fraction (esquires) vs.~dimensionless pressure for a system of $512$ hard cubes.}
	\end{center}
\end{figure}

As a numerical example, we conduct isothermal isobaric Monte Carlo simulations of a system of $512$ hard cubes in the range of pressures at which the cubatic phase is known to exist. Technical specifications of the MC simulations are identical to what we  discussed earlier for the hard ellipsoid system. The dimensionless pressure is computed using the edge length of a cube as the length scale. In order to increase the numerical efficiency of the Newton-Raphson iterations, backtracking is also utilized. Fig.~\ref{fig:cubatic_OOP} depicts packing fraction and cubatic OOP as a function of dimensionless pressure.   The isotropic-to-cubatic transition occurs at $P^*=PV/kT\approx6.1$ and is marked by pronounced jumps in both the packing fraction and the cubatic OOP.

%%%%%%%%%%%%%%%%%%%%%%%%%%%%%%%%%%%%%%%%
%%%%%%%%%%%%%%%%%%%%%%%%%%%%%%%%%%%%%%%%
%%%%%%%%%% TETRATIC         P H A S E  %%%%%%%%%
%%%%%%%%%%%%%%%%%%%%%%%%%%%%%%%%%%%%%%%%
%%%%%%%%%%%%%%%%%%%%%%%%%%%%%%%%%%%%%%%%

\subsubsection{Tetratic  Phase}
Similar to the cubatic phase, hard squares that are the two-dimensional equivalents of hard cubes can form a rotationally ordered phase with $C_4$ symmetry, known as the tetratic phase,  at sufficiently high pressures~\cite{FiakowskiPhysicaA2000, WojciechowskiCompMetSciTechnol2004}.  As explained in Section~\ref{section:SOC:CnDn}, $\mathcal{H}_4(\{\pm x,\pm y\})$ is the SOC for the $C_4$ symmetry. The tetratic phase can  be described by $\Omega=\{\pm v_p\}_{p=1}^2$, with $v_1$ and $v_2$ being two orthonormal vectors, and $M$, $\mathfrak{M}_{\Omega}$ and $\overline{M}$ can be defined as:
\begin{eqnarray}
M^{ijkl} &=& 2\sum_{p=1}^2u_p^iu_p^ju_p^ku_p^l-\frac12\left(\delta^{ij}\delta^{kl}+\delta^{ik}\delta^{jl}+\delta^{il}\delta^{jk}\right)\notag\\
\mathfrak{M}_{\Omega} &=& 2\sum_{p=1}^2v_p^iv_p^jv_p^kv_p^l-\frac12\left(\delta^{ij}\delta^{kl}+\delta^{ik}\delta^{jl}+\delta^{il}\delta^{jk}\right)\notag\\
\overline{M}^{ijkl} &=& \frac2N\sum_{p=1}^2\sum_{q=1}^Nu_{p,q}^iu_{p,q}^ju_{p,q}^ku_{p,q}^l-\frac12\left(\delta^{ij}\delta^{kl}+\delta^{ik}\delta^{jl}+\delta^{il}\delta^{jk}\right)
\end{eqnarray}
Here, $u_{p,q}$'s are the corresponding orthogonal vectors describing the orientation of particle $q$. In order to identify the $\mathfrak{M}$ that best matches $\overline{M}$, one can solve the following two-dimensional equivalentt of the optimization problem given by Eq.~(\ref{eq:cubatic_opt}):
\begin{eqnarray}
	\begin{array}{lll}
		\max & \overline{M}^{ijkl}\sum_{p=1}^2v_p^iv_p^jv_p^kv_p^l\\
		\text{subject to} & v_p^iv_q^j = \delta_{pq} & p,q=1,2
	\end{array}\label{eq:C4_opt}
\end{eqnarray}
with the solution given by $\zeta_{12}=\mu_{12}-\mu_{21}=0$. Numerical estimation of the Karush-Kuhn-Tucker solutions of~(\ref{eq:C4_opt}) can be performed using a similar Newton-Raphson scheme, based on $\theta$, the polar angle by noting that:
\begin{eqnarray}
\frac{d\zeta_{12}}{d\theta} &=& \sum_{p=1}^2 \left(\nabla_p\zeta_{12}\right)^T\frac{dR}{d\theta}v_p^{\circ}
\end{eqnarray}
with:
\begin{eqnarray}
R(\theta) &=& \left(
	\begin{matrix}
		\cos\theta & \sin\theta \\
		-\sin\theta & \cos\theta 
	\end{matrix}
\right)
\end{eqnarray}

\begin{figure}
	\begin{center}
		\includegraphics[width=.5\textwidth]{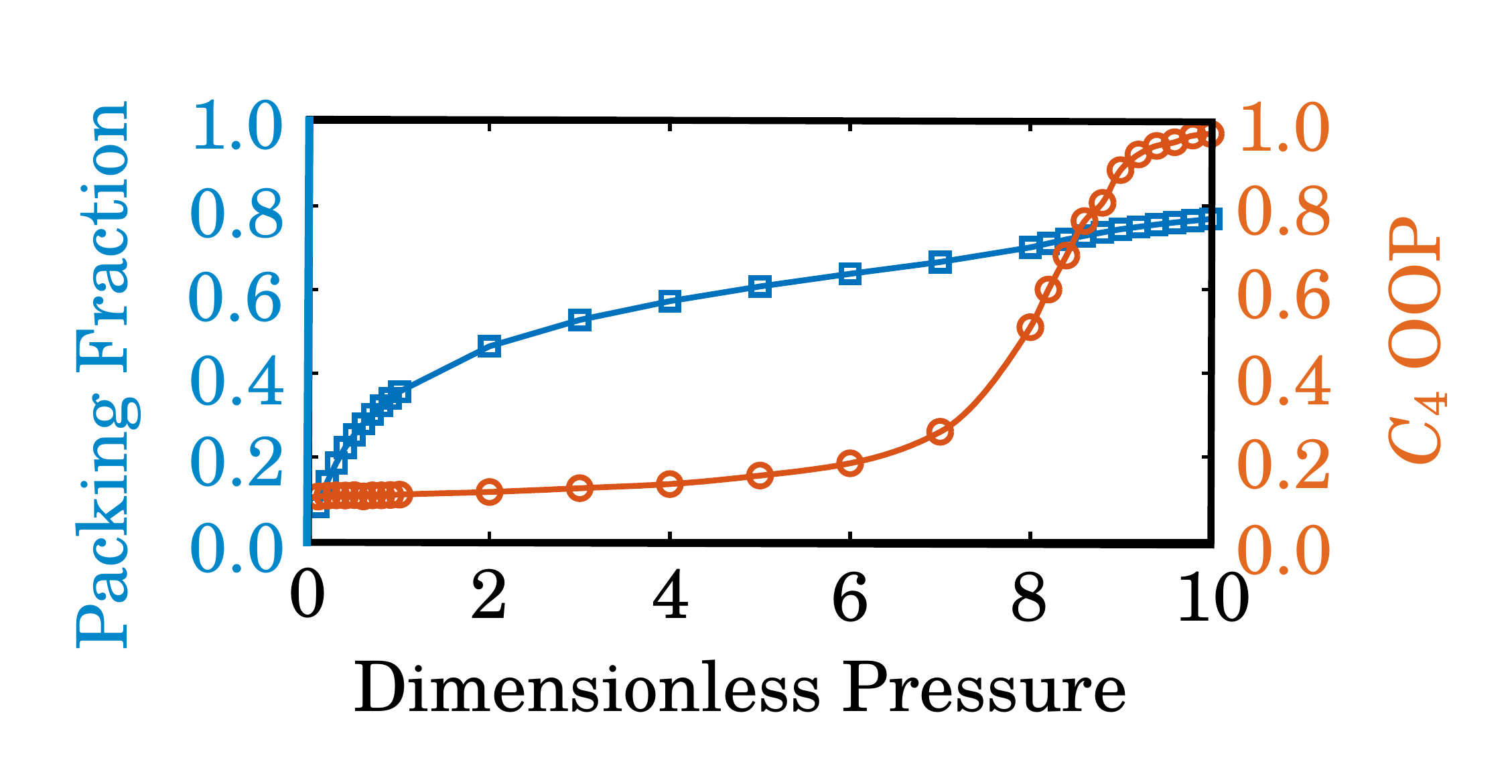}
		\caption{\label{fig:C4_OOP} $C_4$ OOP (cirlces) and packing fraction (squares) vs.~pressure for a system of $256$ hard squares.}
	\end{center}
\end{figure}

As a numerical example, we conduct isothermal isobaric Monte Carlo simulations of a system of $256$ hard squares in the range of pressures at which the tetratic phase emerges The simulation details, including what constitutes an MC step, is identical to what was discussed in the hard ellipsoid and the hard cube systems. Dimensionless pressure is computed from the edge length of a square as the length scale. Simulations are performed for $10^8$ MC cycle, and snapshots are gathered every $5\times10^4$ MC cycles.  Fig.~\ref{fig:C4_OOP} depicts the packing fraction and the tetratic OOP as a function of pressure. The isotropic-to-transition occurs gradually for $P^*=PA/kT\approx8.0$ and is not as pronounced as the transitions observed in three-dimensional systems. 

%%%%%%%%%%%%%%%%%%%%%%%%%%%%%%%%%%%%%%%%
%%%%%%%%%%%%%%%%%%%%%%%%%%%%%%%%%%%%%%%%
%%%%%%%%%% T E T R A H E D R A L   N E M A T I C S %%%%%%%%%
%%%%%%%%%%%%%%%%%%%%%%%%%%%%%%%%%%%%%%%%
%%%%%%%%%%%%%%%%%%%%%%%%%%%%%%%%%%%%%%%%

\subsubsection{Tetrahedral Nematics}
Under external fields, regular tetrahedra are capable of forming a structure in which all particles take the same orientation. The possibility and the thermodynamic nature of such a transition has been thoroughly discussed in theoretical works of liquid crystals~\cite{Fel1995PhysRevE}. However, such a transition has yet to be realized experimentally. The arising orientationally ordered structure is typically referred to as the \emph{tetrahedral nematics} phase, a structure that geometrically falls into the general category of structures outlined in Example~\ref{example:complete_match_structure}. According to Section~\ref{section:SOC:tetrahedral}, $\mathcal{H}_3$ is the SOC for the tetrahedral rotation group. For the tetrahedral nematic phase, $\Omega=\{u_p\}_{p=1}^4$ is the characteristic orbit of the reference tetrahedron with which the individual tetrahedra align. $M$, $\mathfrak{M}_{\Omega}$ and $\overline{M}$ are thus defined as:
\begin{eqnarray}
M^{ijk} &=& \sum_{p=1}^4v_p^iv_p^jv_p^k\notag\\
\mathfrak{M}^{ijk}_{\Omega} &=& \sum_{p=1}^4u_p^iu_p^ju_p^k\notag\\
\overline{M}^{ijk} &=& \frac1N\sum_{l=1}^N\sum_{p=1}^3v_{p,l}^iv_{p,l}^jv_{p,l}^k
\end{eqnarray}
Here, $v_{p,q}$'s are the elements of the high-symmetry orbit for particle $q$. The following optimization problem is to be solved for quantifying the extent of tetrahedral order:
\begin{eqnarray}
	\begin{array}{lll}
		\max & \overline{M}^{ijk}\sum_{p=1}^4u_p^iu_p^ju_p^k\\
		\text{subject to} & v_p^iv_q^i = \frac43\delta_{pq}-\frac13 & p,q=1,2,3,4
	\end{array}\label{eq:tetra:optimization}
\end{eqnarray}
As for the cubatic phase, we can replace the constraints of (\ref{eq:tetra:optimization}) with $\sum_{p=1}^4u_p^iu_p^j=(4/3)\delta^{ij}$ and $\sum_{p=1}^4u_p^i=0$. (See Appendix~\ref{appendix:tetrahedral_nematic:constraint_equivalency} for details.) The Lagrangian and its derivatives are thus given by:
\begin{eqnarray}
	\mathcal{L} &=& \overline{M}^{ijk}\sum_{p=1}^4u_p^iu_p^ju_p^k-\sum_{p=1}^4\left[\lambda_iu_p^i-\mu_{ij}u_p^iu_p^j\right]\notag\\
	\nabla_q\mathcal{L} &=& b_q-c-Cu_q\notag
\end{eqnarray}
with:
\begin{eqnarray}
b_q &=& \left(
	\begin{matrix}
		3\overline{M}^{1jk}u_q^iu_q^j\\
		3\overline{M}^{2jk}u_q^iu_q^j\\
		3\overline{M}^{3jk}u_q^iu_q^j
	\end{matrix}
\right), c=\left(
	\begin{matrix}\lambda_1\\ \lambda_2\\ \lambda_3\end{matrix}
\right), C = \left(
	\begin{matrix}
		\mu_{11} & \mu_{12} & \mu_{13}\\
		\mu_{21} & \mu_{22} & \mu_{23}\\
		\mu_{31} & \mu_{32} & \mu_{33}
	\end{matrix}
\right)\notag
\end{eqnarray}
with $C=C^T$. 
Multiplying $\nabla_q\mathcal{L}$  by $u_q^T$ on the right and summing over $q$ yields:
\begin{eqnarray}
	 C &=& \frac34\sum_{q=1}^4 b_qu_q^T
\end{eqnarray}
The symmetry of $C$ can be enforced using a method similar to what was explained for the cubatic phase and the scalar order parameter can be calculated accordingly. 

%%%%%%%%%%%%%%%%%%%%%%%%%%%%%%%%%%%%%%%%
%%%%%%%%%%%%%%%%%%%%%%%%%%%%%%%%%%%%%%%%
%%% T E T R A H E D R A L   A X I A L   N E M A T I C S         %%%%%%%%%
%%%%%%%%%%%%%%%%%%%%%%%%%%%%%%%%%%%%%%%%
%%%%%%%%%%%%%%%%%%%%%%%%%%%%%%%%%%%%%%%%

\subsubsection{Tetrahedral Axial Nematics}
We call an arrangement of regular tetrahedra a \emph{tetrahedral axial nematic} if a specific axis of each particle (modulo symmetry operations) aligns with a common director on average. The formation of tetrahedral axial nematics has neither been observed in experiments, nor has it been suggested in theoretical studies. The discussion that follows is therefore only a geometrical illustration of the ideas presented in this paper. 

\begin{table*}
\caption{\label{table:uniaxial_OOP_info} Projection parameters and $\omega$ for face, edge and \textit{z} uniaxial nematics. }
\begin{tabular}{lccccccccccccccccccccc}
\hline\hline
Phase && $\alpha_1$ & $\beta_1$ && $\alpha_2$ & $\beta_2$ && $\alpha_3$ & $\beta_3$ && $\alpha_4$ & $\beta_4$ && $\sum_{i=1}^4\alpha_i^4$ && $\sum_{i=1}^4\alpha_i^2\beta_i^2$ && $\sum_{i=1}^4\beta_i^4$ && $\omega$ \\
\hline
Face Nematics && $1$ & $0$ && $-\frac13$ & $\frac{\sqrt{8}}3$ && $-\frac13$ & $\frac{\sqrt{8}}3$ && $-\frac13$ & $\frac{\sqrt{8}}3$ && $\frac{28}{27}$ && $\frac{8}{27}$ && $\frac{64}{27}$ && $\frac{28}{27}$ \\
Edge Nematics && $\frac{\sqrt{6}}3$ & $\frac{\sqrt{3}}3$ && $-\frac{\sqrt{6}}3$ & $\frac{\sqrt{3}}3$ && $0$ & $1$ && $0$ & $1$ && $\frac89$ && $\frac49$ && $\frac{20}{9}$ && $\frac7{18}$\\
\textit{z} nematics && $\frac{\sqrt{3}}3$ & $\frac{\sqrt{6}}3$ && $\frac{\sqrt{3}}3$ & $\frac{\sqrt{6}}3$ && $-\frac{\sqrt{3}}3$ & $\frac{\sqrt{6}}3$ && $-\frac{\sqrt{3}}3$ & $\frac{\sqrt{6}}3$ && $\frac49$ && $\frac89$ && $\frac{16}9$ && $-\frac{14}9$ \\
\hline
\end{tabular}
\end{table*}

Like the uniaxial nematic phase described in Section~\ref{section:nematic}, $\Omega=\{\pm z\}$ for a tetrahedral axial nematic. Here, we consider three distinct plausible types of alignments with $z$. In the face nematic phase, each tetrahedron has one of the vectors that is normal to its faces aligned with $z$. In the edge nematics, however, one edge of each tetrahedron is aligned with $z$. The third structure is what we call the $z$ nematics, and a vector that connects the centers of two non-adjacent edges aligns with $z$. Note that all these phases have inversion symmetry. Therefore,  $M$ cannot be an odd-ranked moment of $\mathcal{H}_3$. Instead we define $M$ as:
\begin{eqnarray}
M^{ijkl}(\mathcal{H}_3) &=& \sum_{p=1}^4v_p^iv_p^jv_p^kv_p^l - \frac4{15}\left(\delta^{ij}\delta^{kl}+\delta^{ik}\delta^{jl}+\delta^{il}\delta^{jk}\right)\notag\\
\overline{M}^{ijkl} &=& \frac1N\sum_{q=1}^N\sum_{p=1}^4v_{p,q}^iv_{p,q}^jv_{p,q}^kv_{p,q}^l - \frac4{15}\left(\delta^{ij}\delta^{kl}+\delta^{ik}\delta^{jl}+\delta^{il}\delta^{jk}\right)\notag
\end{eqnarray}
Note that $\mathcal{H}_4^{ijkl}=(3/4)\mathcal{H}_3^{ijm}\mathcal{H}_3^{klm}+(4/9)\delta^{ij}\delta^{kl}$, so $M$ is indeed a function of $\mathcal{H}_3$. For each of the three phases introduced above, $\mathfrak{M}_{\Omega}$ can be calculated by expressing the elements of the high-symmetry orbit $\{v_p\}_{p=1}^4$ as $v_p=\alpha_pz+\beta_pt_p$ with $t_p\perp z$, and using Proposition~\ref{prop:axial} to obtain:
\begin{eqnarray}
\mathfrak{M}_{\Omega}^{ijkl} &=& \sum_{p=1}^4\alpha_p^4z^iz^jz^kz^l + \frac18\sum_{p=1}^4\beta_p^4P^{ijkl}+ \frac12\sum_{p=1}^4\alpha_p^2\beta_p^2Q^{ijkl}-\frac4{15}(\delta^{ij}\delta^{kl}+\delta^{ik}\delta^{jl}+\delta^{il}\delta^{jk})
\end{eqnarray}
with $P$ and $Q$ given by:
\begin{eqnarray}
P^{ijkl} &=& 3(x^ix^jx^kx^l+y^iy^jy^ky^l) + x^ix^jy^ky^l + y^iy^jx^kx^l +x^iy^jy^kx^l+y^ix^jx^ky^l+x^iy^jx^ky^l+y^ix^jy^kx^l \\
Q^{ijkl} &=& x^ix^jz^kz^l+y^iy^jz^kz^l+z^iz^jx^kx^l+z^iz^jy^ky^l +z^ix^jx^kz^l+z^iy^jy^kz^l\notag\\&&+x^iz^jz^kx^l+y^iz^jz^ky^l+z^ix^jz^kx^l+z^iy^jz^ky^l+x^iz^jx^kz^l+y^iz^jy^kz^l 
\end{eqnarray}
Here, $x$ and $y$ are mutually orthogonal unit vectors that are also perpendicular to $z$. It can be easily observed that $P$ and $Q$ are invariant under orthogonal transformations that keep $z$ unchanged. Therefore, $\mathfrak{M}_{\Omega}$ clearly satisfies the properties outlined in Section \ref{subsection:OrderParameter}. The extent of order can therefore be quantified by solving the following associated optimization problem:
\begin{eqnarray}\label{eq:uniaxial_generic_optimization}
	\begin{array}{ll} 
	\max & \omega\overline{M}^{ijkl}z^iz^jz^kz^l\\ 
	\text{subject to} & z^iz^i = 1	
	\end{array}
\end{eqnarray}
where $\omega=\sum_{p=1}^4[\alpha_p^4+\frac38\beta_p^4-3\alpha_p^2\beta_p^2]$. The derivation details can be found in Appendix \ref{appendix:nematic_OOP_derive}. Table \ref{table:uniaxial_OOP_info} gives the corresponding $\alpha_p$'s, $\beta_p$'s and $\omega$'s for the face, edge and \textit{z} nematics. For the $z$ nematic phase, $\omega<0$, and solving (\ref{eq:uniaxial_generic_optimization}) reduces to \emph{minimizing} $\overline{M}^{ijkl}z^iz^jz^kz^l$ under the same constraints. The scalar order parameter can also be defined using (\ref{eq:scalar_OP}).

%%%%%%%%%%%%%%%%%%%%%%%%%%%%%%%%%%%%%%%%
%%%%%%%%%%%%%%%%%%%%%%%%%%%%%%%%%%%%%%%%
%%% O T H E R    A P P L I C A T I O N S                                %%%%%%%%%
%%%%%%%%%%%%%%%%%%%%%%%%%%%%%%%%%%%%%%%%
%%%%%%%%%%%%%%%%%%%%%%%%%%%%%%%%%%%%%%%%

\begin{figure}
\begin{center}
	\includegraphics[width=.4\textwidth]{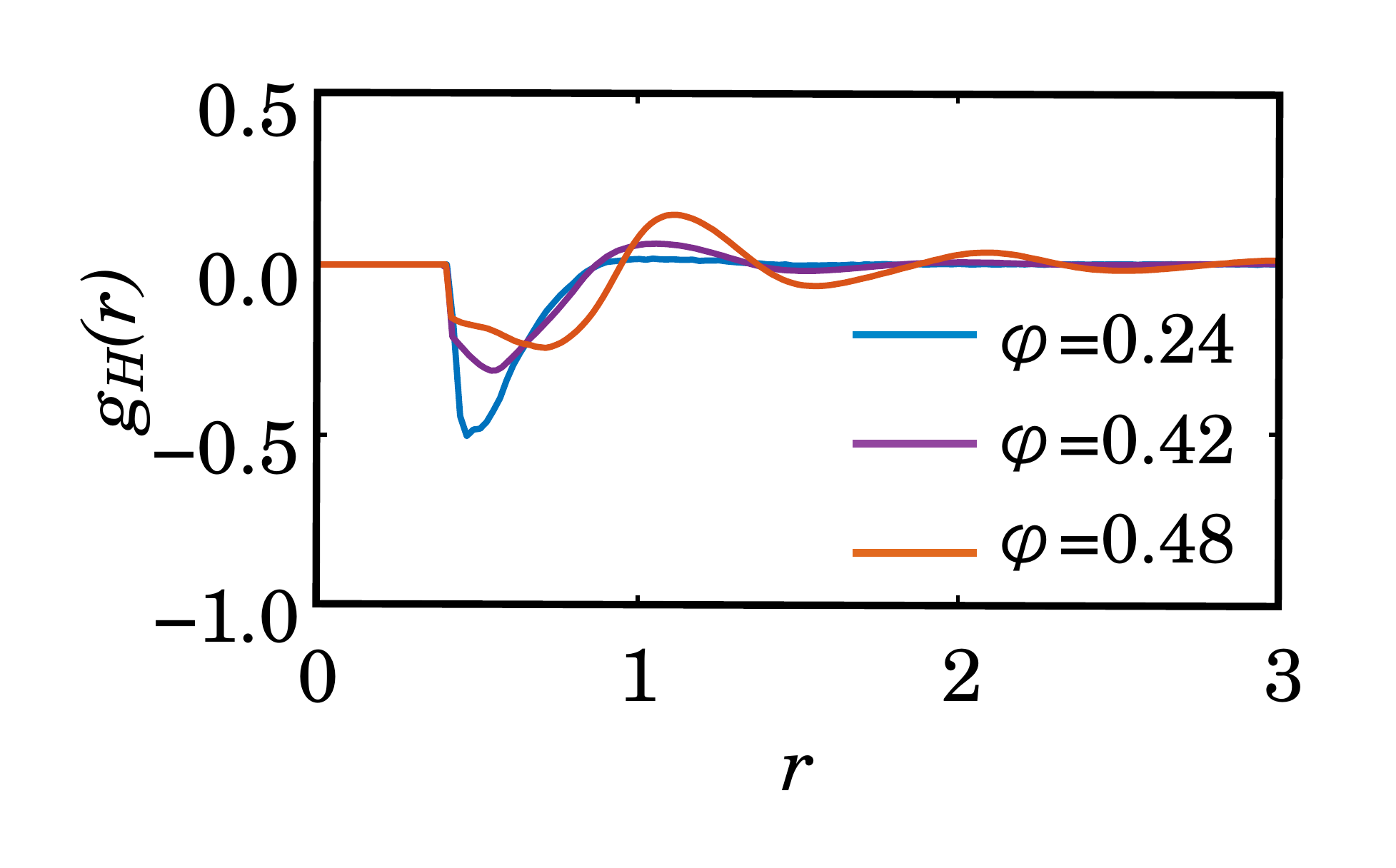}
	\caption{\label{fig:gHr}$g_H(r)$ calculated for a fluid of $4,\!096$ hard tetrahedra at different packing fractions.}
	\end{center}
\end{figure}

\begin{figure}
\begin{center}
	\includegraphics[width=.4\textwidth]{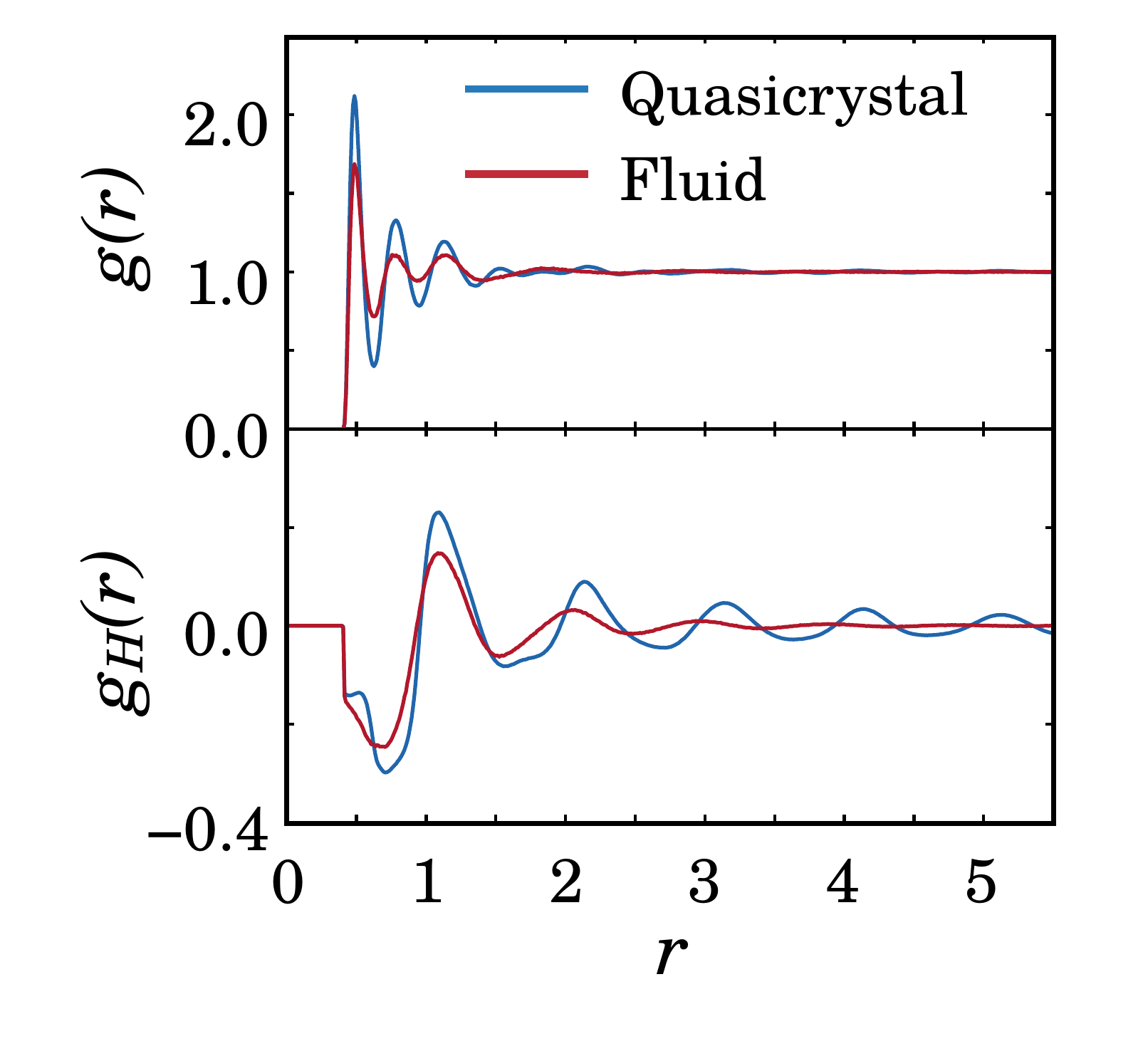}
	\caption{\label{fig:QC-vs-fluid} $g(r)$ and $g_H(r)$ calculated for a quasicrystal and a fluid of $8,\!000$ hard regular tetrahedra at $P\sigma^3/k_BT=64$, with $\sigma$  the edge length of a tetrahedron.}
	\end{center}
\end{figure}

\begin{table}
	\caption{\label{table:gHr}Expected values of $\mathcal{H}_{3,u}^{ijk}\mathcal{H}_{3,v}^{ijk}$ and $g_H(r)$ for different nearest neighbors shells in the hard tetrahedron system}
		\begin{tabular}{ccc}
			\hline\hline
				$n$th nearest neighbor~~~ & ~~~$\mathcal{H}_{3,u}^{ijk}\mathcal{H}_{3,v}^{ijk}$~~~ &~~~ $g_H(r)$~~~ \\
				\hline
				$1$ & $-\frac{32}{81}$ & $-\frac19$\\
				$2$ & $-\frac{2464}{2187}$ & $-\frac{77}{243}$ \\
				$3$ & $\frac{79072}{59049}$ & $\frac{2471}{6561}$ \\
				\hline\hline
		\end{tabular}
\end{table}

\section{Other applications of Strong Rotational Coordinates
\label{section:other_applications}
}
So far, we have only used SOCs to quantify global orientational order in systems of particles with non-trivial rotational symmetries. In this section, other potential applications of SOCs in computational studies of soft condensed matter are discussed.

\subsection{Time-averaged orientations
\label{subsection:tAvgOr}
}
The machinery of strong orientational coordinates can be used to calculate the average orientation of a particle in the course of a simulation. In general, this is done to get rid of thermal fluctuations, and can be considered the equivalent of determining the inherent structure in an energetic system~\cite{StillingerWeberJCP1985}. More precisely, if $M_p(t)$ is the SOC of particle $p$ at time $t$, one can maximize $\mathfrak{M}_p\odot\frac1T\int_0^TM_p(t)dt$ to solve for the best average orientation that matches the trajectory of $p$. This problem also fits into the general class of problems outlined in Example~\ref{example:complete_match_structure}.

%\subsection{Frenkel-Ladd Integration}

\subsection{Spatial Correlation of Local Rotational Order
\label{subsection:spatialCorrelation}
}
There are different  ways of quantifying spatial orientational correlations, with different levels of generality~\cite{GregPNAS2014}.  The SOCs derived in this work offer a systematic way of quantifying orientational correlations in theoretical and computational studies of building blocks with nontrivial symmetries. We explain this through an example, i.e.,~the different phases formed by hard tetrahedra. At low packing fractions, hard tetrahedra form a simple fluid in which the orientations of different particles are not correlated. This, however, changes as the packing fraction increases, and a networked structure emerges in which all neighboring particles are in  face-to-face configurations. At sufficiently high densities, this network transforms into a dodecagonal quasicrystal~\cite{HajiAkbariEtAl2009}. In both the complex fluid and the quasicrystal, the neighboring tetrahedra are in face-to-face contact, a fact quantitatively characterized in earlier studies~\cite{GregPNAS2014}. The SOCs derived in this work can, however, allow us to get further information about the transformation of the simple fluid into the complex fluid vis-a-vis the angular distribution of the neighboring face-to-face tetrahedra. It can also be used for detecting and quantifying long-range orientational coherence in these networks.

It was shown in Section~\ref{section:SOC:tetrahedral} that $\mathcal{H}_3(\mathcal{O}_t)$ is an SOC for a regular tetrahedron. Now consider an arrangement of $N$ tetrahedra and let $N(r)$ be the number of pairs whose center-to-center distance lies in the interval $[r-dr/2,r+dr/2]$ and define the following correlation function:
\begin{eqnarray}
	g_H(r) &=& \left\langle\frac{1}{N(r)\norm{\mathcal{H}_3}_F^2}\sum_{<p,q>,|d(p,q)-r|<dr/2}\mathcal{H}_{3,p}^{ijk}\mathcal{H}_{3,q}^{ijk}\right\rangle\notag\\
	&&\label{eq:tetrahedral_correlation_function}
\end{eqnarray}
We can quantify  different types of spatial correlations in the orientations of tetrahedra using (\ref{eq:tetrahedral_correlation_function}). We first analyze different scenarios for the first nearest neighbor shell. Geometrically, the smallest distance between the centroids of two neighboring tetrahedra is achieved when their faces are touching one another. At low densities, the two neighboring tetrahedra can freely rotate around the axis perpendicular to this touching face, while at higher densities, rotation is restricted and the two faces tend to match perfectly. The value of $g_H(r)$ can be calculated for these two idealized configurations. Let $z$ be a unit vector perpendicular to the common plane of the two touching faces. The characteristic vectors of each tetrahedron can thus be expressed as:
\begin{eqnarray}
u_1 &=& z \notag\\
u_p &=& -\tfrac13z+\tfrac{\sqrt8}3s_p\notag\\
v_1 &=& -z \notag\\
v_p &=& \tfrac13z+\tfrac{\sqrt8}3t_p\notag
\end{eqnarray}
with:
\begin{eqnarray}
s_p &=& \cos[{2\pi(p-2)}/{3}]x+\sin[{2\pi(p-2)}/{3}]y\notag\\
t_p &=& \cos\left[{2\pi(p-2)}/{3}+\theta\right]x+\sin\left[{2\pi(p-2)}/{3}+\theta\right]y\notag
\end{eqnarray}
and $x$ and $y$ two orthonormal vectors perpendicular to $z$. For a perfect non-rotating face-to-face contact, $\theta=0$ and $\mathcal{H}_{3,u}^{ijk}\mathcal{H}_{3,v}^{ijk}=-\frac{32}{81}$. The corresponding value of $g_H(r)$ will thus be $-\frac19\approx-0.11111$. For a freely-rotating face-to-face contact, however, $\theta\sim U(0,2\pi)$, and $\langle\mathcal{H}_{3,u}^{ijk}\mathcal{H}_{3,v}^{ijk}\rangle=-\frac{160}{81}$ which yields a $g_H(r)$ value of $-\frac59\approx0.5555$. Investigating the magnitude of the first valley of $g_H(r)$ will elucidate the type of the face-to-face contact prevalent in that particular phase. Fig.~\ref{fig:gHr} shows $g_H(r)$ vs. $r$ for three different packing fractions. At $\phi=0.24$, the first valley of $g_H(r)$ has a value of $-0.503$ which is very close to the calculated value of $-\frac59$ for a perfect freely rotating face-to-face configuration. At a packing fraction of $0.48$ however, the first valley of $g_H(r)$ has a value of $-0.15$ which is  very close to the theoretically predicted value of $-\frac19$ for the perfect face-to-face configuration. At intermediate densities (the violet curve), the first valley is in between $-\frac59$ and $-\frac19$. This analysis can be extended to the second and third nearest neighbor shells. For perfect face-to-face configurations, the corresponding values of $\mathcal{H}_{3,u}^{ijk}\mathcal{H}_{3,v}^{ijk}$ and $g_H(r)$ are given in Table~\ref{table:gHr}. The $g_H(r)$ values shown in Fig.~\ref{fig:gHr} are consistent with these theoretical predictions, which confirms the existence of the network structure in which all neighbors are in  perfect face-to-face contact. 

Differences in spatial orientational correlation can be the basis of differentiating the quasicrystal and the complex fluid, both of which have a similar network structure with perfect face-to-face contacts between nearest neighbors~\cite{HajiAkbariEtAl2009, HajiAkbaricondmat2011}. They also have very similar radial distribution functions, $g(r)$'s, as depicted in Fig.~\ref{fig:QC-vs-fluid}. Note the lack of long-range order in both $g(r)$'s. To the contrary, these  phases have starkly different $g_H(r)$ functions.  For the disordered fluid, orientational correlations are virtually nonexistent at large separations. For the quasicrystal, however, such correlations never decay and are long-ranged. By using $g_H(r)$, we are therefore able to detect long-range orientational coherence in a networked phase of hard tetrahedra. 
It is noteworthy that the $g_H(r)$ values corresponding to the first, second and third nearest neighbor shells in the quasicrystal are closer to the theoretically predicted values given in Table~\ref{table:gHr} than the corresponding peaks in the fluid. Therefore, the face-to-face contacts in the quasicrystal are more perfect than the fluid. 

\section{Conclusion\label{section:conclusion}}
In this work, we revisit the problem of quantifying orientational order in arrangements of anisotropic building blocks, and we propose a systematic way of constructing symmetry-invariant coordinates for arbitrary building blocks. We call the arising tensorial coordinates strong orientational coordinates, and we discuss their potential applications in theoretical and computational studies of symmetric building blocks. For instance, we demonstrate that such SOCs can be used in the systematic quantification of spatiotemporal correlations in colloidal systems. Most importantly, the orientational distribution functions can be expressed in terms of such SOCs, and the problem of identifying, quantifying and describing long-range rotational order can be formulated as a generalized non-linear optimization problem. The arising scalar and tensorial order parameters can be efficient ways of storing rotational information on a computer, especially for objects with nontrivial symmetries. 

\acknowledgments
The authors gratefully acknowledge discussions with R. Greiss, S. Fomin, J. C. Lagarias, M. Engel, G. van Anders, P. Palffy-Muhoray and R. Petschek. Resources and support for S.G. and A.H.-A. provided in part by the U.S. Air Force Office of Scientific Research under Multidisciplinary University Research Initiative No. FA9550-06-1-0337, Subaward No. 444286-P061716. This material is based upon work also supported by the DOD/DDRE under Award No. N00244-09-1-0062 (S.G.). Any opinions, findings, and conclusions or recommendations expressed in this publication are those of the authors and do not necessarily reflect the views of the DOD/DDRE.  A.H.-A. acknowledges support from the University of Michigan Rackham Predoctoral Fellowship program.  

\appendix

%%%%%%%%%%%%%%%%%%%%%%%%%%%%%%%%%%%%%%
%%%%%%%%%T E T R A   O P T I M I Z A T I O N %%%%%%%%%%%%
%%%%%%%%%%%%%%%%%%%%%%%%%%%%%%%%%%%%%%
%%%%%%%%%%%%%%%%%%%%%%%%%%%%%%%%%%%%%%

\section{Proof of Lemma~\ref{lemma:sum_of_powers_roots}\label{proof:lemma:sum_of_powers_roots}}

Form the polynomials $p(z)=\prod_{i=1}^n(z-a_i)$ and $q(z)=\prod_{i=1}^n(z-b_i)$ and expand them to obtain $p(z)=z^n+\sum_{i=1}^n(-1)^i\alpha_iz^{n-i}$ and $q(z)=z^n+\sum_{i=1}^n(-1)^i\beta_iz^{n-i}$ where the coefficients are given by Newton's identities~\cite{KalmanMathMag2000}:
\begin{eqnarray}
\phi_k&=&\sum_{i=1}^{k-1}(-1)^{i-1}\alpha_i\phi_{k-i}+(-1)^{k-1}k\alpha_k\notag\\
\psi_k&=&\sum_{i=1}^{k-1}(-1)^{i-1}\beta_i\psi_{k-i}+(-1)^{k-1}k\beta_k\notag
\end{eqnarray}
where $\phi_k=\sum_{i=1}^na_i^k$ and $\psi_k=\sum_{i=1}^nb_i^k$. Note that $\phi_1=\psi_1$ if and only if $\alpha_1=\beta_1$ and by induction $\alpha_k=\beta_k$ for every $1\le k\le n$. Thus $p(z)$ and $q(z)$ have the same coefficients, and thus the same roots.

\section{Proof of $\mathcal{H}_n$ being an SOC for the planar orbit $\mathcal{O}_n$ in Section~\ref{section:SOC:CnDn}\label{appendix:derivation:CnDn}}
We prove this through the following chain of lemmas and theorems.

\begin{lem}
Let $v_1,v_2,\cdots,v_n\in\mathbb{R}^d$ be unit vectors so that $v_p^Tv_q=\cos\left[{2\pi}(p-q)/n\right]$; then for $n\ge3$, $v_i$'s span a two-dimensional subspace of $\mathbb{R}^d$.
\end{lem}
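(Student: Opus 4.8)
The plan is to reduce the statement to a rank computation. The dimension of the span of $v_1,\dots,v_n$ equals the rank of their Gram matrix $G$, whose entries are $G_{pq}=v_p^Tv_q=\cos[2\pi(p-q)/n]$: writing the $v_p$ as the columns of a $d\times n$ matrix $A$, one has $G=A^TA$, and $\operatorname{rank}G=\operatorname{rank}A$ is precisely the dimension of the span. So it suffices to show $\operatorname{rank}G=2$ whenever $n\ge3$.

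The key observation is that $G$ is a symmetric circulant matrix, since $G_{pq}$ depends only on $(p-q)\bmod n$ through the even function $\cos$. Every circulant is diagonalized by the discrete Fourier basis $u_j=(1,\omega^j,\dots,\omega^{(n-1)j})^T$, $j=0,\dots,n-1$, where $\omega=e^{2\pi i/n}$, with eigenvalues $\lambda_j=\sum_{k=0}^{n-1}\cos(2\pi k/n)\,\omega^{jk}$.

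The central computation is to evaluate these $\lambda_j$. Substituting $\cos(2\pi k/n)=\tfrac12\!\left(e^{2\pi ik/n}+e^{-2\pi ik/n}\right)$ and using that $\sum_{k=0}^{n-1}e^{2\pi imk/n}$ equals $n$ if $m\equiv0\pmod n$ and $0$ otherwise, I expect each $\lambda_j$ to collapse to $n/2$ exactly when $j\equiv\pm1\pmod n$ and to vanish otherwise. Thus the only nonzero eigenvalues are $\lambda_1=\lambda_{n-1}=n/2$, both positive as required for a Gram matrix.

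The hypothesis $n\ge3$ enters at the final step: it guarantees that the indices $j=1$ and $j=n-1$ are distinct modulo $n$, so $G$ has exactly two nonzero eigenvalues and hence $\operatorname{rank}G=2$, giving a two-dimensional span. When $n=2$ these indices coincide, the rank drops to $1$, and indeed $v_2=-v_1$; this is precisely why $n\ge3$ is needed. There is no deep obstacle here—the only care required is the bookkeeping of which Fourier frequencies survive the geometric sum and the verification that they are distinct, which is exactly where $n\ge3$ is used.
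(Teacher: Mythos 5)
Your proof is correct, but it takes a genuinely different route from the paper's. The paper argues geometrically and pointwise: it sets $v_1=x$, $v_2=\cos(2\pi/n)x+\sin(2\pi/n)y$, writes a generic $v_k=\cos[2\pi(k-1)/n]\,x+\sin[2\pi(k-1)/n]\,z$ with $z$ a unit vector orthogonal to $x$ (the coefficient along $x$ being pinned down by $v_k^Tv_1$), and then the single identity $v_k^Tv_2=\cos[2\pi(k-2)/n]$ forces $z=y$, so every $v_k$ lies in $\mathrm{span}\{x,y\}$. You instead reduce to a rank computation on the Gram matrix $G_{pq}=\cos[2\pi(p-q)/n]$, note that $G$ is a symmetric circulant, diagonalize it in the discrete Fourier basis, and compute the eigenvalues: writing the cosine as a sum of two exponentials and applying the geometric-sum identity gives $\lambda_j=n/2$ for $j\equiv\pm1\pmod n$ and $\lambda_j=0$ otherwise, so $\mathrm{rank}\,G=2$ precisely when $n\ge3$ (for $n=2$ the two surviving frequencies coincide, the rank drops to $1$, and indeed $v_2=-v_1$). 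Both arguments are complete, and the trade-offs are real but small: the paper's proof is shorter and uses nothing beyond inner products, though it silently passes over the degenerate indices where $\sin[2\pi(k-1)/n]=0$ (there $v_k=\pm x$ lies in the plane trivially, but the decomposition with a unit $z$ is not unique); your proof is more systematic, yields the exact dimension rather than just containment in a plane, makes the role of the hypothesis $n\ge3$ completely transparent, and would apply verbatim to any circulant correlation pattern, at the cost of invoking the standard facts $\mathrm{rank}(A^TA)=\mathrm{rank}(A)$ and the Fourier diagonalization of circulant matrices.
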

\begin{proof}
Let $x,y,z\in\mathbb{R}^d$ be unit vectors so that  $v_1=x, v_2=\cos({2\pi}/{n})x+\sin({2\pi}/{n})y$ and $v_k=\cos[{2\pi(k-1)}/{n}]x+\sin[{2\pi(k-1)}/{n}]z$ for some $k\ge3$. Use $v_k^Tv_2=\cos[{2\pi(k-2)}/{n}]$ to deduce $z=y$.
\end{proof}

\begin{lem}
\label{lem:cos_p_sin_q}
Let $p,q$ be nonnegative integers and $\phi_k=\theta+2\pi k/n$. Then $I_{p,q}(\theta)=\sum_{k=0}^{n-1}\cos^p\phi_k\sin^q\phi_k$ is independent of $\theta$ for $p+q<n$, and is a function of $\theta$ if $p+q=n$.
\end{lem}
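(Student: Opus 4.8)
The plan is to diagonalize the summand by passing to complex exponentials, turning the sum over $k$ into a collection of geometric series. Writing $\cos\phi=(e^{i\phi}+e^{-i\phi})/2$ and $\sin\phi=(e^{i\phi}-e^{-i\phi})/(2i)$, the product $\cos^p\phi\,\sin^q\phi$ expands into a finite Fourier polynomial
\begin{eqnarray}
\cos^p\phi\,\sin^q\phi &=& \sum_{m}c_m e^{im\phi}\notag
\end{eqnarray}
in which the frequencies $m$ run over $|m|\le p+q$ with $m\equiv p+q\pmod 2$, since each of the $p+q$ factors contributes $\pm1$ to the exponent.

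First I would substitute $\phi_k=\theta+2\pi k/n$ and interchange the two finite sums:
\begin{eqnarray}
I_{p,q}(\theta) &=& \sum_{m}c_m e^{im\theta}\sum_{k=0}^{n-1}e^{2\pi imk/n}.\notag
\end{eqnarray}
The inner sum is the standard root-of-unity sum, equal to $n$ when $n\mid m$ and to $0$ otherwise. Hence only frequencies that are multiples of $n$ survive, and $I_{p,q}(\theta)=n\sum_{m:\,n\mid m}c_m e^{im\theta}$.

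For $p+q<n$ the only multiple of $n$ in the range $|m|\le p+q$ is $m=0$, so $I_{p,q}(\theta)=nc_0$ is constant in $\theta$ (and is simply $0$ when $p+q$ is odd, in which case $m=0$ does not even occur in the expansion by the parity constraint). For $p+q=n$, the extreme frequencies $m=\pm n$ now lie in range and share the correct parity, so they survive alongside possibly $m=0$, producing the candidate $\theta$-dependent terms $c_{\pm n}e^{\pm in\theta}$.

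The hard part — though it is only a short computation — is to verify that these surviving top coefficients do not vanish, so that the $\theta$-dependence is genuine rather than an accidental cancellation. The coefficient $c_n$ of the highest frequency $e^{in\phi}$ arises by selecting $e^{i\phi}$ from every one of the $n$ factors, giving $c_n=1/(2^p(2i)^q)=1/(2^n i^q)\neq0$, and likewise $c_{-n}\neq0$. Since $1$, $e^{in\theta}$ and $e^{-in\theta}$ are linearly independent functions of $\theta$ and the coefficient $nc_n$ of $e^{in\theta}$ is nonzero, $I_{p,q}(\theta)$ is genuinely non-constant, completing the dichotomy.
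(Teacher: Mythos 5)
Your proof is correct and takes essentially the same route as the paper's: expand $\cos^p\phi_k\sin^q\phi_k$ in complex exponentials and use the roots-of-unity sum $\sum_{k=0}^{n-1}e^{2\pi i mk/n}$ to annihilate every frequency not divisible by $n$, so that only $m=0$ survives when $p+q<n$ while $m=\pm n$ also survive when $p+q=n$. Your one addition is the explicit check that $c_{\pm n}=\pm 1/(2^n i^q)\neq 0$ together with the linear-independence argument, which closes a step the paper leaves implicit when it simply asserts that the two terms $e^{\pm in\theta}$ ``will survive.''
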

\begin{proof}
Note that:
\begin{eqnarray}
\cos^p\phi_k &=& \left(\frac12\right)^p\sum_{l=0}^p\left(\begin{matrix}p \\ l\end{matrix}\right)\exp\left[i\phi_k(2l-p)\right]\label{eq:cos_p}\\
\sin^q\phi_k &=& \left(\frac{i}{2}\right)^q\sum_{m=0}^q\left(\begin{matrix}q\\ m\end{matrix}\right)(-1)^m\exp\left[i\phi_k(2m-q)\right] \label{eq:sin_q}
\end{eqnarray}
$I_{p,q}(\theta)$ can therefore be written as:
\begin{eqnarray}
I_{p,q}(\theta) &=& \frac{i^q}{2^{p+q}}\sum_{l=0}^p\sum_{m=0}^q\left(\begin{matrix}p\\l\end{matrix}\right)\left(\begin{matrix}q\\m\end{matrix}\right)(-1)^me^{i\theta(2l+2m-p-q)}\sum_{k=0}^{n-1}\left[e^{2\pi i(2l+2m-p-q)/n}\right]^k\notag
\end{eqnarray}
For $e^{2\pi i(2l+2m-p-q)/n}\neq1$ we have:
\begin{eqnarray}
\sum_{k=0}^{n-1}\left[e^{2\pi i(2l+2m-p-q)/n}\right]^k &=& \frac{1-e^{2\pi i(2l+2m-p-q)}}{1-e^{2\pi i(2l+2m-p-q)/n}}=0\notag
\end{eqnarray}
Thus the sum over $k$ survives only if $2l+2m-p-q$ is a multiple of $n$. However $|2l+2m-p-q|\le p+q$. Thus if $p+q<n$, the only possibility is zero, which will take away the $\theta$ dependence of $I_{p,q}(\theta)$. For $p+q=n$, two $\theta$-dependent terms will survive i.e. $e^{\pm in\theta}$ and the proof follows.
\end{proof}

\begin{thm}
$\mathcal{H}_n$ is an SOC of $\mathcal{O}_n$.
\end{thm}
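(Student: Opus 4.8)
The plan is to verify the defining property of an SOC directly: for every $Q\in O(d)$ I must show that $\mathcal{H}_n(\mathcal{O}_n)=\mathcal{H}_n(Q\mathcal{O}_n)$ forces $Q\mathcal{O}_n=\mathcal{O}_n$. Since $Q$ preserves inner products, $W:=Q\mathcal{O}_n=\{w_p\}$ is again a set of $n$ unit vectors with $w_p^Tw_q=\cos[2\pi(p-q)/n]$; by the preceding lemma it spans a two-dimensional plane $\Pi_W=Q\Pi$ (with $\Pi$ the plane of $\mathcal{O}_n$) and is itself an equally spaced planar orbit, sitting at some angle $\theta_W$, while $\mathcal{O}_n$ sits at angle $\theta_V$. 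The argument then splits into recovering the plane and then the in-plane angle.

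\emph{Recovering the plane.} First I would show that $\mathcal{H}_n$ determines $\Pi$. Form the symmetric rank-two contraction $G^{ij}:=\mathcal{H}_n^{\,i\,k_2\cdots k_n}\mathcal{H}_n^{\,j\,k_2\cdots k_n}=\sum_{p,q}(v_p^Tv_q)^{\,n-1}v_p^iv_q^j$. This $G$ is positive semidefinite, its range lies in $\Pi$, and it is invariant under the $n$-fold in-plane rotation that stabilises $\mathcal{O}_n$; since a symmetric two-tensor on a plane commuting with a rotation of order $n\ge 3$ must be a multiple of the in-plane identity, $G=c\,P_\Pi$ with $P_\Pi$ the orthogonal projector onto $\Pi$. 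Because $\mathcal{H}_n\neq 0$ (the rank-one tensors $v_p^{\otimes n}$ do not cancel), $c>0$, so $\Pi=\operatorname{range}(G)$ is read off from $\mathcal{H}_n$ alone. Applying the same construction to $W$ and using $\mathcal{H}_n(\mathcal{O}_n)=\mathcal{H}_n(W)$ gives $P_{\Pi}=P_{\Pi_W}$, hence $Q\Pi=\Pi$ and $W\subset\Pi$.

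\emph{Recovering the angle.} With both orbits in the common plane $\Pi$, choose an orthonormal frame $(x,y)$ of $\Pi$ and complete it to an ambient frame; every nonzero component of $\mathcal{H}_n$ then has all $n$ indices equal to $x$ or $y$, so it is of the form $\sum_p\cos^a\phi_p\sin^b\phi_p$ with $a+b=n$ and $\phi_p=\theta+2\pi p/n$. This is precisely the borderline case $p+q=n$ of Lemma~\ref{lem:cos_p_sin_q}: each such component is an affine function of $\cos n\theta$ and $\sin n\theta$, and these degree-$n$ components jointly determine the pair $(\cos n\theta,\sin n\theta)$ (the all-$x$ component supplies $\cos n\theta$, and a mixed component such as $\sum_p\cos^{n-1}\phi_p\sin\phi_p$ supplies $\sin n\theta$). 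Matching all components of $\mathcal{H}_n(\mathcal{O}_n)$ and $\mathcal{H}_n(W)$ therefore forces $\cos n\theta_V=\cos n\theta_W$ and $\sin n\theta_V=\sin n\theta_W$, i.e.\ $\theta_V\equiv\theta_W\pmod{2\pi/n}$. Since the orbit depends only on $\theta$ modulo $2\pi/n$, this yields $W=\mathcal{O}_n$, i.e.\ $Q\mathcal{O}_n=\mathcal{O}_n$, as required.

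The main obstacle is the plane-recovery step, and keeping it uniform in the parity of $n$: the general bounds of Corollary~\ref{cor:even_sets} and Theorem~\ref{thm:odd_sets} would only guarantee $\mathcal{H}_n$ for even $n$ and $\mathcal{H}_{2n-1}$ for odd $n$, so the content here is the sharper claim that rank $n$ already suffices in the odd case. The self-contraction $G$ sidesteps the parity issue (for even $n$ one could instead simply note that $\mathcal{H}_2\propto P_\Pi$ is obtained by $\delta$-contracting $\mathcal{H}_n$), and the only genuine verifications left are that $G$ is isotropic on $\Pi$ with positive trace and that the degree-$n$ components of $\mathcal{H}_n$ have nonvanishing $n\theta$-amplitude — both routine consequences of Lemma~\ref{lem:cos_p_sin_q}.
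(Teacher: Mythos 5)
Your proof is correct, and while it rests on the same two ingredients as the paper's — the planarity lemma and Lemma~\ref{lem:cos_p_sin_q}, whose borderline case $p+q=n$ supplies the $(\cos n\theta,\sin n\theta)$ amplitudes that pin down the in-plane angle modulo $2\pi/n$ — it handles the crucial step differently. The paper splits by parity: for even $n$ it invokes the generic Corollary~\ref{cor:even_sets} (some even rank $2q\le n$ must be an SOC) together with the $\theta$-independence of every $\mathcal{H}_m$, $m<n$, to force $2q=n$; for odd $n$, where the generic bound of Theorem~\ref{thm:odd_sets} would only give $\mathcal{H}_{2n-1}$, it disposes of the out-of-plane case with the one-line claim that $\mathcal{H}_n(QV)$ then acquires nonzero components that vanish for $\mathcal{H}_n(V)$. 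Your self-contraction $G^{ij}=\mathcal{H}_n^{ik_2\cdots k_n}\mathcal{H}_n^{jk_2\cdots k_n}=c\,P_\Pi$ replaces both devices: it recovers the plane directly and uniformly in the parity of $n$, so no appeal to the general even/odd theorems is needed, and it upgrades the paper's terse out-of-plane claim into an actual argument (positive semidefiniteness, range contained in $\Pi$, commutation with the order-$n$ stabilizer, hence a positive multiple of the in-plane projector). The price is the two verifications you correctly flag as routine: $\mathcal{H}_n\neq 0$, which follows from $\norm{\mathcal{H}_n}_F^2=\sum_{p,q}\cos^n\left[2\pi(p-q)/n\right]=n^2\,2^{1-n}>0$, and the nonvanishing $n\theta$-amplitudes, which follow from the exponential expansion in the proof of Lemma~\ref{lem:cos_p_sin_q} (the surviving $\theta$-dependent terms always carry coefficient $2n/2^n$). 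On balance, your route is more self-contained and makes the odd-$n$ case rigorous where the paper is sketchiest; the paper's route is shorter because it leans on the machinery of Section~\ref{subsection:homogeneous_tensors}.
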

\begin{proof}
Let $V=\mathcal{O}_n=\{v_k\}_{k=1}^n$. Without loss of generality suppose $v_k=\cos\left[\theta+{2\pi k}/{n}\right]e_x+\sin\left[\theta+{2\pi k}/{n}\right]e_y$ with $e_x$ and $e_y$ being the unit vectors along the $x$ and $y$ directions. The components of $\mathcal{H}_m$ are therefore either zero or $I_{p,m-p}$ which are always independent of $\theta$ for $m<n$ according to Lemma~\ref{lem:cos_p_sin_q}. For even $n$, the proof follows from Corollary~\ref{cor:even_sets}. The proof for odd $n$ is completed by noting that $\mathcal{H}_n(QV)$ will have some nonzero components that are zero for $\mathcal{H}_n(V)$ if $QV$ and $V$ are not in the same plane.
\end{proof}

\section{Derivation Details of Tetrahedral SOC\label{appendix:tetopt}}
\noindent We need to solve the following optimization problem:
\begin{eqnarray}\label{eq:appendix:optimization}
\begin{array}{lll} \text{maximize} & \sum_{l,m=1}^4\xi_{lm}^3 &\\ \text{subject to} & \sum_{m=1}^4\xi_{lm}^2=\frac43 & l=1,2,3,4\\ &\sum_{m=1}^4\xi_{lm}=0 & l=1,2,3,4\end{array}
\end{eqnarray}
The form of (\ref{eq:appendix:optimization}) suggests that we can solve it by breaking into four independent problems of the form:
\begin{eqnarray}\label{eq:appendix:subopt}
\begin{array}{ll} \text{maximize} & \sum_{i=1}^4x_i^3\\ \text{subject to} & \sum_{i=1}^4x_i=0\\ & \sum_{i=1}^4x_i^2=\frac43\end{array}
\end{eqnarray}
We solve this problem by enumerating all solutions satisfying the Karush-Kuhn-Tucker criteria~\cite{Karush1939,KuhnTucker1951} and show that global maximum is attained if $x$ is a permutation of $(1,-\frac13,-\frac13,-\frac13)$. The Lagrangian for (\ref{eq:appendix:subopt}) is:
\begin{eqnarray}
\mathcal{L}(x,\nu,\mu) &=& \sum_{i=1}^4x_i^3+\nu\sum_{i=1}^4x_i+\mu\left[\sum_{i=1}^4x_i^2-\frac43\right]\\
\frac{\partial\mathcal{L}}{\partial x_j} &=& 3x_j^2+\nu+2\mu x_j
\end{eqnarray}
KKT conditions requires that $\nabla_x\mathcal{L}=0$. Eliminating $\mu$ one easily obtains:
\begin{eqnarray}
\nu&=&-\frac34\sum_{i=1}^4x_i^2=-1
\end{eqnarray}
which implies that:
\begin{eqnarray}
x_i &=& \frac{-\mu\pm\sqrt{\mu^2+3}}3
\end{eqnarray}
Denote the roots of this equation with $\xi_+$ and $\xi_-$ and let $n_+(n_-)$ be the number of $x_i$'s equalling $\xi_+(\xi_-)$. Using $n_+\xi_++n_-\xi_-=0$ one gets:
\begin{eqnarray}
\mu &=& \frac{3m}{\sqrt{3(16-m^2)}}\\
\xi_{\pm} &=& -\frac{m\mp4}{\sqrt{3(16-m^2)}}
\end{eqnarray}
where $m=n_+-n_-$. The global maximum is thus obtained for $m=-2,\xi_+=1,\xi_-=-\frac13$, for which $\sum_{i=1}^4x_i^3=\frac89$. The global maximum of (\ref{eq:appendix:optimization}) is therefore $\frac{32}9$.

%%%%%%%%%%%%%%%%%%%%%%%%%%%%%%%%%%%%%%
%%%%%%%%%O C T A     O P T I M I Z A T I O N %%%%%%%%%%%%
%%%%%%%%%%%%%%%%%%%%%%%%%%%%%%%%%%%%%%
%%%%%%%%%%%%%%%%%%%%%%%%%%%%%%%%%%%%%%

\section{Derivation Details of Octahedral SOC\label{appendix:octaopt}}
Letting $\zeta_{pq}=\xi_{pq}^2$, we need to solve the following optimization problem:
\begin{eqnarray}\label{eq:app:octa:opt}
	\begin{array}{lll}
		\max & \sum_{p,q=1}^3\zeta_{pq}^2\\
		\text{subject to} & \sum_{p=1}^3\zeta_{pq}=1& q=1,2,3\\
		& \zeta_{pq}\ge0& p,q=1,2,3
	\end{array}
\end{eqnarray}
Observe that (\ref{eq:app:octa:opt}) can be broken into three independent and yet identical optimization problems of the form:
\begin{eqnarray}
	\begin{array}{ll}
		\max & \sum_{i=1}^3x_i^2\\
		\text{subject to} & \sum_{i=1}^3x_i=1\\
		& x_i\ge0, i=1,2,3
	\end{array}
\end{eqnarray}
which is solved by identifying $x$'s that satisfy the Karush-Kuhn-Tucker criteria~\cite{Karush1939,KuhnTucker1951}. The Lagrangian is given by:
\begin{eqnarray}
\mathcal{L} &=& \sum_{i=1}^3x_i^2-\sum_{i=1}^3\nu_ix_i-\lambda\sum_{i=1}^3x_i\\
\frac{\partial\mathcal{L}}{\partial x_j} &=& 2x_j-\nu_j-\lambda=0
\end{eqnarray}
which yields $x_j=(\nu_j+\lambda)/2$. Let $n\le3$ be the number of nonzero $x_j$'s. For each such $x_j$, $\nu_j=0$ due to complementary slackness. We thus have $\lambda=\frac2n$ and $x_j=\frac1n$ for nonzero $x_j$'s. Choosing $n=1$ maximizes the objective function with $\sum_{i=1}^3x_i^2=1$. The global maximum for (\ref{eq:app:octa:opt}) is therefore $3$ with $\xi_{pq}$'s being a permutation of $(\pm1,0,0)$ for each $q$. 

%%%%%%%%%%%%%%%%%%%%%%%%%%%%%%%%%%%%%%
%%%%%%%%%I C O S A   O P T I M I Z A T I O N %%%%%%%%%%%%
%%%%%%%%%%%%%%%%%%%%%%%%%%%%%%%%%%%%%%
%%%%%%%%%%%%%%%%%%%%%%%%%%%%%%%%%%%%%%

\section{Derivation Details of Icosahedral SOC\label{appendix:icosa}}
We need to solve the following optimization problem:
\begin{eqnarray}\label{eq:app:icosa:opt}
	\begin{array}{lll}
		\max & \sum_{p,q=1}^6 \xi_{p,q}^3 & \\
		\text{subject to} & \sum_{p=1}^6\xi_{pq} = 2 & q=1,\cdots,6\\
		& \sum_{p=1}^6\xi_{pq}^2 = \frac65 & q=1,\cdots,6\\
		& \xi_{pq}\ge 0 & p,q=1,\cdots,6
	\end{array}
\end{eqnarray}
which can be broken into six independent and yet identical optimization problems of the form:
\begin{eqnarray}
	\begin{array}{lll}
		\max & \sum_{i=1}^6 x_i^3\\
		\text{subject to}  & \sum_{i=1}^6x_i^2=\frac65\\
		& \sum_{i=1}^6x_i=2\\
		& x_j \ge0 &  j=1,2,\cdots,6
	\end{array} \label{eq:appendix:ico:constrained}
\end{eqnarray}
In order to solve (\ref{eq:appendix:ico:constrained}), we first solve the following optimization problem:
\begin{eqnarray}
	\begin{array}{ll}
		\max & \sum_{i=1}^n x_i^3\\
		\text{subject to}  & \sum_{i=1}^nx_i^2=\frac65\\
		& \sum_{i=1}^nx_i=2
	\end{array}\label{eq:appendix:ico:unconstrained}
\end{eqnarray}
using Lagrange multipliers. The Lagrangian is given by:
\begin{eqnarray}
\mathcal{L}(x_1,\cdots,x_n;\lambda,\mu) &=& \sum_{i=1}^nx_i^3-\lambda\left[\sum_{i=1}^nx_j^2-\frac65\right]-\mu\left[\sum_{i=1}^nx_i-2\right]\notag
\end{eqnarray}
which can be differentiated to get:
\begin{eqnarray}
\frac{\partial\mathcal{L}}{\partial x_j}=3x_j^2-2\lambda x_j-\mu=0
\end{eqnarray}
Summing over $j$ yields $\mu=\frac1n\left[\frac{18}{5}-4\lambda\right]$. The roots of the quadratic equation (that we denote by $\xi_+$ and $\xi_-$) are thus given by: 
\begin{eqnarray}
\xi_{\pm} &=& \frac{1}{3}\left[\lambda\pm\sqrt{\lambda^2+\frac3n\left(\frac{18}5-4\lambda\right)}\right]
\end{eqnarray}
Let $n_+(n_-)$ be the number of $x_j$'s being equal to $\xi_+(\xi_-)$ and let $m=n_+-n_-$. From $n_+\xi_++n_-\xi_-=2$ we have:
\begin{eqnarray}
\lambda &=& \frac6n\left[1-m\sqrt{\frac{3n-10}{10(n^2-m^2)}}\right]\\
\xi_{\pm} &=& \frac2n\left[1-(m\mp n)\sqrt{\frac{3n-10}{10(n^2-m^2)}}\right]
\end{eqnarray}
which suggests that the problem is not feasible for $n<4$. The value of the objective function is given by:
\begin{eqnarray}
n_+\xi_+^3+n_-\xi_-^3 &=& \frac{36}{5n}-\frac{16}{n^2}-\frac{8m(3n-10)}{5n^2}\sqrt{\frac{3n-10}{10(n^2-m^2)}}
\end{eqnarray}
Since $d[m/\sqrt{n^2-m^2}]/dm=n^2/\sqrt{(n^2-m^2)^3}>0$, the function is maximized for the \emph{smallest} possible $m$ i.e. $m=2-n$ and the global maximum is given by:
\begin{eqnarray}
f(n) &=& \frac{36}{5n}-\frac{16}{n^2}+\frac{4(n-2)(3n-10)}{5n^2}\sqrt{\frac{3n-10}{10(n-1)}}\notag\\&\label{eq:appendix:ico:maximum}
\end{eqnarray}
Now we solve (\ref{eq:appendix:ico:constrained}) by identifying $x$'s that satisfy the Karush-Kuhn-Tucker criteria. Its Lagrangian is of the form:
\begin{eqnarray}
\mathcal{L} &=& \sum_{i=1}^6x_i^3-\nu_ix_i-\lambda\left[\sum_{i=1}^nx_j^2-\frac65\right]-\mu\left[\sum_{i=1}^nx_i-2\right]\notag\\
\frac{\partial\mathcal{L}}{\partial x_j} &=& 3x_j^2-\nu_j-2\lambda x_j-\mu\notag
\end{eqnarray}
It can be easily shown that for every $x_j=0$, $\nu_j=-\mu$ and is independent of $j$. For $x_j\neq0$, complementary slackness implies that $\nu_j=0$ and $\partial\mathcal{L}/\partial x_j=0$ implies $\mu\neq0$. 
Note that non-zero $x_j$'s are amongst the Karush-Kuhn-Tucker solutions of (\ref{eq:appendix:ico:unconstrained}) for $n=6-n_0$ with $n_0$ being the number of $x_j$'s that are zero. Since $f(4)<f(5)<f(6)$ from Eq.~(\ref{eq:appendix:ico:maximum}), $n_0=0$ gives the global maximum with $\sum_{i=1}^6x_i^3=\frac{26}{25}, \xi_+=1,\xi_-=\frac15$. The value of the objective function of (\ref{eq:app:icosa:opt}) is thus $\frac{156}{25}$ which can be \emph{only} achieved for $\xi_{pq}$'s being permutations of $(1,\frac15,\frac15,\frac15,\frac15,\frac15)$. 

%%%%%%%%%%%%%%%%%%%%%%%%%%%%%%%%%%%%%%
%%%%%%%%% T R I G N O M E T R I C   I N T E G R A L S %%%%%%%
%%%%%%%%%%%%%%%%%%%%%%%%%%%%%%%%%%%%%%
%%%%%%%%%%%%%%%%%%%%%%%%%%%%%%%%%%%%%%

\section{\label{appendix:trigonometric}Some Useful Trigonometric Integrals}
We are interested in integrals of the form:
\begin{eqnarray}
I_{m,n}&=&\int_0^{2\pi}\cos^m\theta\sin^n\theta d\theta\\
J_{m,n}&=&\int_0^{\pi}\cos^m\theta\sin^n\theta d\theta
\end{eqnarray}
Note that $I_{2m-1,2n}=I_{2m,2n-1}=0$ since the integrand is odd around $\theta=\pi$. Also:
\begin{eqnarray}
I_{2m-1,2n-1}=2\int_0^{\pi}\cos^{2m-1}\theta\sin^{2n-1}\theta d\theta=0\notag
\end{eqnarray}
since the integrand is odd around $\theta=\frac{\pi}{2}$. $I_{2m,2n}(m,n>0)$ can be calculated by considering the identity:
\begin{eqnarray}
I_{2m,0}=\frac{2\pi(2m)!}{4^m(m!)^2}\label{eq:I_2m_0}
\end{eqnarray}
and integration by part:
\begin{eqnarray}
\int_0^{2\pi}\cos^{2m}\theta\sin^{2n}\theta d\theta &=& \left[\frac{\cos^{2m-1}\theta\sin^{2n+1}\theta}{2n+1}\right]_0^{2\pi}+\tfrac{2m-1}{2n+1}\int_0^{2\pi}\cos^{2m-2}\theta\sin^{2n+2}\theta d\theta\notag\\&=&\frac{2m-1}{2n+1}I_{2m-2,2n+2}=  \frac{2\pi(2m)!(2n)!}{4^{m+n}m!n!(m+n)!}\label{eq:app:int_I_2m_2n}
\end{eqnarray}
(\ref{eq:I_2m_0}) can be proven as follows:
\begin{eqnarray}
I_{2m,0}&=&\frac{1}{4^m}\sum_{p=0}^{2m}\left(\begin{matrix}2m\\ p\end{matrix}\right)\int_0^{2\pi}\exp\left[2i(p-m)\right]d\theta= \frac{2\pi}{4^m}\left(\begin{matrix}2m\\ p\end{matrix}\right)\notag
\end{eqnarray}
since $\int_0^{2\pi}\exp\left[2i(p-m)\right]d\theta=0$ for $p\neq m$.
%\begin{eqnarray}
%I_{2m,0}&=&\int_0^{2\pi}\cos^{2m}xdx=\int_0^{2\pi}\cos^{2m-2}x(1-\sin^2x)dx\notag\\
%&\overset{(a)}{=}& I_{2m-2,0}-\Bigg\{\left[-\frac{\cos^{2m-1}x\sin x}{2m-1}\right]_0^{2\pi} \notag\\&&+\frac{1}{2m-1}\int_0^{2\pi}\cos^{2m}xdx\Bigg\}\notag\\
%I_{2m,0} &=& \frac{2m-1}{2m}I_{2m-2,0}=I_{0,0}\prod_{k=1}^m\frac{2k-1}{2k}\overset{(b)}{=}\frac{(2m)!}{2^{2m}(m!)^2}2\pi\notag
%\end{eqnarray}
%$(a)$ follows from integration by part by setting $u=\sin x,dv=\cos^{2m-1}x\sin xdx$ and $(b)$ follows from $I_{0,0}=\int_0^{2\pi}dx=2\pi$. 
Note that $J_{2m+1,n}=0$ since the integrand is odd around $\pi/2$. For $m=2p,n=2q$ both even we have:
\begin{eqnarray}
J_{2p,2q} &=& \int_0^{\pi}\cos^{2p}x\sin^{2q}xdx=\frac12I_{2p,2q}\label{eq:appendix:J2p2q}
\end{eqnarray}
We also have:
\begin{eqnarray}
J_{2m,1} &=& \int_0^{\pi}\cos^{2m}x\sin xdx=\left.\frac{-\cos^{2m+1}x}{2m+1}\right|_0^{\pi}=\frac{2}{2m+1}
\end{eqnarray}
Similar to what was done for $I_{2m,2n}$, using integration by part we have:
\begin{eqnarray}
J_{2m,2n+1} &=& \frac{2n}{2m+1}J_{2m+2,2n-1}= \frac{2n}{2m+1}\frac{2n-2}{2m+3}\cdots\frac{2}{2m+2n-1}\frac{2}{2m+2n+1}= \frac{2^{2n+1}n!(2m)!(m+n)!}{m!(2m+2n+1)!}\label{eq:app:int_J_2m_2np1}
\end{eqnarray}

\section{\label{appendix:tetrahedral_nematic:constraint_equivalency} Equivalency of Rigidity constraints for regular tetrahedron}
We need to show that the following set of constraints are equivalent:
\begin{subequations}
		\begin{equation}
		\sum_{p=1}^4u_p^iu_p^j = \frac43\delta^{ij}
		\label{eq:appendix:tetrahedral_nematics:const1_a}
		\end{equation}
		\begin{equation}
		\sum_{p=1}^4u_p^i = 0 \label{eq:appendix:tetrahedral_nematics:const1_b}
		\end{equation}\label{eq:appendix:tetrahedral_nematics:const1}
\end{subequations}
\begin{eqnarray}
	u_p^iu_q^i = \frac43\delta_{pq}-\frac13, p,q=1,2,3,4
	\label{eq:appendix:tetrahedral_nematics:const2}
\end{eqnarray}
To prove that (\ref{eq:appendix:tetrahedral_nematics:const1}) implies (\ref{eq:appendix:tetrahedral_nematics:const2}), multiply (\ref{eq:appendix:tetrahedral_nematics:const1_a}) by $u _q$ and observe that $\sum_{p=1}^4(u_p^Tu_q)u_p=\frac43u_q$, which, after some rearrangement takes the form:
\begin{eqnarray}
	\sum_{p\neq q}\left[\frac43+\xi_{pq}-\xi_{pp}\right]u_p &=& 0 \notag
\end{eqnarray}
with $\xi_{pq}=a_p^Ta_q$. However since $\sum_{p=1}^4u_pu_p^T$ is a nonsingular matrix, any three of four $u_p$'s are linearly independent and $\xi_{qq}-\xi_{pq}=\frac43$. We also have $\norm{\sum_{p=1}^4u_pu_p^T}_F^2=16/3$ which yields $\xi_{qq}=1,\xi_{pq}=-\frac13 (p\neq q)$ and (\ref{eq:appendix:tetrahedral_nematics:const2}) follows.

To prove (\ref{eq:appendix:tetrahedral_nematics:const1}) from (\ref{eq:appendix:tetrahedral_nematics:const2}) take an arbitrary set of vectors satisfying (\ref{eq:appendix:tetrahedral_nematics:const2}) and show that (\ref{eq:appendix:tetrahedral_nematics:const1}) holds; however since $0$ and $\delta^{ij}$'s are isotropic tensors, they are invariant under any unitary transformation and the proof follows.

\section{Derivation Details of Uniaxial Nematics Order Parameter}
\label{appendix:nematic_OOP_derive}
The objective function of the associated optimization problem, $\mathfrak{M}_\Omega^{ijkl}\overline{M}^{ijkl}$, can be simplified by observing that $\overline{M}$ is both invariant under index permutation and traceless. More specifically:
\begin{eqnarray}
\overline{M}^{ijkl}(z^ix^jx^kz^l+z^iy^jy^kz^l) &=& \overline{M}^{ilkj}z^iz^l(x^kx^j+y^ky^j)= \overline{M}^{ijkl}z^iz^j(\delta^{kl}-z^kz^l)\notag
\end{eqnarray}
The same thing can be done for the six terms containing $\textbf{z, x}$ and $\textbf{y}$. With a similar argument one can verify that:
\begin{eqnarray}
\overline{M}^{ijkl}(x^ix^jx^kx^l+x^iy^jx^ky^l) &=& \overline{M}^{ijkl}x^ix^j(\delta^{kl}-z^kz^l)\notag
\end{eqnarray}
The objective function can therefore be written as:
\begin{eqnarray}
\mathfrak{M}^{ijkl}_{\Omega}\overline{M}^{ijkl} &=& \sum_{q=1}^4\left[\alpha_q^4z^iz^jz^kz^l+3\alpha_q^2\beta_q^2z^iz^j(\delta^{kl}-z^kz^l)+\frac{3\beta_q^4}{8}(\delta^{ij}-z^iz^j)(\delta^{kl}-z^kz^l)\right]\overline{M}^{ijkl}=\omega z^iz^jz^kz^l\overline{M}^{ijkl}\notag
\end{eqnarray}

\bibliographystyle{apsrev}
\bibliography{References}

\end{document}